\documentclass[
notitlepage,
floatfix,
aps,
prl,
reprint,
twocolumn,
superscriptaddress,
longbibliography
]{revtex4-2}
\usepackage[utf8]{inputenc}
\usepackage[normalem]{ulem}

\usepackage[caption=false]{subfig}
\usepackage{graphicx}
\usepackage{epstopdf}
\usepackage{array}
\usepackage{verbatim}

\usepackage{amsmath,amssymb,amscd,mathtools}
\usepackage{amsthm}
\usepackage{tabularx}
\usepackage{stmaryrd}
\usepackage{enumerate}
\usepackage{wasysym}
\usepackage{microtype}
\usepackage{hyperref}
\usepackage[dvipsnames]{xcolor}
\hypersetup{
	bookmarksnumbered=true, 
	unicode=false, 
	pdfstartview={FitH}, 
	pdftitle={}, 
	pdfauthor={}, 
	pdfsubject={}, 
	pdfcreator={}, 
	pdfproducer={}, 
	pdfkeywords={}, 
	pdfnewwindow=true, 
	breaklinks=true,
	colorlinks=true, 
	linkcolor=NavyBlue, 
	citecolor=NavyBlue, 
	filecolor=NavyBlue, 
	urlcolor=NavyBlue 
}
\usepackage{tikz}
\usepackage[lmargin=.7in,rmargin=.7in,tmargin=.7in,bmargin=1in]{geometry}

\usepackage{newtxtext,newtxmath}
\makeatletter
\def\maketitle{
	\@author@finish
	\title@column\titleblock@produce
	\suppressfloats[t]}
\makeatother

\usepackage{booktabs}

\theoremstyle{plain}
\newtheorem{thm}{Theorem}
\newtheorem{cor}[thm]{Corollary}
\newtheorem{lem}[thm]{Lemma}
\newtheorem{pro}[thm]{Proposition}

\theoremstyle{definition}
\newtheorem{defn}[thm]{Definition}

\newcommand{\x}[1]{}
\newcommand{\eq}[1]{(\hyperref[eq:#1]{\ref*{eq:#1}})}

\renewcommand{\sec}[1]{\hyperref[sec:#1]{Section~\ref*{sec:#1}}}
\newcommand{\thrm}[1]{\hyperref[thrm:#1]{Theorem~\ref*{thrm:#1}}}
\newcommand{\lemm}[1]{\hyperref[lemm:#1]{Lemma~\ref*{lemm:#1}}}
\newcommand{\prop}[1]{\hyperref[prop:#1]{Proposition~\ref*{prop:#1}}}
\newcommand{\corr}[1]{\hyperref[corr:#1]{Corollary~\ref*{corr:#1}}}
\newcommand{\fig}[1]{\hyperref[fig:#1]{~\ref*{fig:#1}}}
\newcommand{\deff}[1]{\hyperref[deff:#1]{~\ref*{deff:#1}}}

\newcommand{\mN}{\mathcal{N}}

\newcommand{\mE}{\mathcal{E}}

\newcommand{\Upsirec}[1]{\hat{U}^{[#1]}}
\newcommand{\Upsirecch}[1]{\hat{\mathbf{U}}^{[#1]}}
\newcommand{\Upsirecapp}[1]{\hat{\mathcal{U}}^{[#1]}}

\newcommand{\mH}{\mathcal{H}}

\newcommand{\hmN}{\hat{\mathcal{N}}}
\newcommand{\hN}{\hat{N}}
\newcommand{\mO}{\mathcal{O}}
\newcommand{\mB}{\mathcal{B}}

\newcommand{\mS}{\mathcal{S}}

\newcommand{\mbN}{\mathbb{N}}

\newcommand{\mbR}{\mathbb{R}}

\newcommand{\psivec}[1]{\ket{\psi_{#1}}}

\newcommand{\Eop}[2]{\hat{E}^{[#2]}_{#1}}
\newcommand{\Eopch}[2]{\hat{\mathbf{E}}^{[#2]}_{#1}}
\newcommand{\hE}[2]{\hat{\mE}^{[#2]}_{#1}}
\newcommand{\Gammaop}[2]{\hat{\Gamma}^{[#2]}_{#1}}
\newcommand{\Gammaopch}[2]{\hat{\mathbf{\Gamma}}^{[#2]}_{#1}}
\newcommand{\Gammaerr}[2]{\hat{\bbGamma}^{[#2]}_{#1}}

\DeclareMathOperator{\Tr}{Tr}
\DeclareMathOperator{\id}{id}

\DeclareMathAlphabet{\mathmybb}{U}{bbold}{m}{n}
\newcommand{\1}{\mathmybb{1}}
\newcommand{\bbGamma}{\mathmybb{\Gamma}}
\DeclareMathOperator{\arcosh}{arcosh}
\DeclareMathOperator{\sech}{sech}

\DeclareMathOperator{\arcsech}{arcsech}

\newcommand{\ket}[1]{|{#1}\rangle}
\newcommand{\bra}[1]{\langle{#1}|}
\newcommand{\braket}[2]{\langle{#1}|{#2}\rangle}
\newcommand{\ketbra}[2]{|{#1}\rangle\!\langle{#2}|}

\newcommand{\no}{\nonumber}
\newcommand{\ba}{\begin{eqnarray}}
	\newcommand{\ea}{\end{eqnarray}}
\newcommand{\bann}{\begin{eqnarray*}}
	\newcommand{\eann}{\end{eqnarray*}}
\newcommand{\bal}{\begin{align}\begin{aligned}}
		\newcommand{\eal}{\end{aligned}\end{align}}
\newcommand{\dm}[1]{\ketbra{#1}{#1}}

\newcolumntype{L}[1]{>{\raggedright}p{#1}}
\newcolumntype{C}[1]{>{\centering}p{#1}}
\newcolumntype{R}[1]{>{\raggedleft}p{#1}}
\newcolumntype{D}{>{\centering\arraybackslash}X}

\newcommand{\B}{{\rm L}}

\newcommand{\tp}{\tau^{\perp}}

\DeclarePairedDelimiter\abs{\lvert}{\rvert}%

\newcommand{\lset}{\left\{ }
\newcommand{\rset}{\right\}}

\usepackage{soul}
\usepackage{lipsum}
\usepackage{tcolorbox}
\tcbuselibrary{breakable}
\newtcolorbox{mybox}[1]{colback=red!5!white,colframe=red!65!black,fonttitle=\bfseries,title=#1,boxrule=0.7pt}
\usepackage{makecell}

\newcommand{\spec}{\mathrm{spec}}
\setcounter{secnumdepth}{3}
\newcommand{\suppl}{Supplemental Materials}

\begin{document}
\title{Quantum Dynamic Programming}
\author{Jeongrak Son}
\affiliation{School of Physical and Mathematical Sciences, Nanyang Technological University, 637371, Singapore}
\author{Marek Gluza}
\email{marekludwik.gluza@ntu.edu.sg}
\affiliation{School of Physical and Mathematical Sciences, Nanyang Technological University, 637371, Singapore}
\author{Ryuji Takagi}
\affiliation{Department of Basic Science, The University of Tokyo, Tokyo 153-8902, Japan}
\affiliation{School of Physical and Mathematical Sciences, Nanyang Technological University, 637371, Singapore}
\author{Nelly H. Y. Ng}
\email{nelly.ng@ntu.edu.sg}
\affiliation{School of Physical and Mathematical Sciences, Nanyang Technological University, 637371, Singapore}
\date{\today}
	

\begin{abstract} 
We introduce a quantum extension of dynamic programming, a fundamental computational method that efficiently solves recursive problems using memory. 
Our innovation lies in showing how to coherently generate recursion step unitaries by using memorized intermediate quantum states. 
Quantum dynamic programming achieves an exponential reduction in circuit depth for a broad class of fixed-point quantum recursions, though this comes at the cost of increased circuit width. 
Interestingly, the trade-off becomes more favourable when the initial state is pure.
By hybridizing our approach with a conventional memoryless one, we can flexibly balance circuit depth and width to optimize performance on quantum devices with fixed hardware constraints.
Finally, we showcase applications of quantum dynamic programming to several quantum recursions, including a variant of Grover's search, quantum imaginary-time evolution, and a new protocol for obliviously preparing a quantum state in its Schmidt basis.
\end{abstract}

\maketitle

\emph{Introduction.}---Storing intermediate results in memory can drastically reduce computation runtime. 
Computing the Fibonacci sequence provides a perfect example. 
Direct calculation of the $N$th element using the recursive definition $F(n) = F(n-1)+F(n-2) $, starting with $F(0) = 0$ and $F(1) = 1$ and without storing intermediate results, requires calculating $F(N-2)$ twice---once for $F(N)$ and again for $F(N-1)$. 
In turn, computing $F(N-2)$ requires two computations of $F(N-4)$, and so on. 
This repetition causes the computation time to grow exponentially with $N$. 
On the other hand, if we store intermediate values of $F(n)$ in memory and retrieve them whenever necessary, $F(N)$ can be computed in time linear to $N$, achieving an exponential speedup.
This technique, called memoization~\cite{Michie1968memoization}, is a form of dynamic programming~\cite{Bellman1952_DP, Bellman_dynamic} that utilizes a small amount of memory to yield vastly shorter runtime by avoiding re-computations.

We introduce \emph{quantum dynamic programming (QDP)} as a technique to utilize memoization in quantum computation.
The quantum analogue of computing the Fibonacci sequence would be solving quantum recursions, which we define as
\begin{align}\label{eq:qrecursion}
	\psivec{n}\mapsto\psivec{n+1} = \Upsirec{\psi_{n}}\psivec{n}\ ,
\end{align}
a sequence of quantum states $\{\psivec{n}\}_{n}$ related by unitary operators $\Upsirec{\psi_{n}}$ that explicitly depend on the past states $\psivec{n}$.
Such recursions appear in important quantum algorithms, including nested fixed-point search~\cite{Grover2005FP, Yoder2014Grover} and double-bracket quantum algorithms~\cite{Gluza2024DBI, Robbiati2024DBQA, Xiaoyue2024DBQA, QITEDBF}.

There are various approaches to solving Eq.~\eqref{eq:qrecursion}, yet a uniquely quantum challenge underlies them all.
Unlike classical computing, reading out a quantum state $\ket{\psi}$ requires many copies, and this cannot be circumvented~\cite{Gisin1997Cloning} as dictated by the no-cloning theorem~\cite{Wootters1982Nocloning}.
If one compiles the recursion step $\Upsirec{\psi_n}$ by learning each intermediate state $\psivec{n}$ through, e.g. tomography~\cite{Haah2016_tomography}, each step must be repeated many times before learning how to compile $\Upsirec{\psi_n}$. An alternative approach, termed \emph{unfolding}, is analogous to classical computation without memoization.
It makes multiple queries to an operation preparing $\psivec{n}$ in order to compile $\Upsirec{\psi_{n}}$.
However, since $\psivec{n}$ is not learnt, operations preparing it must be compiled using queries to the initial step $\Upsirec{\psi_{0}}$, which is assumed to be given.
This leads to the circuit depth exponential in $N$ for preparing $\psivec{N}$. 
We expound on this scaling later in this Letter with an example.

Our approach to QDP belongs to a lineage of studies on circuits instructed by quantum states~\cite{Lloyd2014quantum, Marvian2016_emulator, Pichler2016DME, Kimmel2017DME_OP, Kjaergaard2022DME, Wei2023hermpreserving, Patel2023WML1, Patel2023WML2, Rodriguezgrasa2023cloningDME, Go2024DME, Schoute2024QProgrammableReflections}, where desired operations are implemented by injecting quantum instruction states that encode the operation, rather than by compiling circuits based on classical information.
This paradigm offers several advantages.
First, the circuit can be implemented obliviously to the quantum instructions, i.e. without learning them, while requiring significantly fewer samples compared to tomography-based learning-and-compiling methods~\cite{Kimmel2017DME_OP, Go2024DME}.
Furthermore, the circuits instructed by quantum states are universal~\cite{Kimmel2017DME_OP}, enabling all queries to $\Upsirec{\psi_{0}}$ (required in the unfolding approach) to be replaced by the injection of $\psivec{0}$.

The main breakthrough of our work is the following observation: instead of unfolding the recursion (i.e. compiling $\Upsirec{\psi_{n}}$ with exponentially many queries to $\Upsirec{\psi_{0}}$), we can dynamically evolve memory states from $\psivec{0}$ to $\psivec{n}$ in parallel and directly implement $\Upsirec{\psi_{n}}$ by injecting these memory states. 
In essence, QDP `folds up' the unfolded recursion step $\Upsirec{\psi_{n}}$ into the memory state as it evolves to $\psivec{n}$.
By doing so, we achieve an exponential circuit depth reduction compared to the unfolding approach, as formalized in Theorem~\ref{thm:qdp_pure_nontech}, en par with the classic example of computing Fibonacci numbers.

However, the same no-cloning challenge prevails QDP.
Many copies of each intermediate state $\psivec{n}$ are required, necessitating circuits with greater width.
The crucial distinction from precomputation approaches~\cite{Marvian2016_emulator, Huggins2023_precomputation} emerges here: our memory states are not given a priori, but we prepare them as a result of previous recursion steps.  
This results in a trade-off, converting circuits with exponential depth into those with exponential width. 
Nonetheless, QDP remains the strategically correct choice.
Any protocol must conclude within the coherence time of the quantum processor, and QDP precisely offers an explicit tool that circumvents such hard limitations---providing a concrete strategy that radically reconfigures circuit structures to leverage quantum memoization. 

In this manuscript, we systematically introduce QDP for generic quantum recursions and establish its feasibility.
We exemplify the utility of QDP on known algorithms, e.g. recursive Grover's search~\cite{Yoder2014Grover} and quantum imaginary-time evolution~\cite{QITEDBF}. Lastly, we propose a natively dynamic algorithm for transforming a state into its Schmidt basis without learning the state itself.
With this, we establish that QDP is not only a strategic conceptual framework but also a highly applicable one.

\emph{Memory-calls in quantum recursions.}---Quantum recursion unitaries can be regarded as a mapping from a quantum state $\ket{\psi}$ to a unitary operator $\Upsirec{\psi}$.
In general, $\Upsirec{\psi} = e^{if(\dm{\psi})}$, where $f$ is a function from Hermitian operators to Hermitian operators.
Then, $\Upsirec{\psi}$ can be approximated by 
\begin{align}\label{eq:dynamicunitary0}
	\Upsirec{\{\hmN\},\psi}= \hat{V}_{L} e^{i\hmN_{L}(\dm{\psi})} \hat{V}_{L-1}  \ldots \hat{V}_{1} e^{i\hmN_{1}(\dm{\psi})} \hat{V}_{0}\ ,
\end{align}
where $\{\hmN\}$ is a collection of Hermitian-preserving (linear or polynomial) maps $\hmN_{i}$, while $\hat{V}_{i}$ are static unitaries independent of the instruction $\ket{\psi}$.
We say that a quantum recursion unitary written as Eq.~\eqref{eq:dynamicunitary0} contains $L$ \emph{memory-calls} of the form $e^{i\hmN(\dm{\psi})}$.
The most prominent example is found in Grover's algorithm~\cite{Grover96, Grover2005FP}, where alternating reflections
\begin{align}\label{eq:GammaLdef1}
	\Upsirec{\psi}_{L} = \displaystyle\prod_{i=1}^L e^{-i\alpha_{i}\dm{\psi}}e^{-i\beta_{i}\dm{\tau}}\ ,
\end{align}
with respect to the initial state $\ket{\psi}$ and the target state of the search $\ket{\tau}$ are applied to $\ket{\psi}$.
Eq.~\eqref{eq:GammaLdef1} contains $L$ memory-calls, each $e^{i\hmN_{i}(\psi)}$ defined by $\hmN_{i}(\psi) = -\alpha_{i}\dm{\psi}$ and a set of predetermined angles $ \{\alpha_{i},\beta_{i}\}_{i=1}^{L} $. 
Another example is the double-bracket iteration $\Upsirec{\psi} = e^{s[\hat D,\dm{\psi}]}$ with a fixed operator $\hat D$, which facilitates diagonalization~\cite{Gluza2024DBI}.
This corresponds to choosing $L = 1$, $V_{0} = V_{1} = \1$, and $\hmN_{1}(\dm{\psi}) = -is[\hat D,\dm{\psi}]$.

\emph{Unfolding quantum recursions.}--- We start by introducing the standard approach in implementing quantum recursions.
Consider the Grover unitary Eq.~\eqref{eq:GammaLdef1}, but applied recursively $ \psivec{n+1} = \Upsirec{\psi_n}_{L} \psivec{n}$ starting from a root state $\psivec{0}$.
By assumption, $e^{-i\alpha_{i}\dm{\psi_{0}}}$ and $e^{-i\beta_{i}\dm{\tau}}$ are readily available.
However, $\Upsirec{\psi_n}_{L}$ involves memory-calls $e^{-i\alpha_{i}\dm{\psi_{n}}}$, which are not directly accessible.
Using the relation $\dm{\psi_{n}}=\Upsirec{\psi_{n-1}}_{L} \dm{\psi_{n-1}}\bigl(\Upsirec{\psi_{n-1}}_{L}\bigr)^\dagger$, we observe that
\begin{align}\label{eq:reflectorcovariance}
	e^{-i\alpha_{i}\dm{\psi_{n}}} = \Upsirec{\psi_{n-1}}_{L} e^{-i\alpha_{i}\dm{\psi_{n-1}}} \bigl(\Upsirec{\psi_n-1}_{L}\bigr)^\dagger\ .
\end{align}
We name this property \emph{covariance} and use it to `unfold' the memory-call $e^{-i\alpha_{i}\dm{\psi_{n}}}$ into $2L+1$ memory-calls to $\psivec{n-1}$.
This unfolding continues until each memory-call to $\psivec{n}$ in $\Upsirec{\psi_n}_{L}$ is implemented with $(2L+1)^{n}$ calls to $\psivec{0}$.
The nested fixed-point Grover search~\cite{Yoder2014Grover} exactly follows this method.

To our knowledge, all existing quantum recursions are implemented through unfolding.
However, there are two severe drawbacks.
First, memory-calls must be covariant; otherwise they need to be approximated by covariant ones, via group commutators~\cite{dawson2006solovay} as in Refs.~\cite{Gluza2024DBI, Robbiati2024DBQA, Xiaoyue2024DBQA, QITEDBF} or a general scheme described in Sec.~\ref{app:LAI_blackbox} of~\cite{suppl}.
Moreover, unfolding yields exponentially deep circuits, with the final recursion step $\Upsirec{\psi_{N-1}}$ dominating the depth, as it requires $L(2L+1)^{N-1}$ calls to $\psivec{0}$.

\emph{Memory-usage queries.}---To resolve the exponential depth of unfolded circuits, QDP invokes dynamically evolving memory states to directly instruct memory-calls. 
The goal is to apply the unitary operation $e^{i\hmN(\dm{\psi})}$ to a working state $\ket{\phi}$: 
\begin{align}\label{eq:executing_singlememorycall}
    \ket{\phi}\mapsto e^{i\hmN(\dm{\psi})}\ket{\phi}\ .
\end{align}
A crucial observation is that Eq.~\eqref{eq:executing_singlememorycall} can be approximately implemented using multiple copies of the (unknown) state $\ket{\psi}$~\cite{Kimmel2017DME_OP,Wei2023hermpreserving}.
Consider the channel 
\begin{align}\label{eq:HME_def_maintext}
	\hE{s}{\hmN,\psi}(\dm{\phi}) = \Tr_{1}\left[e^{-i\hN s}\left(\dm{\psi}\otimes\dm{\phi}\right)e^{i\hN s}\right]\ ,
\end{align}
where the bipartite unitary $e^{-i\hN s}$ is oblivious to both the memory and working register.
Specifically, $\hN$ depends solely on the Hermitian-preserving map $\hmN$ that characterizes the memory-call.
By choosing $\hN = \Lambda^{\transp_{1}}$, the partial transpose of the Choi matrix $\Lambda$ of $\hmN$ (see \suppl~\cite{suppl} Section~\ref{app:HME} for an explicit expression), Eq.~\eqref{eq:HME_def_maintext} approximates the memory-call $e^{is\hmN(\dm{\psi})}$ with error $\mO(s^{2})$.
For instance, when $\hmN = \id$, which corresponds to density matrix exponentiation~\cite{Lloyd2014quantum}, $\hN = \mathrm{SWAP}$ is used in Eq.~\eqref{eq:HME_def_maintext}.
We call this channel a single \emph{memory-usage query}, as it consumes one copy of $\ket{\psi}$.
Each memory-usage query invokes an error scaling quadratically to the duration $s$; by setting $s = 1/M$ and repeating the same memory-usage queries Eq.~\eqref{eq:HME_def_maintext} $M$ times, we obtain the channel $(\hE{1/M}{\hmN,\psi})^{M}$ that approximates Eq.~\eqref{eq:executing_singlememorycall} with total error $\mO(s^{2}/M)$.
Therefore, the error in the memory-call approximation can be kept arbitrarily small by increasing $M$.

\emph{Exponential depth reduction with QDP.}---We define QDP as the process of making memory-usage queries on copies of $\psivec{n}$ to approximate a recursion step towards $\psivec{n+1}$.
Specifically, we initialize a quantum program with $\sigma_0 = \dm{\psi_{0}}$ and define the QDP iteration as
\begin{align}\label{eq:iteratedmemoryusage}
	\sigma_{n+1} = 	\hat{V}_{1} \left(\hE{1/M}{\hmN,\sigma_{n}}\right)^{M}\left(\hat{V}_{0}\sigma_{n}\hat{V}_{0}^{\dagger}\right)\hat{V}_{1}^{\dagger}\ .
\end{align}
Eq.~\eqref{eq:iteratedmemoryusage} approximates a single memory-call recursion (Eq.~\eqref{eq:dynamicunitary0} with $L = 1$) by replacing the memory-call with $M$ memory-usage queries.
In general, memory states $\sigma_{n\neq0}$ are not pure because the channel $\hE{1/M}{\hmN,\sigma_{n}}$ is not exactly unitary. 
However, we show in Theorem~\ref{thm:qdp_pure_nontech} that $\sigma_{N}$ can be made arbitrarily close to the desired state $\psivec{N}$. 
If recursions involve multiple memory-calls, QDP generalizes by replacing all memory-calls with $M$ total memory-usage queries to the memory state $\sigma_{n}$.

To prepare one copy of $\sigma_{1}$, we make $M$ memory-usage queries with the memory state $\sigma_{0}$.
This requires $(M+1)$ root state copies ($M$ in the memory register and one in the working register).
Likewise, preparing $\sigma_{2}$ requires $(M+1)^{2}$ copies of $\sigma_{0}$; thus, the iteration of Eq.~\eqref{eq:iteratedmemoryusage} consumes $(M+1)^{n}$ copies of the root state for $\sigma_{n}$.
Meanwhile, the circuit depth remains linear: 
multiple instruction states can be prepared in parallel, resulting in a maximum depth of $nM$ memory-usage queries.
An intuitive explanation can be drawn by comparing QDP to unfolding, where a memory-call to $\psivec{n}$ is executed by making $(2L+1)^{n}$ memory-calls to the root state.
In contrast, QDP folds all root state calls into the memory state $\sigma_{n}$, allowing a memory-call to be implemented with a fixed-depth circuit.

\emph{The accuracy of QDP.}---Seeking an exact solution to quantum recursions is unrealistic, as circuit compilation always entails inaccuracies. 
Hence, it is necessary to ensure that each memory-call approximation in QDP is sufficiently accurate to prepare the final state $\sigma_{N}$ within the same error threshold as unfolding. 
Setting $M = \mO( 1/\epsilon)$ in Eq.~\eqref{eq:iteratedmemoryusage} ensures that the QDP prepared state $\sigma_{1}$ approximates the exact result $\psivec{1}$ within an error $\|\sigma_1-\dm{\psi_{1}}\|_1 \le \mO(\epsilon) $.
However, subsequent steps may amplify this error, as $\sigma_{1}\mapsto\sigma_{2}$ becomes instructed by $\sigma_1$, not $\psivec{1}$.
Indeed, the triangle inequality gives $\|\sigma_2-\dm{\psi_{2}}\|_1 \le \mO(M\epsilon) $; see Sec.~\ref{app:locally_accurate_implementations} of~\cite{suppl} for the full analysis. 

To address this, we identify sufficient conditions to prevent such destructive error propagation. 
It turns out that if the first few steps are sufficiently accurate, then later steps benefit from stabilization properties of typical quantum recursions.
In particular, recursions with fixed-points (e.g. via the Polyak-Łojasiewicz inequality in gradient descent iterations~\cite{karimi2016linear} or asymptotic stability in time-discrete dynamical systems~\cite{Moore1994DiscreteDBI}) typically exhibit exponential convergence. 
For quantum recursions, this implies $\|\psivec{N}-\psivec{\infty}\|_1 \le \alpha^N\|\psivec{0}-\psivec{\infty}\|_1$, for some $\alpha<1$.
However, this relies on three assumptions: \emph{i)} a stable fixed-point $\psivec{\infty}$ as $N\rightarrow \infty$, \emph{ii)} uniqueness of the fixed-point, and that \emph{iii)} the fixed-point is sufficiently strongly attracting. 
We term quantum recursions satisfying \emph{i,ii,iii)} (or their generalization to mixed state recursions as in Sec.~\ref{subsec:main_Thm_proof} of~\cite{suppl}) as exhibiting \emph{fast spectral convergence}.
Without \emph{iii)}, recursions become unstable, suggesting limited physical relevance, as practically achievable protocols must withstand small perturbations.
In such cases, quantum computation would require infinite precision and resources for a successful operation.

It is worth noting that unfolding is also subject to similar stability constraints: if $\Upsirec{\psi_{0}}$ is not compiled exactly, subsequent recursion unitaries $\Upsirec{\psi_{n}}$, which rely on exponentially many applications of $\Upsirec{\psi_{0}}$, become exponentially unstable. 
Thus, whether using QDP or unfolding, quantum recursions must exhibit fast spectral convergence; otherwise, the computational task is ill-conditioned. When this condition is met, QDP delivers an exponential depth reduction compared to unfolding:
\begin{thm}[Exponential circuit depth reduction]\label{thm:qdp_pure_nontech}
	Suppose a quantum recursion starting from $\psivec{0}$ satisfies fast spectral convergence. Then QDP combined with a mixedness reduction protocol yields a final state $\sigma_{N}$ satisfying
	\begin{align}
		\|\sigma_N - \dm{\psi_{N}}\|_1 \leq \epsilon,
	\end{align}
	using a circuit of depth $\mO(N\epsilon^{-1})$ and width $\epsilon^{-N}e^{\mO(N)}$.
	Here, $\psivec{N}$ is the exact solution to the recursion. 
\end{thm}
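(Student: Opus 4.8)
The plan is to control the trace-distance error $\delta_n := \|\sigma_n - \dm{\psi_n}\|_1$ between the QDP memory state and the ideal recursion state, and to show that the contraction hidden inside fast spectral convergence prevents it from blowing up. Let $R_\sigma$ denote the exact single-step recursion channel instructed by a state $\sigma$, so that the ideal orbit obeys $\dm{\psi_{n+1}} = R_{\dm{\psi_n}}(\dm{\psi_n})$ and the full exact map is $R(\rho) := R_\rho(\rho)$; let $\tilde R_\sigma$ be its QDP realization obtained by replacing the memory-call with $M$ memory-usage queries instructed by $\sigma$, so that $\sigma_{n+1} = \tilde R_{\sigma_n}(\sigma_n)$. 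The decisive choice is to split each step against the \emph{self-consistently instructed} exact map,
\begin{align}
\delta_{n+1} \le \underbrace{\|\tilde R_{\sigma_n}(\sigma_n) - R_{\sigma_n}(\sigma_n)\|_1}_{\text{local QDP error}} + \underbrace{\|R(\sigma_n) - R(\dm{\psi_n})\|_1}_{\text{propagated error}},
\end{align}
rather than comparing $\tilde R_{\sigma_n}$ to the ideally instructed $\tilde R_{\dm{\psi_n}}$, which is precisely the cruder route that produces the pessimistic $\mO(M\epsilon)$ quoted in the main text.

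First I would bound the local term. Because each memory-usage query \eqref{eq:HME_def_maintext} reproduces the duration-$s$ memory-call instructed by $\sigma_n$ up to $\mO(s^2)$, composing $M$ of them at $s=1/M$ reproduces the full memory-call $e^{i\hmN(\sigma_n)}$ with error $\mO(1/M)$ via a telescoping Trotter estimate; setting $M=\mO(\epsilon^{-1})$ makes this term $\mO(\epsilon)$ \emph{independently of} $\delta_n$, since both maps carry the same instruction $\sigma_n$ and the generator is uniformly bounded over density matrices. The non-unitarity of the queries leaves $\mO(\epsilon)$ residual mixedness in the output, and here I would invoke the mixedness reduction protocol: it expends a constant number of copies of the step output per iteration to distil a purer representative, thereby preserving the inductive hypothesis that $\sigma_n$ is nearly pure with small $\delta_n$.

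Next I would bound the propagated term using fast spectral convergence. Since the exact map $R$ possesses a unique, strongly attracting fixed-point $\dm{\psi_\infty}$, in the mixed-state formulation it is a contraction on a neighbourhood $\mB$ of that point, $\|R(\rho)-R(\rho')\|_1 \le \alpha\|\rho-\rho'\|_1$ with $\alpha<1$. Provided $\sigma_n,\dm{\psi_n}\in\mB$, the propagated term is at most $\alpha\delta_n$, so the two estimates combine into the affine contraction
\begin{align}
\delta_{n+1} \le \alpha\,\delta_n + \mO(\epsilon),
\end{align}
whose orbit is majorised by $\delta_n \le \alpha^n\delta_0 + \mO(\epsilon)/(1-\alpha)$. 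As $\delta_0=0$ and $\alpha<1$, this is $\mO(\epsilon)$ uniformly in $n$, and rescaling the implicit constant in $M$ delivers $\delta_N\le\epsilon$. \textbf{The main obstacle} is the circularity of this argument: the contraction estimate holds only inside $\mB$, yet it is what I use to certify that $\delta_n$ stays small. I would break the loop by induction---forcing the first (constant number of) steps to be accurate enough, at slightly enlarged $M$, that $\sigma_n$ enters $\mB$, after which the affine recursion is self-sustaining---while the ideal orbit's membership $\dm{\psi_n}\in\mB$ is guaranteed directly by assumptions \emph{i)}--\emph{iii)}. Establishing the mixed-state contraction and the uniform basin is where those hypotheses are genuinely consumed.

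Finally, the resource counts are pure bookkeeping. Each step applies $M=\mO(\epsilon^{-1})$ memory-usage queries in sequence on the working register, and every memory state feeding them is prepared in parallel, so along the critical path the depth telescopes to $\mO(NM)=\mO(N\epsilon^{-1})$. For the width, one copy of $\sigma_{n+1}$ costs $M$ copies of $\sigma_n$ in the memory register plus one in the working register, and the mixedness reduction multiplies this by a fixed per-step factor; iterating over $N$ steps yields $\big(\mathrm{const}\cdot(M+1)\big)^N = \epsilon^{-N}e^{\mO(N)}$ copies of the root state, as claimed.
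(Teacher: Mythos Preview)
Your argument has a genuine gap at the propagated-error step. Fast spectral convergence, as the paper defines it, says only that the distance \emph{to the fixed point} contracts along the exact recursion: $\|R(\rho)-\tau\|_1 \le h(\|\rho-\tau\|_1)$ with $h'<r<1$, and only for $\rho$ \emph{isospectral to} $\psi_0$. This does not imply the metric contraction $\|R(\rho)-R(\rho')\|_1 \le \alpha\|\rho-\rho'\|_1$ that you invoke; an attracting fixed point need not make the map Lipschitz with constant below one in the ambient trace norm. Worse, your $\sigma_n$ is mixed (the memory-usage queries are non-unitary), so it lies off the isospectral manifold and the hypothesis does not even apply to $R(\sigma_n)$. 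Your own acknowledgement of ``circularity'' is pointing at a real obstruction, not a technicality to be patched by induction on membership in $\mB$.

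The paper's proof avoids both issues by never comparing $\sigma_n$ to $\psi_n$ directly. Instead it tracks the distance of an auxiliary \emph{pure} state $\tilde\psi_n$ (the dominant eigenvector of the QDP state after mixedness reduction) to the fixed point $\tau$, and bounds $\delta_{\tilde\psi_{n+1}} \le h(\delta_{\tilde\psi_n}) + \mO(\varepsilon)+\mO(\eta)$, where $\varepsilon$ is the residual mixedness and $\eta$ the local QDP error. The key algebraic step is then $h(\delta+\epsilon)\le h(\delta)+r\epsilon$ (from $h'<r$), which lets the additive errors sum geometrically without ever needing a two-point Lipschitz bound on $R$. The mixedness reduction is therefore not a cleanup step as you treat it, but structurally necessary: it is what produces a state $\tilde\psi_n$ isospectral to $\psi_0$ so that the fast-spectral-convergence hypothesis can be invoked at all. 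Your resource-counting paragraph is fine, but the error-propagation core needs to be rebuilt along these lines.
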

The technical version of this theorem and its proof is given as Theorem~\ref{thm:qdp_pure} in~\cite{suppl}.
We prove this theorem by addressing two different types of error. In particular, we note that unitary errors are efficiently suppressed by fast spectral convergence, while non-unitary errors destabilize the convergence and need to be mitigated explicitly. 
We employ an additional protocol~\cite{Cirac1999_purification, Childs2024purification} to reduce
such errors, and show in~\cite{suppl} Sec.~\ref{subsec:QDP_pure} that this protocol incurs only a slightly faster exponential growth in circuit width.
Furthermore, a similar theorem can be developed for mixed state extensions, which we display in End Matters.

\emph{Implications of the depth-width trade-off: distributed quantum computing and hybrid strategy.}---Theorem~\ref{thm:qdp_pure_nontech} establishes the desired circuit depth-width trade-off.
At its core, this trade-off is valuable because it exponentially reduces computational time, which can turn a practically impossible task into a feasible one.
In addition, the capacity to realize a shorter depth circuit comes with potential advantages when combined with quantum error mitigation. In particular, it was shown that error mitigation generally requires exponential resource overhead with circuit depth~\cite{Takagi2022fundamental,Takagi2023universal,Tsubouchi2023universal}. Although there are observations indicating that error mitigation could require exponential cost also with circuit width~\cite{Temme2017error,Quek2024exponentially}, a rigorous characterization applicable to general classes of circuits and error mitigation methods has not been established. In this sense, there is still room for some error mitigation techniques to run efficiently in large-width short-depth circuits of practical interest.
Beyond this immediate advantage, we focus on how QDP facilitates implementation. 

One simple measure is the quantum circuit size, defined as the product of the maximum circuit depth and width~\cite{Yoder2016universal,Takagi2017error}. 
We illustrate how this measure benefits from the trade-off.
Suppose that in each recursion step, $L$ non-covariant memory-calls are made, each approximated by $k$ \emph{covariant} ones. 
The unfolding circuit size therefore scales as $(2kL+1)^{N}$ for $N$ recursion steps. 
Meanwhile, memory-usage queries directly implement even non-covariant memory-calls, but require $k'$ memory-usage queries to manage implementation error, i.e. $M = k'L$ memory-usage queries per recursion step. 
In this scenario, the QDP circuit width scales as $(M+1)^{N} = (k'L+1)^{N}$, which dominates the total circuit size for large $N$.
Whenever $L$ is large and $k'<2k$, QDP approximately achieves a $(2k/k')^{N}$-fold improvement in circuit size compared to unfolding.
For instance, in the most error-tolerant case of $k = k' = 1$, QDP significantly reduces circuit size.

Beyond circuit size, QDP offers additional practical advantages: its local modularity is particularly well-suited for distributed quantum computing~\cite{Wehner2018_QInternet, Cacciapuoti2019Distributed, Davarzani2020Distributed}. 
By localizing and decoupling circuits~\cite{Wang2024Decoupling}, QDP allows copies of $\psivec{n}$ to be prepared independently before being injected into recursion steps $\Upsirec{\psi_{n}}$.
As the recursion progresses, the circuit width decreases exponentially as most copies are consumed as memory and traced out during memory-usage queries.
This enables greater flexibility in parallelization, since processors with shorter coherence times can be allocated for preparing $\psivec{n}$ with smaller $n$, optimizing the use of available quantum devices.

The flexibility of QDP is reinforced by a hybrid strategy.
Realistically, quantum devices can neither execute exponentially many sequential gates nor operate on exponentially many qubits simultaneously. 
A practical approach is to initiate QDP after several rounds of unfolding, distributing the exponential factor between circuit depth and width to prevent either from becoming prohibitive.
For example, one could begin with $N_{1}$ unfolding steps, using a circuit of depth $e^{\mO(N_{1})}$, nearing the device’s depth limit. 
These unfolding steps are performed in parallel, maximizing circuit width to produce $M$ copies of intermediate states $\psivec{N_{1}}$.
For many quantum recursions, this starting strategy also conveniently steers the system into a regime where the recursion starts to exhibit fast spectral convergence. 
This allows  QDP to subsequently take over, executing an additional $ N_{2} $ recursions to attain the final state $ \sigma_{N_{1}+N_{2}} $. 
This hybrid strategy fully utilizes the device capacity, which is otherwise limited to producing either $ \psivec{N_{1}} $ without QDP or $ \sigma_{N_{2}} $ without unfolding.

\emph{Example 1: Grover search.}---The nested fixed-point Grover search~\cite{Yoder2014Grover} has a recursion unitary defined in Eq.~\eqref{eq:GammaLdef1}, where $N$ recursion steps of this algorithm prepare $\psivec{N}$ such that
\begin{align}\label{eq:Grover_performance}
	\frac{1}{2}\lVert \dm{\psi_{N}} - \dm{\tau} \rVert_{1} \simeq e^{-\lvert \braket{\psi_{0}}{\tau}\rvert (2L+1)^{N}}\ .
\end{align}
If the recursion is implemented through unfolding, the total circuit depth scales exponentially as $ (2L+1)^{N}$.
This matches the performance of the optimal Grover search~\cite{Grover96, Brassard2002ampamp} with the same circuit depth; refer to~\cite{suppl} Sec.~\ref{appendix:nested} for a self-contained discussion. In contrast, the QDP implementation reduces the circuit depth to $\mO(N)$, implying an exponential runtime speedup, thanks to Eq.~\eqref{eq:Grover_performance} exhibiting fast spectral convergence.
This speedup comes with the requirement of exponential copies of the initial state $\psivec{0}$.
Unfortunately, it is known that~\cite{Kimmel2017DME_OP} there is no quantum advantage in sample-based Grover search algorithms, which include the QDP implementation. 
Nevertheless, this example serves as a valuable demonstration of QDP’s applicability for a historically significant quantum algorithm. 

\emph{Example 2: Quantum imaginary-time evolution.}---We consider quantum imaginary-time evolution in the sample-based Hamiltonian simulation scenario~\cite{Kimmel2017DME_OP}, which aims at achieving the task of ground state preparation for quantum many-body systems. 
Suppose that copies of $\rho\propto \hat{H}$ are given, where $\hat{H}$ is the Hamiltonian with a zero ground state energy. 
We also prepare copies of some pure state $\psivec{0}$ having non-zero overlap with the ground state. 
Using them, recursions of the form
\begin{align}
	\psivec{n} \mapsto \psivec{n+1} = e^{s[\dm{\psi_{n}},\hat{H}]}\psivec{n},
\end{align}
with some duration $s$ can be implemented oblivious to both $\hat{H}$ and $\psivec{n}$; see~\cite{suppl}~Secs.~\ref{app:DBI_QDP} and~\ref{app:QITE_detail}.
This unitary has been shown~\cite{QITEDBF} to be a discretized version of the quantum imaginary-time evolution~\cite{Wick1954QITE} and yield a sequence of states $\{\psivec{n}\}$ whose infidelity to the ground state decays exponentially as $n$ grows, i.e. with fast spectral convergence. An alternative QDP implementation of the same task with the usual Hamiltonian simulation, instead of access to copies of $\rho\propto\hat{H}$, is studied in Ref.~\cite{QITEDBF}.

\emph{Example 3: Oblivious Schmidt decomposition.}---For pure bipartite states, the Schmidt decomposition~\cite{Schmidt1907, Ekert1995Schmidt} fully characterizes their entanglement.
Thus we may envision a protocol that unitarily transforms any unknown $\psivec{0} \in \mathcal{H}_{A}\otimes \mathcal{H}_{B}$ into a form where its Schmidt basis aligns with the computational basis: $\hat{V}_{A} \otimes \hat{V}_{B}\psivec{0} = \sum_{k=1}^{D} \sqrt{\lambda_{k}} \ket{k}\otimes \ket{k}$ with $\{\lambda_{k}\}_{k}$ denoting the Schmidt spectrum.
Traditionally, prior knowledge of $\psivec{0}$ is required to construct $\hat{V}_{A}$ and $\hat{V}_{B}$. 
However, QDP enables this task while completely circumventing such a need. 
Given that this is a novel algorithm, we outline the key ideas here, and provide a more detailed exposition in~\cite{suppl} Sec.~\ref{app:OSD_detail}.  
Let $\hat{D}$ be a non-degenerate diagonal operator on subsystem $A$.
The recursion unitary is defined as
\begin{align}\label{eq:stateDBI_main}
	\Upsirec{\hmN,\psi_{n}} = e^{s\left[\hat{D},\Tr_B[\dm{\psi_n}]\right]}\otimes \1_B\ .
 \end{align}
This is a single memory-call type recursion with $\hmN(\rho_{AB})=-is[\hat{D},\Tr_{B}[\rho_{AB}]]\otimes\1_{B}$ that is Hermitian-preserving; hence, with copies of $\psivec{n}$, the recursion step $\psivec{n} \mapsto \psivec{n+1}= \Upsirec{\psi_{n}}\psivec{n}$ can be approximated.
When $s$ is small, $\{\psivec{n}\}$ exponentially converges to $\hat{V}_{A}\otimes\1_{B}\psivec{0}$~\cite{HelmkeMoore1994Book, Smith_Thesis} allowing Theorem~\ref{thm:qdp_pure_nontech} to apply.
A similar procedure implements $\hat{V}_{B}$, completing the oblivious Schmidt decomposition.

Let us discuss the implications of oblivious Schmidt decomposition in the context of quantum information processing.
The replica method~\cite{Horodecki2002method} extracts classical information $\lambda_k$ but requires exponentially many swap operations.
Similarly, we expect that oblivious Schmidt decomposition may also require a long runtime to converge.
However, it not only enables sampling from the Schmidt spectrum but also provides the bipartite quantum state $\sum_k \sqrt{\lambda_k}\ket{k}\otimes\ket{k}$ \emph{coherently} in the computational basis.
This is useful, e.g. for entanglement distillation of an \emph{unknown} state $\psi$, contrasting with standard settings~\cite{Bennett1996concentrating} that require the knowledge about the initial state for compiling local operations.
While oblivious Schmidt decomposition via QDP may not attain the optimal asymptotic rates as derived from Ref.~\cite{Matsumoto2007universal}, it provides a constructive approach to oblivious entanglement distillation with explicit circuit implementations.

\emph{Conclusion and outlook.}---%
We demonstrated that QDP enables the efficient implementation of a broad class of quantum recursions involving memory-calls.
For well-behaved recursions, QDP yields accurate results with polynomial depth---an \emph{exponential} speed-up over the previously proposed zero-memory implementations based on unfolding~\cite{Yoder2014Grover,Gluza2024DBI}.
Additionally, QDP enables hybrid strategies of balancing circuit depth and width, making quantum recursions more feasible on real devices. By structuring quantum recursions into circuits with parallel rather than sequential operations, QDP renders them suitable as a central use-case for distributed quantum computing~\cite{Wehner2018_QInternet, Cacciapuoti2019Distributed, Davarzani2020Distributed}.

The general QDP proposal assumes the preparation of many independent copies of the same quantum state, analogous to sample-based algorithms~\cite{Lloyd2014quantum, Pichler2016DME, Kimmel2017DME_OP, Low2019qubitization, Kjaergaard2022DME,  Go2024DME, Schoute2024QProgrammableReflections}, where copies of the quantum state replace oracle access to unitary operators. 
For QDP applications involving only a marginal amount of quantum coherence, approximate cloning heuristics~\cite{Rodriguezgrasa2023cloningDME} could potentially provide exponential speed-ups without inflating the circuit width.
Nevertheless, quantitative bounds relating the coherence of input quantum states to cloning fidelity are currently unknown~\cite{cloningAcinRevModPhys.77.1225}.
It is also unclear whether coherence could signal simulation hardness for quantum recursions, akin to how entanglement affects tensor network simulations~\cite{VidalPhysRevLett.91.147902}.
Investigating the role of coherence in QDP presents an intriguing avenue for future research.

Currently, we are unaware of quantum algorithms where QDP ensures a clear computational advantage over classical methods. 
These may appear in settings robust against imperfect unitary implementations, similar to those present in diagonalizing double-bracket iterations~\cite{HelmkeMoore1994Book,Smith_Thesis,Gluza2024DBI}.
Indeed, QDP allows us to add oblivious Schmidt decomposition to the quantum algorithmic toolkit. 
We hope that once it will be feasible to experimentally implement memory-usage queries with high fidelity, oblivious Schmidt decomposition and QDP in general will facilitate practical state preparations that will advance our knowledge of quantum properties in materials, e.g. magnets or superconductors.

\emph{Acknowledgements.}---We thank Debbie Leung for drawing our attention to noise reduction algorithms applicable to higher dimensions. JS, MG, and NN are supported by the start-up grant for Nanyang Assistant Professorship of Nanyang Technological University, Singapore. MG additionally acknowledges support through the Presidential Postdoctoral Fellowship of the Nanyang Technological University. RT acknowledges the support of JSPS KAKENHI Grant Number JP23K19028, JST, CREST Grant Number JPMJCR23I3, Japan, and the Lee Kuan Yew Postdoctoral Fellowship of Nanyang Technological University Singapore.

\newpage

\normalsize
\renewcommand{\theequation}{A\arabic{equation}}
\setcounter{equation}{0}  

\onecolumngrid
\section*{End Matters}\label{sec:end_matter}
\twocolumngrid
\emph{Appendix A: mixed state quantum recursions.}---Similarly to the pure state case in Eq.~\eqref{eq:qrecursion}, we define the mixed state recursion starting from a mixed state $\rho_{0}$ as
\begin{align}\label{eq:qrecursion_mixed}
	\rho_{n}\mapsto\rho_{n+1} = \Upsirec{\rho_{n}} \rho_{n} \left(\Upsirec{\rho_{n}}\right)^{\dagger}\ ,
\end{align}
where $\Upsirec{\rho_{n}}$ is the unitary operator that depends on $\rho_{n}$.

Furthermore, the assumptions \emph{i,ii,iii)} in the main text need to be slightly refined to account for the mixed state cases. 
Specifically, the uniqueness of the fixed-point (assumption \emph{ii)}) needs only to be up to unitary equivalence between root states. 
In other words, any two root states of the recursion $\rho_{0}$ and $\rho'_{0}$ with the same spectrum have the same fixed-point $\rho_{\infty} = \rho'_{\infty}$. 
This we refer to as \emph{ii')} the fixed point being ‘spectrally’ unique.
See \cite{suppl} Sec.~\ref{subsec:main_Thm_proof} for details. 

The exponential circuit depth reduction established in Theorem~\ref{thm:qdp_pure_nontech} can also be achieved for mixed state recursions, albeit with lower efficiency. 
\begin{thm}[Mixed state quantum dynamic programming]\label{thm:qdp_nontech}
	Suppose the quantum recursion satisfies fast spectral convergence.
	For any $\epsilon>0$, implementing $N$ recursion steps with each memory-call replaced by $\mO(N\epsilon^{-1})$ memory-usage queries, yields a final state $\sigma_{N}$, such that 
	\begin{align}
			\|\sigma_N - \rho_{N}\|_1 \leq \epsilon,
		\end{align}
	using a circuit of depth $\mO(N^{2}\epsilon^{-1})$.
	Here, $\rho_N$ is the exact solution to the recursion. 
\end{thm}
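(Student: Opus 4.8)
The plan is to control the trace-norm error $e_n \coloneqq \|\sigma_n - \rho_n\|_1$ step by step and show that $e_N \le \epsilon$ once each memory-call is executed with $M = \mO(N\epsilon^{-1})$ memory-usage queries. First I would bound $e_{n+1}$ by inserting two intermediaries: the ideal recursion-unitary conjugation instructed by $\sigma_n$, and the one instructed by $\rho_n$, which by Eq.~\eqref{eq:qrecursion_mixed} produces exactly $\rho_{n+1}$. This splits $e_{n+1}$ into a \emph{local} term $\|\sigma_{n+1} - \Upsirec{\sigma_n}\sigma_n(\Upsirec{\sigma_n})^\dagger\|_1$ and the difference $\|\mR(\sigma_n) - \mR(\rho_n)\|_1$ of the exact recursion map $\mR(\rho) = \Upsirec{\rho}\rho(\Upsirec{\rho})^\dagger$ evaluated at the two instruction states. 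The local term is precisely the error of replacing the memory-call instructed by $\sigma_n$ by $M$ memory-usage queries of duration $1/M$; by the $\mO(s^2/M)$ estimate quoted in the main text it is $\mO(1/M)$ per step (and distributing the $M$ queries across $L$ memory-calls keeps it $\mO(1/M)$).

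The crux is bounding $\|\mR(\sigma_n) - \mR(\rho_n)\|_1$, and this is where fast spectral convergence enters. Since $\mR$ is unitary conjugation it preserves spectra, so every spectral orbit is invariant and its linearization at the spectrally-unique fixed point $\rho_\infty$ decomposes: on the directions tangent to the unitary orbit it is a strict contraction with some factor $\alpha<1$ (exactly the strong attraction of assumption \emph{iii)}), while on the transverse, spectrum-changing directions it is marginal — the identity to leading order, because the fixed point only relabels smoothly as the spectrum is varied. I would therefore decompose $\sigma_n - \rho_n$ into an on-orbit component $e_n^\parallel$ and a transverse component $e_n^\perp$ and derive the coupled recursions
\begin{align}
 e_{n+1}^\parallel &\lesssim \alpha\, e_n^\parallel + \mO(1/M),\nonumber\\
 e_{n+1}^\perp &\lesssim e_n^\perp + \mO(1/M).\nonumber
\end{align}
The transverse injection is unavoidable here: the memory-usage query channel is CPTP but not unitary, and its purity-reducing part feeds $\mO(1/M)$ of error into exactly the directions that the spectrum-preserving map cannot damp.

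Summing these, the on-orbit part is a geometric series bounded by $\mO\!\big((1-\alpha)^{-1}M^{-1}\big)$ uniformly in $N$, whereas the transverse part accumulates linearly to $e_N^\perp = \mO(N/M)$, so $e_N = \mO(N/M)$. Choosing $M = \mO(N\epsilon^{-1})$ gives $e_N \le \epsilon$. This pinpoints why the mixed-state bound is weaker than the pure-state Theorem~\ref{thm:qdp_pure_nontech}: there the purification/mixedness-reduction protocol resets $e_n^\perp$ at every step, removing the linear accumulation so that $M = \mO(\epsilon^{-1})$ suffices; here no such reset is available because the target $\rho_N$ is itself mixed. The depth then follows immediately — each of the $N$ steps runs its $M = \mO(N\epsilon^{-1})$ queries sequentially, for total depth $\mO(NM) = \mO(N^2\epsilon^{-1})$.

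The main obstacle is making the orbit/transverse splitting quantitative: converting the qualitative fast-spectral-convergence hypothesis into a genuine local contraction constant $\alpha<1$ on the tangent space, and controlling the cross-coupling between $e_n^\parallel$ and $e_n^\perp$ that the first-order linearization discards, which must be shown subdominant along the whole trajectory. A related difficulty is the transient before the iterates enter the contracting neighborhood of $\rho_\infty$ — the ``first few steps must be sufficiently accurate'' caveat — where one has to certify that the error stays small enough to be absorbed into the subsequent geometric regime, together with establishing uniform Lipschitz bounds for the instruction-state dependence of both $\Upsirec{\rho}$ and the composed $M$-query channel over that neighborhood.
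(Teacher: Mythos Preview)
Your intuition---separate the error into a part that the fixed-point attraction damps and a part that accumulates linearly because the memory-usage channel is non-unitary---is exactly right, and it is what drives the $\mO(N/M)$ scaling. But the way you implement the split differs from the paper, and your version has a gap at the central step.

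You linearize $\mR$ at $\rho_\infty$ and claim that on the tangent to the unitary orbit the map contracts differences, i.e.\ $\|\mR(\sigma)-\mR(\rho)\|^{\parallel}\lesssim\alpha\,\|\sigma-\rho\|^{\parallel}$. Fast spectral convergence does \emph{not} give this. It says only that for any isospectral $\tilde\rho$, the \emph{distance to the fixed point} obeys $\|\mR(\tilde\rho)-\tau\|\le h(\|\tilde\rho-\tau\|)$ with $h'<r<1$; this is a radial statement and does not control how the map acts on the difference of two arbitrary on-orbit points. (A map that halves the radius but twists the angle by an $r$-dependent amount satisfies fast spectral convergence yet is not a contraction.) Near $\tau$ your linearization argument would recover the contraction, but along the transient part of the trajectory it does not follow from the hypothesis as stated. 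You flag this as an obstacle, but it is not a technicality---it is the whole content of the convergence assumption, and your scheme uses a strictly stronger property.

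The paper sidesteps this entirely. Instead of decomposing the error vector $\sigma_n-\rho_n$, it introduces an explicit \emph{auxiliary state} $\tilde\rho_n$ defined by applying the \emph{exact} recursion unitary but with the \emph{wrong} (QDP) instruction: $\tilde\rho_{n}=\Upsirecch{\sigma_{n-1}}(\tilde\rho_{n-1})$, $\tilde\rho_0=\rho_0$. Because this is a unitary conjugation, $\tilde\rho_n$ stays exactly isospectral to $\rho_0$, so fast spectral convergence applies directly to $\delta_{\tilde\rho_n}\coloneqq\frac12\|\tilde\rho_n-\tau\|_1$, yielding $\delta_{\tilde\rho_{n+1}}\le h(\delta_{\tilde\rho_n})+C\varepsilon_n$ where $\varepsilon_n=\frac12\|\tilde\rho_n-\sigma_n\|_1$ and the $C\varepsilon_n$ term comes from a Lipschitz bound on the instruction dependence. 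The quantity $\varepsilon_n$ is your transverse component in disguise: a one-line triangle inequality gives $\varepsilon_{n+1}\le\varepsilon_n+\eta$, hence $\varepsilon_n\le n\eta$. Iterating with $h(\delta+\epsilon)\le h(\delta)+r\epsilon$ then gives $\delta_{\sigma_N}\le h^N(\delta_0)+\mO(N\eta)$. Note that the paper's technical statement is about the distance to the \emph{fixed point}, not literally $\|\sigma_N-\rho_N\|_1$; the non-technical phrasing you are proving is a slight informal restatement. The paper's route never needs $\mR$ to contract differences on the orbit---only distances to $\tau$---which is precisely what the hypothesis supplies.
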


The full, technical version of this theorem can be found in \suppl~\cite{suppl} Section~\ref{subsec:main_Thm_proof}, as Theorem~\ref{thm:qdp}. 
Note that we do not have an explicit exponential scaling of the width in this case. 
This is due to the lack of an additional protocol, corresponding to the mixedness reduction for the pure state case, that can preserve the spectrum of the density matrices.
Hence, the non-unitary error from memory-usage queries must be suppressed by requiring each memory-usage query to be very close to a unitary channel. 
Unfortunately, the recently developed channel purification protocol~\cite{Liu2024ChannelPurification}, which is a quantum channel version of the state mixedness reduction protocols, also does not improve the scaling, as the number of copies needed for the channel purification is comparable to that of improving each memory-usage query.

\emph{Appendix B: Non-asymptotic prospects of QDP.}---%
Without fault-tolerance, noise in quantum hardware places limitations on the use of QDP, particularly when implementations are too small to match its asymptotic performance scaling. 
However, we argue qualitatively how QDP can help mitigate noise-induced constraints.

Ref.~\cite{Robbiati2024DBQA} studied explicit gate counts preparing ground states of XXZ models using double-bracket quantum algorithms (DBQA).
It was found that for 10 qubits, a single DBQA step achieves 99.5\% fidelity to the ground state with a gate cost as low as 50 CZ gates per qubit.
This can be considered as a satisfactory  circuit compilation for the 10 qubit XXZ problem as it is both short and accurate.
However, for larger system sizes, the warm-start procedure used in Ref.~\cite{Robbiati2024DBQA} is expected to be insufficient due to optimization challenges~\cite{bharti2022noisy, Larocca2024BPreview}.

Theoretically, DBQA can systematically improve ground state preparation by running additional steps in such scenarios.
As exemplified in Ref.~\cite{Robbiati2024DBQA} for $L=10$ qubits, if the warm-start circuit is inaccurate (i.e. initialized at a relatively high energy), additional steps gradually decrease the energy. 
Since DBQA reduces energy locally, its performance is independent of system size, and in practice, CZ gate counts per qubit also remain roughly independent. 
Specifically, for $L=10$ qubits, $k=1$ required $50$ CZ gates per qubit, $k=2$ required about $200$, and $k=3$ demanded thousands.
Depending on the precise hardware and use of error mitigation, $k=2$ steps roughly exhaust the capacities of existing quantum hardware, while $k=3$ can be considered out of reach at the time of writing.

We thus expect that the transition from $k=2$ to $k=3$ could provide significant improvements for larger systems (e.g. $L=50$ or $L=150$ qubits), as available in system sizes made available by leading quantum computing companies.
However, at $k=3$ the required circuit depth would likely be prohibitively high, and the implementation is bound to fail almost every time.
Instead, the $k=3$ step can be implemented by running a single QDP step.
One should employ memory-usage queries for this step, which in turn must be compiled within tight runtime requirements.

For a qualitative estimate, suppose one density matrix exponentiation (DME)~\cite{Lloyd2014quantum} memory-usage query can be implemented using 100 CZ gates per qubit.
Then the $k=3$ DBQA step using $M=5$ memory-usage queries halve the circuit depth compared to a standard DBQA implementation.
While the compilation cost of DME memory-usage queries has not been explicitly studied except for $L=1$ qubits~\cite{Kjaergaard2022DME}, if it were, say, $1000$ CZ gates per qubit, the same argument would apply to future scenarios, e.g. optimizing the transition from $k$ to $k+1$ steps for $k\approx 4$.
Note that heuristic improvements like these will remain essential even in the fault-tolerant regime, because quantum algorithms based on quantum phase estimation rely on high-quality initializations, ensuring that QDP continues to be useful in such settings.

Let us also comment on further open questions pertaining to implementation.
When running large-scale QDP, error-mitigation can be used to suppress errors and enable an unfolded implementation of the recursion until failures appear.
This suggests that practical QDP implementations may start from an advanced stage rather than from scratch, reducing the number of required QDP steps and thereby lowering the circuit width.
Furthermore, one could consider creating memory states $
\ket{\psi_{k^*}}$ at an intermediate step $k^*$ using physical qubits and then encoding these states $
\ket{\psi_{k^*}}$ into a fault-tolerant quantum computer via heuristic (non-fault-tolerant) methods; see Refs~\cite{Buchbinder2013, QITEDBF}.

Finally, let us comment that increasing circuit width exposes the system to errors.
However, since intermediate states serve as quantum instructions, these errors may statistically average out, i.e.  $M$ memory usage queries using noisy instructions may  still yield a reduction in the cost function as the `average' instruction points in the `correct' direction.
This robustness is made precise by quantum imaginary-time evolution (see Example 2 in the main text).
Here, in every step, energy is guaranteed to decrease by a finite amount, and maintaining this decrease only requires approximating a unitary channel whose effective Hamiltonian aligns with the steepest-descent direction.
For individual errors to overhaul the energy decrease, they would need to entirely flip the direction of the effective Hamiltonian relative to the gradient operator.
We anticipate that such a flip is unlikely, even in the presence of noisy quantum instructions.
In general, recursions with fast spectral convergence should be expected to benefit from such stabilizing effects.

\clearpage
\newpage

\renewcommand{\theequation}{S\arabic{equation}}
\setcounter{equation}{0}

\title{Supplemental Materials for ``Quantum Dynamic Programming''}

\maketitle
\onecolumngrid
\tableofcontents
\vspace{1cm}

\section{Preliminaries and distance measures}

We denote Hilbert spaces as $\mH$ and the set of quantum states having support in $\mH$ as $\mS(\mH)$, which is a subset of $\mB(\mH)$, the set of generic bounded operators acting on $\mH$. 
When not specified, quantum channels are assumed to map $\mS(\mH) \rightarrow \mS(\mH)$. 
The identity operator acting on $\mH$ is denoted as $\1$, whereas the identity map from $\mS(\mH)$ to $\mS(\mH)$ is denoted as $\id$.

A few different distance measures between operators and channels are utilized in this paper. 
The trace norm distance between two normal operators acting on the same space $\mB(\mH)$ is defined to be
\begin{align}\label{eq:trace_norm_operators_def}
	\frac{1}{2}\lVert \hat{A} - \hat{B}\rVert_{1} \coloneq \frac{1}{2}\Tr\left[\sqrt{ (\hat{A} - \hat{B})^{\dagger}(\hat{A} - \hat{B}) }\right]\ .
\end{align}
Between two pure states $ \psi $ and $ \phi $, the trace norm distance can be written with respect to their overlap as $\frac{1}{2}\left\lVert \psi - \phi\right\rVert_{1} = \sqrt{1 - \lvert \braket{\psi}{\phi}\rvert^{2}}$.
Another measure uses the Hilbert-Schmidt norm and is defined as
\begin{align}\label{eq:HSnorm_operators_def}
	\frac{1}{2}\lVert \hat{A} - \hat{B}\rVert_{2} \coloneq \frac{1}{2}\sqrt{\Tr\left[ (\hat{A} - \hat{B})^{\dagger}(\hat{A} - \hat{B}) \right]}\ .
\end{align}
The operator norm of a normal operator $\hat{A}$ is defined to be $\Vert \hat{A}\Vert_{\infty} = \max\{\vert\lambda_{j}\vert\}$, where $\lambda_{j}$ are eigenvalues of $\hat{A}$; likewise, the operator norm distance between two normal operators $\hat{A}$ and $\hat{B}$ is $\frac{1}{2}\Vert \hat{A} - \hat{B}\Vert_{\infty}$.

One can define distances between two quantum channels $ \mS(\mH)\rightarrow\mS(\mH') $, leveraging on operator distances. 
The trace norm distance between two quantum channels $ \Phi_{1},\Phi_{2} $ is defined as the maximum distance between the outputs of two channels given an identical input state, i.e. 
\begin{align}\label{eq:trace_norm_dist_def}
	\frac{1}{2}\Vert \Phi_{1} - \Phi_{2}\Vert_{\rm Tr} \coloneq \frac{1}{2}\max_{\varrho\in\mS(\mH)}\left\lVert\Phi_{1}(\varrho) - \Phi_{2}(\varrho)\right\rVert_{1}\ .
\end{align}
The diamond norm distance between $ \Phi_{1},\Phi_{2} $ is a stronger measure of distance between channels, because it further optimizes over input states that can be entangled with an external reference, 
\begin{align}\label{eq:diamond_norm_dist_def}
	\frac{1}{2}\Vert \Phi_{1} - \Phi_{2}\Vert_{\diamond} \coloneq \frac{1}{2}\max_{\varrho\in\mS(\mH\otimes\mH)}\left\lVert(\Phi_{1}\otimes \id)(\varrho) - (\Phi_{2}\otimes \id)(\varrho)\right\rVert_{1}\ .
\end{align}
It is useful to note that the diamond norm distance always upper bounds the trace norm distance, $ \Vert \Phi_{1} - \Phi_{2}\Vert_{\diamond} \geq \Vert \Phi_{1} - \Phi_{2}\Vert_{\rm Tr}$.

For both operators and channels, norms $\Vert \cdotB \Vert$ with no subscript denotes that some unitary invariant norm (including trace norm, operator norm, etc.) is used.

\section{Locally accurate implementations}
\label{app:locally_accurate_implementations}

In Supplemental Materials, we focus on quantum recursions whose initial states are in general mixed.
Then a memory-call unitary Eq.~\eqref{eq:executing_singlememorycall} becomes
\begin{align}\label{eq:executing_singlememorycall_mixed}
	\sigma\mapsto\Eopch{}{\hmN,\rho}(\sigma) \coloneq e^{i\hmN(\rho)}\sigma e^{-i\hmN(\rho)}\ ,
\end{align}
with a memory state $\rho$, a working state $\sigma$, and a hermitian-preserving map $\hmN$.

From this, we can define quantum recursions. 
Consider the simplest case with only one memory-call unitary of the form Eq.~\eqref{eq:executing_singlememorycall_mixed} and trivial constant unitaries $V_{0} = V_{1} = \1$ when written in the form Eq.~\eqref{eq:dynamicunitary0}.
Starting from the root state $\rho_{0}$, the recursion is defined by $\rho_{n+1} = \Eopch{}{\hmN,\rho_{n}}(\rho_{n})$; the exact implementation would lead to a sequence of states $\rho_0, \rho_1,\ldots$
For both unfolding and QDP implementations, some memory-calls cannot be executed without error, and we need to consider the approximate implementations.

For example, consider the QDP implementation. 
If we continue the recursion with approximate memory-calls, at $(n-1)$th recursion step, the instruction states available to us in the memory register are $\sigma_n$, which might be different from the exact intermediate state $\rho_n$. 
Thus, the QDP implementation is natural in that it works with what it has in hand (placeholder quantum memory) and leap-frogs forward (implements the recursion step using memory-usage queries with the memoized intermediate state being a quanum instruction).
We say that 
the sequence of states $\sigma_1,\ldots,\sigma_N$ is an \emph{$\epsilon$-locally accurate solution} of $N$ recursion steps of Eq.~\eqref{eq:executing_singlememorycall}, if for $k=0,\ldots N-1$,
\begin{align}\label{eq:locally_accurate}
	\frac{1}{2}\left \lVert \hE {\text{QDP}}{\hmN,\sigma_k,M}({\sigma_{k}}) - \Eopch{}{\hmN,\sigma_k}(\sigma_k)\right\rVert_1=\frac{1}{2}\left\lVert \sigma_{k+1} - \Eopch{}{\hmN,\sigma_k}(\sigma_k)\right\rVert_1 \leq \epsilon. 
\end{align}
In other words, the memory-usage queries implement at each step in Eq.~\eqref{eq:executing_singlememorycall} an accurate recursion step to the best of the memory's knowledge ($\sigma_k$ rather than $\rho_k$ instructs the recursion step).
Exact solutions of a quantum recursion Eq.~\eqref{eq:qrecursion} using $e^{i\hat{\mathcal{N}}(\rho)}$ can be viewed as locally accurate with $\epsilon =0$.
Note that the $\epsilon$-locally accurate solution can always be achieved by using $M = \mO(\epsilon^{-1})$ memory-usage queries with the instruction $\sigma_{k}$ at each step.

However, even when the implementation is always locally accurate, the discrepancy in the instruction state might culminate in $\Xi_n = \frac{1}{2}\|\sigma_n-\rho_n\|_1$ deviating uncontrollably, and as discussed in the main text, the recursion itself would occur unstable.
Let us discuss how this potential instability is accounted for by worst-case upper bounds using the triangle inequality.

For clarity we assume the recursion step consists of a single memory-call implemented by $M$ memory-usage queries.
We then have that the ideal implementation is $\rho_1=\Eopch{}{\hmN,\rho_0}(\rho_0)=\Eopch{}{\hmN,\sigma_0}(\sigma_0)$ and the practical implementation $\sigma_1 = \hE {\text{QDP}}{\hmN,\sigma_0,M}(\sigma_0)$, such that
\begin{align}
	\Xi_1 &\coloneq \frac{1}{2}\|\sigma_1-\rho_1\|_1  = \frac{1}{2}\left\|\hE {\text{QDP}}{\hmN,\sigma_0,M}(\sigma_0)-\Eopch{}{\hmN,\sigma_0}(\sigma_0)\right\|_1 = \mO(1/M).
\end{align}
We observe that the distance bound also holds in the channel level when two channels use the same quantum instruction, i.e. 
\begin{align}
	\frac{1}{2}\left\|\hE {\text{QDP}}{\hmN,\sigma_0,M}- \Eopch{}{\hmN,\sigma_0}\right\|_{\Tr} = \mO(1/M)\ .
\end{align}
The QDP channel $\hE {\text{QDP}}{\hmN,\sigma_1,M}$  consists of $M$ memory-usage queries.
For each such memory-usage query defined by 
\begin{align}\label{eq:HME_def_sm}
	\hE{s}{\hmN,\sigma_1}(\omega) = \Tr_{1}\left[e^{-i\hN s}\left(\sigma_1\otimes\omega\right)e^{i\hN s}\right]\ ,
\end{align}
we have that
\begin{align}
	\hE{s}{\hmN,\sigma_1}(\omega) = \hE{s}{\hmN,\rho_1}(\omega) + \Tr_{1}\left[e^{-i\hN s}\left( (\sigma_1-\rho_1)\otimes\omega\right)e^{i\hN s}\right]\ ,
\end{align}
which leads to the bound
\begin{align}
	\frac{1}{2}\left\| \hE{s}{\hmN,\sigma_1} - \hE{s}{\hmN,\rho_1} \right\|_{\Tr} \leq \frac{1}{2}\|\sigma_1-\rho_1\|_1 = \Xi_1\ .
\end{align}
Here we see how the preparation error $\Xi_1$ of $\sigma_1$ compared to $\rho_1$ influences the distance of the recursion step maps that define $\sigma_2 = \hE{\text{QDP}}{\hmN,\sigma_1,M}(\sigma_1)$ and 
$\rho_2 = \hE{\text{QDP}}{\hmN,\rho_1,M}(\rho_1)$ so that
using telescoping and the triangle inequality
\begin{align}
	\frac{1}{2}\left\| \hE {\text{QDP}}{\hmN,\sigma_1,M} -\hE {\text{QDP}}{\hmN,\rho_1,M}\right\|_{\Tr}\leq
	\frac{M}{2}\left\| \hE{1/M}{\hmN,\sigma_1} - \hE{1/M}{\hmN,\rho_1}\right\|_{\Tr} = \mO(M \Xi_1)\ .
\end{align}
This means that
\begin{align}
	\Xi_2 &= \frac{1}{2}\left\| \hE {\text{QDP}}{\hmN,\sigma_1,M}(\sigma_1) -\Eopch{}{\hmN,\rho_1}(\rho_1)\right\|_1\le
	\frac{1}{2}\left\| \hE {\text{QDP}}{\hmN,\sigma_1,M}(\sigma_1) -\hE{\text{QDP}}{\hmN,\sigma_1,M}(\rho_1)\right\|_1 +  \frac{1}{2}\left\| \hE {\text{QDP}}{\hmN,\sigma_1,M} -\Eopch{}{\hmN,\rho_1}\right\|_{\Tr}\nonumber\\ 
	& \leq  \Xi_{1}+   \frac{1}{2}\left\| \hE {\text{QDP}}{\hmN,\sigma_1,M} -\hE {\text{QDP}}{\hmN,\rho_1,M}\right\|_{\Tr}+ \frac{1}{2}\left\| \hE {\text{QDP}}{\hmN,\rho_1,M} -\Eopch{}{\hmN,\rho_1}\right\|_{\Tr} = \mO(M\Xi_{1}).
\end{align}
Analogously, we obtain $\Xi_{n} = \mO(M^n \Xi_1)$ after $n$ iterations, potentially displaying an exponential amplification of the state preparation error, when the bound is saturated.

However, this exponential instability might not necessarily be the case, when there are other factors stabilizing the recursion.
In fact, we provide general sufficient conditions for insensitivity to intermediate errors and obtain the highly accurate final solution from locally accurate implementations with circuit depth polynomial to the number of steps $N$.
This is captured by the notion of fast spectral convergence in Section~\ref{subsec:main_Thm_proof}.

\subsection{Locally accurate implementation and unfolding of the dynamic unitary using black box evolutions}\label{app:LAI_blackbox}

In this subsection, we demonstrate how to locally accurately implement a recursion unitary, as in Eq.~\eqref{eq:locally_accurate}, given access to black box evolutions with respect to the instruction state.
In particular, we consider the generalized recursion 
\begin{align}\label{eq:dynamicunitary_poly}
	\Upsirec{\{f\},\rho} = \hat{V}_{L}e^{if_{L}(\rho)}\hat{V}_{L-1}e^{if_{L-1}(\rho)}\cdots\hat{V}_{1}e^{if_{1}(\rho)}\hat{V}_{0}\ ,
\end{align}
with $L$ memory-calls to the exponentials of the Hermitian-preserving polynomials $\{f\}$. 
Note that our definition of polynomials includes those with constant operator coefficients, e.g. $\hat{A}\rho \hat{B} \rho \hat{C}$ with operators $\hat{A},\hat{B},\hat{C}$ is a polynomial of $\rho$ with degree two.
As another example, for a fixed operator $\hat{D}$, the commutator $[\hat{D},\rho] = \hat{D}\rho - \rho\hat{D}$ is a degree one polynomial as well as the scalar multiplication $f(\rho) = s\rho$. 

In general, polynomials are not covariant $f_k(\hat U \rho \hat U^\dagger)\neq \hat U f_k(\rho) \hat U^\dagger$, but the black box queries $e^{-i\rho t}$ are.
In Lemma~\ref{lem:Murao_technique} below, we will show that memory-calls $e^{if(\rho)}$ of polynomials $f$ of degree $d$ can be locally accurately implemented by making queries to evolutional oracles $e^{-it\rho^{\otimes d}}$ generated by the product of the quantum instruction $\rho^{\otimes d}$.
Then the unfolding implementation follows from making appropriate substitutions: each time the memory-call $e^{if(\rho_{n})}$ is made, the locally accurate implementation needs queries to the black box evolution $e^{-it\rho_{n}^{\otimes d}}$, which can be replaced by the query to the root state evolution $e^{-it\rho_{0}^{\otimes d}}$ using the covariant form
\begin{align}\label{eq:covariance_polynomial}
	e^{-it\rho_{n}^{\otimes d}} = \hat{U}^{\otimes d} e^{-it\rho_{0}^{\otimes d}}\left(\hat{U}^{\dagger}\right)^{\otimes d}\ ,
\end{align}
where $\hat{U} = \Upsirec{\{f\},\rho_{n-1}} \Upsirec{\{f\},\rho_{n-2}} \cdots\Upsirec{\{f\},\rho_{0}} $.

Ref.~\cite{Odake2023higherorder} introduced a versatile technique which allows the locally accurate implementation of all dynamic unitaries in the form of Eq.~\eqref{eq:dynamicunitary0} that are physically compilable using black box evolutions $e^{-it\rho}$. 
To be compilable using black box evolutions, each map $\mN_{k}$ for the memory-calls must satisfy $\mN_{k}(\1)\propto \1$, which stems from the fact that the evolution with the global phase added $e^{-it(\rho + c\1)}$ should yield the same output unitary $e^{-it\mN_{k}(\rho+\1)} = e^{-it\mN_{k}(\rho)}e^{-ic't}$ up to a global phase factor. 
\begin{lem}[Algorithm~1 in Ref.~\cite{Odake2023higherorder}]\label{lem:Murao_technique}
	Let $\mN$ be some Hermitian-preserving linear map such that $\mN(\1) \propto \1$.
	Given access to $e^{-it\rho}$ for $t>0$ and an unknown state $\rho$, it is possible to approximate $e^{i\mN(\rho)}$ with an error bounded by $\epsilon$ using circuit of depth $\mO(\beta^{2}\epsilon^{-1})$.
	Here, $\beta$ is the constant determined by the Pauli transfer matrix representation of $\mN$. 
	It is always possible to write a linear map $\mN(\cdot) = \sum_{\vec{\mu},\vec{\nu}}\gamma_{\vec{\nu},\vec{\mu}}T_{\vec{\nu},\vec{\mu}}(\cdot)$, with coefficients $\gamma_{\vec{\nu},\vec{\mu}}$ and maps $T_{\vec{\nu},\vec{\mu}}(\sigma_{\vec{\xi}}) = \delta_{\vec{\mu},\vec{\xi}}\sigma_{\vec{\nu}} $ that transform a Pauli matrix $\sigma_{\vec{\mu}}$ to $\sigma_{\vec{\nu}}$ for labels $\vec{\nu},\vec{\xi},\vec{\mu}\in\{0,1,2,3\}^{n}$.
	Then, $\beta = 2\sum_{\vec{\mu},\vec{\nu}}\lvert \gamma_{\vec{\nu},\vec{\mu}}\rvert$.	
\end{lem}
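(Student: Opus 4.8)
The plan is to reduce the task to the density-matrix-exponentiation primitive and then assemble the general map $\mN$ out of its Pauli-transfer components, following the construction of Algorithm~1 in Ref.~\cite{Odake2023higherorder}. First I would note that a single short-time black box query $e^{-it\rho}$ conjugated by a Clifford $C$ implements $e^{-it\,C\rho C^{\dagger}}$, and that a suitable signed Pauli/stabilizer twirl (a linear combination of such conjugations) isolates, at the level of the generator and to first order in $t$, the single Pauli component of $\rho$ labelled by $\vec{\mu}$, namely an effective generator proportional to $\Tr[\sigma_{\vec{\mu}}\rho]\,\sigma_{\vec{\mu}}$. Composing this with a fixed Clifford that relabels $\sigma_{\vec{\mu}}\mapsto\sigma_{\vec{\nu}}$ then realizes exactly one elementary transfer map $T_{\vec{\nu},\vec{\mu}}$ on the generator. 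Thus the bulk of the proof is verifying that this twirl-plus-relabel primitive applies verbatim in our setting.

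Next I would build one composite query. Using the Pauli-transfer decomposition $\mN(\cdot)=\sum_{\vec{\mu},\vec{\nu}}\gamma_{\vec{\nu},\vec{\mu}}T_{\vec{\nu},\vec{\mu}}(\cdot)$, I would form a randomized (linear-combination-of-unitaries) channel that applies the conjugated black-box evolution associated with $(\vec{\mu},\vec{\nu})$ with weight proportional to $\lvert\gamma_{\vec{\nu},\vec{\mu}}\rvert$, the sign being absorbed into the conjugating unitary; the total $\ell_1$ weight is exactly $\beta = 2\sum_{\vec{\mu},\vec{\nu}}\lvert\gamma_{\vec{\nu},\vec{\mu}}\rvert$. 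A first-order Duhamel expansion then shows that one such composite query of short duration $s$ reproduces $e^{i\mN(\rho)s}(\cdot)e^{-i\mN(\rho)s}$ up to error $\mO((\beta s)^2)$, in direct analogy with how a single density-matrix-exponentiation step approximates $e^{-i\rho s}(\cdot)e^{i\rho s}$ with error $\mO(s^2)$.

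Then I would iterate. Splitting the target into $M$ composite queries of duration $s=1/M$ and bounding the accumulated error by subadditivity of the trace (or diamond) norm under composition gives total error $\mO(M(\beta/M)^2)=\mO(\beta^2/M)$. Choosing $M=\mO(\beta^2\epsilon^{-1})$ bounds this by $\epsilon$, and since each composite query has constant depth, the overall circuit depth is $\mO(\beta^2\epsilon^{-1})$, as claimed. The hypothesis $\mN(\1)\propto\1$ enters precisely here to guarantee compilability: because $e^{-it\rho}$ is invariant under $\rho\mapsto\rho+c\1$ up to a global phase, any consistent generator transformation realizable from these evolutions must treat the identity component the same way, which is exactly what $\mN(\1)\propto\1$ encodes.

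The main obstacle I anticipate is not the error bookkeeping---that is the standard $\mO(s^2)$-per-step argument---but confirming that the twirl-plus-relabel ensemble reproduces each $T_{\vec{\nu},\vec{\mu}}$ \emph{exactly} on the generator without introducing spurious lower-weight Pauli terms, and checking that the resulting first-order generator is $i\mN(\rho)$ rather than some other Hermitian-preserving map. Establishing this cleanly demands careful tracking of the Pauli-transfer-matrix algebra and of the signs and phases contributed by the conjugating Cliffords, and it is where I would rely most heavily on the explicit construction of Ref.~\cite{Odake2023higherorder}.
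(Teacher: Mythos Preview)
The paper does not prove this lemma at all: it is stated as a direct citation of Algorithm~1 in Ref.~\cite{Odake2023higherorder} and then used as a black box, with the surrounding text only restating the conclusion in the form of a locally accurate implementation bound. So there is no ``paper's own proof'' to compare your proposal against.

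That said, your sketch is a reasonable outline of the strategy behind the cited result: extract individual Pauli components of the generator via conjugation of the black-box evolution $e^{-it\rho}$, reweight according to the Pauli-transfer coefficients $\gamma_{\vec{\nu},\vec{\mu}}$, and then Trotterize with the standard $\mO(s^2)$-per-step error so that $M=\mO(\beta^2\epsilon^{-1})$ steps suffice. Your identification of the role of the hypothesis $\mN(\1)\propto\1$ matches the paper's own remark preceding the lemma. If you were to flesh this out you would indeed need to rely on Ref.~\cite{Odake2023higherorder} for the precise twirl/relabel construction, which is exactly what the paper does by citing it rather than reproving it.
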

In other words, a locally accurate implementation $\hE{\text{BB}}{\{\mN\},\sigma_{k},M}(\sigma_{k}) $, such that
\begin{align}
	\frac{1}{2}\left\lVert \hE{\text{BB}}{\{\mN\},\sigma_{k},M}(\sigma_{k}) - \Eopch{}{\{\mN\},\sigma_{k}}(\sigma_{k}) \right\rVert_{1}\leq \epsilon\ ,
\end{align}
can be prepared using $M = \mO(1/\epsilon)$ queries to the black box evolution $e^{-it\rho}$.

Lemma~\ref{lem:Murao_technique} can be extended to approximate exponentials of Hermitian-preserving polynomials $f(\rho)$ of degree $d$, satisfying $f(\1)\propto\1$.
For that, access to the evolution $e^{-it\rho^{\otimes d}}$ is needed.
The idea is to find a linear Hermitian-preserving map $\mN$, such that $g(\rho^{\otimes d}) = f(\rho)$ and implement $\mN$ using Lemma~\ref{lem:Murao_technique}.
\begin{lem}\label{lem:poly_corresponding_lin_map}
	For any polynomial $f(\rho)$ of degree $d$, there exists a linear map $\mN$, such that $\mN(\rho^{\otimes d}) = f(\rho)$.
	If $f$ is Hermitian-preserving and $f(\1)\propto\1$, $g$ can also be made Hermitian-preserving and $\mN(\1^{\otimes d})\propto\1$.
\end{lem}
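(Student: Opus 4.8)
The plan is to decompose $f$ into monomials and realize each by a multilinear construction, then enforce the two structural conditions by symmetrization and by a direct evaluation at the maximally mixed state.

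First I would write $f(\rho)=\sum_j c_j\, m_j(\rho)$ as a linear combination of monomials $m_j(\rho)=\hat A^{(j)}_0\,\rho\,\hat A^{(j)}_1\cdots\rho\,\hat A^{(j)}_{k_j}$ of degree $k_j\le d$, with fixed operator coefficients $\hat A^{(j)}_i$. For a monomial of degree exactly $d$, the assignment $(B_1,\dots,B_d)\mapsto \hat A_0 B_1\hat A_1\cdots B_d\hat A_d$ is multilinear in its $d$ arguments, so by the universal property of the tensor product it extends uniquely to a linear map $\mN_m$ on $\mathcal{B}(\mH)^{\otimes d}\cong\mathcal{B}(\mH^{\otimes d})$ obeying $\mN_m(B_1\otimes\cdots\otimes B_d)=\hat A_0 B_1\cdots B_d\hat A_d$; evaluating at $B_1=\dots=B_d=\rho$ gives $\mN_m(\rho^{\otimes d})=m(\rho)$. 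A monomial of degree $k<d$ is promoted to degree $d$ without changing its value on states by inserting trace factors, i.e. using the multilinear map that additionally multiplies by $\Tr[B_{k+1}]\cdots\Tr[B_d]$ and acts on the first $k$ slots as before; since $\Tr\rho=1$ this again reproduces $m(\rho)$ on $\rho^{\otimes d}$. Setting $\mN=\sum_j c_j\,\mN_{m_j}$ yields a linear map with $\mN(\rho^{\otimes d})=f(\rho)$, establishing the first claim. One may instead realize each $\mN_m$ concretely with a cyclic permutation operator on the $d$ factors dressed by the $\hat A_i$, using the partial-trace identity $\Tr_1[\mathrm{SWAP}\,(A\otimes B)]=AB$ as the base case, but the multilinear description is cleaner.

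Next, to make $\mN$ Hermitian-preserving I would symmetrize. Define the conjugate map $\mN^{\ddagger}(X):=\bigl(\mN(X^\dagger)\bigr)^\dagger$, which is again linear, and replace $\mN$ by $\tfrac12(\mN+\mN^{\ddagger})$; this is Hermitian-preserving by construction, since $(\tfrac12(\mN+\mN^{\ddagger}))^{\ddagger}=\tfrac12(\mN^{\ddagger}+\mN)$. Because $\rho^{\otimes d}$ is Hermitian and $f(\rho)$ is Hermitian whenever $\rho$ is (as $f$ is Hermitian-preserving), the symmetrized map still satisfies $\mN(\rho^{\otimes d})=\tfrac12\bigl(f(\rho)+f(\rho)^\dagger\bigr)=f(\rho)$.

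Finally, for the normalization $\mN(\1^{\otimes d})\propto\1$, the key observation is that this value is not at our disposal: writing $\tau=\1/D$ for the maximally mixed state with $D=\dim\mH$, we have $\1^{\otimes d}=D^{d}\tau^{\otimes d}$, and since $\tau^{\otimes d}$ is itself of the form $\rho^{\otimes d}$, linearity forces
\[
\mN(\1^{\otimes d})=D^{d}\,\mN(\tau^{\otimes d})=D^{d} f(\tau).
\]
Thus the claim reduces to showing $f(\tau)\propto\1$. For homogeneous $f$ this is immediate, $f(\tau)=D^{-\deg f}f(\1)\propto\1$; in the general case I would argue degree by degree, using that the global-phase hypothesis behind $f(\1)\propto\1$ (namely that $f(\rho+c\1)-f(\rho)$ is a multiple of $\1$ for all $c$) forces each homogeneous component of $f$ to send $\1$ to a multiple of $\1$, whence $f(\tau)\propto\1$; the symmetrization above preserves this, as $f(\tau)$ is already Hermitian. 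I expect this last step to be the main obstacle: the construction of $\mN$ is routine, but because $\mN(\1^{\otimes d})$ is rigidly determined by the values on the symmetric inputs $\rho^{\otimes d}$ and cannot be adjusted by re-choosing $\mN$, the identity-preservation property must be extracted directly from the correct reading of the hypothesis on $f$.
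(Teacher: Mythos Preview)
Your construction is essentially the paper's: decompose $f$ into monomials $P^{(j)}(\rho)=A^{(j)}_{s_j}\rho\cdots\rho A^{(j)}_0$ and realize each by the obvious multilinear map on $\mB(\mH)^{\otimes d}$. The paper simply lets that map ignore the unused tensor slots $s_j{+}1,\dots,d$, whereas you pad them with traces; your version is the right one, since without the trace factors the linear extension to $\rho^{\otimes d}$ picks up a spurious factor $\bigl(\sum_{v,w}\rho_{vw}\bigr)^{d-s_j}$ rather than $(\Tr\rho)^{d-s_j}=1$. For Hermiticity the paper pairs each monomial with its adjoint instead of symmetrizing the map, but the two devices are equivalent.

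Your treatment of the identity condition is sharper than the paper's. The paper simply asserts $\mN(\1^{\otimes d})=f(\1)$ and concludes; but as you observe, $\1^{\otimes d}=D^{d}\tau^{\otimes d}$ with $\tau=\1/D$ a state, so \emph{any} linear $\mN$ obeying $\mN(\rho^{\otimes d})=f(\rho)$ on states is forced to give $\mN(\1^{\otimes d})=D^{d}f(\tau)$, with no freedom to adjust it. For homogeneous $f$ this equals $f(\1)$ and the paper's one-line argument is fine; for genuinely inhomogeneous $f$ the bare hypothesis $f(\1)\propto\1$ is too weak---e.g.\ $f(\rho)=A\rho+\rho A-2A+\1$ with Hermitian $A\not\propto\1$ has $f(\1)=\1$ but $f(\tau)\not\propto\1$. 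Your instinct to read the hypothesis as the stronger global-phase condition $f(\rho+c\1)-f(\rho)\propto\1$ is exactly the repair needed: it forces each homogeneous component $f_k$ to satisfy $f_k(\1)\propto\1$, whence $f(\tau)=\sum_k D^{-k}f_k(\1)\propto\1$. In the paper's actual applications $f$ is linear, so the distinction never bites, but your analysis is the correct general one.
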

\begin{proof}
	Any polynomial of degree $d$ can be written as $f(\rho) = \sum_{j=1}^{J}P_{s_{j}}^{(j)}$ with $J$ different terms, where
	\begin{align}
		P_{s_{j}}^{(j)} = \left(\prod_{i = 1}^{s_{j}}A_{i}^{(j)}\rho\right)A_{0}^{(j)},
	\end{align}
	for some operators $A_{i}^{(j)}$ and $s_{j}\leq d$. 
	Now we define a linear map $\mN_{s_{j}}^{(j)}$ such that
	\begin{align}\label{eq:g_sj_j}
		\mN_{s_{j}}^{(j)}\left(\bigotimes_{k=1}^{d}\ketbra{v_{k}}{w_{k}}_{k}\right) = A^{(j)}_{s_{j}}\ketbra{v_{s_{j}}}{w_{s_{j}}}_{s_{j}} A^{(j)}_{s_{j}-1}\cdots A^{(j)}_{1}\ketbra{v_{1}}{w_{1}}_{1} A^{(j)}_{0}.
	\end{align}
	for any matrix element $\bigotimes_{k=1}^{d}\ketbra{v_{k}}{w_{k}}_{k}$, where each $v_{k},w_{k}\in\{1,\cdots,\dim(\rho)\}$.
	Then, it immediately gives $\mN_{s_{j}}^{(j)}(\rho^{\otimes d}) = P_{s_{j}}^{(j)}(\rho)$.
	The summation of such linear maps, $\mN = \sum_{j=1}^{J}\mN_{s_{j}}^{(j)}$ is also a linear map and yields the desired result $\mN(\rho^{\otimes d}) = f(\rho)$.
	
	The second part of the Lemma can be shown similarly. 
	If $f(\1)\propto\1$, then $\mN(\1^{\otimes d}) = f(\1)\propto\1$.
	If $f$ is Hermitian-preserving, we can make a decomposition where each $P_{s_{j}}^{(j)}$ and thus the construction Eq.~\eqref{eq:g_sj_j} are also Hermitian-preserving. 
\end{proof}

Combining Lemmas~\ref{lem:Murao_technique} and~\ref{lem:poly_corresponding_lin_map}, we can implement $e^{-if(\rho)}$ with an arbitrarily small error by making queries to $e^{-i\rho^{\otimes d}t}$.
Hence, the dynamic unitary Eq.~\eqref{eq:dynamicunitary_poly} with $L$ such memory calls can also be locally accurately implemented with error $\epsilon$ by making $\mO(1/\epsilon)$ queries to $e^{-i\rho^{\otimes d}t}$.

\subsection{Hermitian-preserving map exponentiation and its extension}\label{app:HME}

We briefly recap the technique of Hermitian-preserving map exponentiation introduced in Ref.~\cite{Wei2023hermpreserving}.
The intuition behind the idea is the following: 
if a map $\hmN$ is not completely-positive and trace-preserving, then the operation $\sigma \mapsto \hmN(\sigma)$ is not physically implementable. 
However, 
as long as $\hmN$ is Hermitian-preserving, the exponential $ e^{i\hmN(\rho)} $ is always unitary for any $\rho$ that is Hermitian, and so the unitary channel $\sigma \mapsto e^{i\hmN(\rho)} \sigma e^{-i\hmN(\rho)}$ can be implemented (if one has a universal gate set). In particular, the approximation of this unitary $e^{i\hmN(\rho)}$ can be achieved given many copies of the reference state $ \rho $, while the operations required for the implementation are oblivious to $ \rho $.
\begin{defn}[Hermitian-preserving map exponentiation~\cite{Wei2023hermpreserving}]\label{def:HME}
	For a short interval $ s \in \mathbb{R}$, a reference state $\rho$, and a linear Hermitian-preserving map $\hmN$, the Hermitian-preserving map exponentiation channel, acting on a state $ \sigma $ is defined as 
	\begin{align}\label{eq:HME_def}
		\hE{s}{\hmN,\rho}(\sigma) = \Tr_{1}\left[e^{-i\hN s}\left(\rho\otimes\sigma\right)e^{i\hN s}\right]\ ,
	\end{align}
	where $\hN$ is the partial transpose (denoted as $ \transp_{1} $) of the Choi matrix corresponding to $ \hmN $, obtained by applying the map to the second subsystem of an unnormalized maximally entangled state $ \ket{\Phi^{+}}_{12} = \sum_{j}\ket{j}_{1}\ket{j}_{2} $,
	\begin{align}
		\hN = [(\id\otimes\hmN)(\Phi^{+}_{12})]^{\transp_{1}} = \sum_{j,k}\left(\ketbra{k}{j}_{1}\otimes\hmN\left(\ketbra{j}{k}\right)_{2}\right).
	\end{align}
\end{defn}

It is useful to note that in general, the reference state $\rho$ and the target state $\sigma$ in Def.~\ref{def:HME} need not be of the same dimension. 
The map $\hmN$ is in general a map from $\mB(\mH')$ to $\mB(\mH)$ with $\dim(\mH)\neq \dim(\mH')$. 
In such cases, the reference state $\rho \in \mS(\mH')$, while the output of the map $\mN(\rho)\in \mB(\mH)$ and the target state $\sigma \in\mS(\mH)$.
Likewise, the operator $\hN \in \mB(\mH'\otimes\mH)$, since the maximally entangled state used to define the Choi matrix $\ket{\Phi^{+}}_{12}\in \mH'\otimes\mH'$.
The channel Eq.~\eqref{eq:HME_def} well approximates the unitary channel 
\begin{align}
	\Eopch{s}{\hmN,\rho} (\sigma) := e^{is\hmN(\rho)} (\sigma) e^{-is\hmN(\rho)}.
\end{align}
This can be seen by Taylor expanding with respect to $ s $, which yields   
\begin{align}
	\hE{s}{\hmN,\rho}(\sigma) &= \sigma - is\Tr_{1}\left(\left[\hN,\rho\otimes\sigma\right]\right) + \mO(s^{2}) = \sigma -is\Tr_{1}\left(\left[\sum_{j,k}\left(\ketbra{k}{j}_{1}\otimes\hmN\left(\ketbra{j}{k}\right)_{2}\right),\rho\otimes\sigma\right]\right) + \mO(s^{2})\ .
\end{align}
From the linearity of partial trace operation and commutator, 
\begin{align}\label{eq:HME_2ndform}
	\hE{s}{\hmN,\rho}(\sigma) &= \sigma -is\sum_{jk}\left[\Tr_{1}\left(\ketbra{k}{j}_{1}\rho\otimes\hmN\left(\ketbra{j}{k}\right)_{2}\sigma\right) - \Tr_{1}\left(\rho\ketbra{k}{j}_{1}\otimes\sigma\hmN\left(\ketbra{j}{k}\right)_{2}\right) \right] + \mO(s^{2}) \\
	&= \sigma - is\sum_{jk}\left[\bra{j}\rho\ket{k}\hmN\left(\ketbra{j}{k}\right),\sigma\right] + \mO(s^{2}) = \sigma - is\left[\hmN(\rho),\sigma\right] +\mO(s^{2})\ ,\nonumber
\end{align}
where $ \Tr_{1}[\ketbra{j}{k}_{1}\rho\otimes\sigma] =  \Tr_{1}[\rho\ketbra{j}{k}_{1}\otimes\sigma] = \bra{j}\rho\ket{k}\sigma$ is used for the second equality.

For simplicity, we have for now omitted the fact that the magnitude of the error also depends on the map $\hmN$. In particular, Ref.~\cite{Wei2023hermpreserving} proves a more rigorous bound
\begin{align}\label{eq:HME_diamond_dist}
	\left\lVert \hE{s}{\hmN,\rho} - \Eopch{s}{\hmN, \rho}\right\rVert_{\diamond} \leq 8\Vert \hN\Vert_{\infty}^{2}s^{2}\ ,
\end{align}
whenever $\Vert \hN\Vert_{\infty}s \in (0,0.8]$.
Setting $ s\mapsto s/M $ and repeating the channel $M$ times, we can better approximate $\Eopch{s}{\hmN,\rho}$ and achieve
\begin{align}\label{eq:HME_diamond_dist_Mtimes}
	\frac{1}{2}\left\lVert \left(\hE{s/M}{\hmN,\rho}\right)^{M} - \Eopch{s}{\hmN,\rho}\right\rVert_{\Tr}\leq \frac{1}{2}\left\lVert \left(\hE{s/M}{\hmN,\rho}\right)^{M} - \Eopch{s}{\hmN,\rho}\right\rVert_{\diamond} \leq 4\lVert \hN\rVert_{\infty}^{2}\frac{s^{2}}{M} = \mO\left(\lVert \hN\rVert_{\infty}^{2}\frac{s^{2}}{M}\right)\ ,
\end{align}
using the subadditivity $\Vert (\hE{s/M}{\hmN,\rho})^{M} - \Eopch{s}{\hmN,\rho}\Vert \leq M \Vert \hE{s/M}{\hmN,\rho} - \Eopch{s/M}{\hmN,\rho}\Vert$.
Note that the $M$ applications of $ \hE{s/M}{\hmN,\rho} $ costs $ M $ copies of $ \rho $. 
Then Eq.~\eqref{eq:HME_diamond_dist_Mtimes} can be recast as the following statement: the Hermitian-preserving map exponentiation unitary $\Eopch{s}{\hmN,\rho}$ can be implemented with an arbitrarily small error $\epsilon$ by consuming $\mO(\lVert \hN\rVert_{\infty}^{2}s^{2}/\epsilon)$ copies of $\rho$.   

Chronologically, the simplest case of $\hmN = \id$, also known as density matrix exponentiation (DME)~\cite{Lloyd2014quantum, Kimmel2017DME_OP}, was studied prior to the more general scenario of Def.~\ref{def:HME}.
\begin{defn}[Density matrix exponentiation (DME)~\cite{Lloyd2014quantum}]\label{def:DME}
	For a short interval $ s $ and a reference state $\rho$, the density matrix exponentiation channel, acting on a state $ \sigma $ having the same dimension as $ \rho $, is defined as 
	\begin{align}\label{eq:DME_def}
		\hE{s}{\rho}(\sigma) = \Tr_{1}\left[e^{-i\hat{S} s}\left(\rho\otimes\sigma\right)e^{i\hat{S} s}\right]\ ,
	\end{align}
	where $\hat{S}$ is the swap operator yielding $\hat{S}\ket{j}_{1}\ket{k}_{2} = \ket{k}_{1}\ket{j}_{2}$ for all $j,k$.
\end{defn}
The property of the swap operator $ \hat{S}^{2} = 1 $ facilitates the derivation of an explicit form of Eq.~\eqref{eq:DME_def}:
\begin{align}\label{eq:DME_explicit_def}
	\hE{s}{\rho}\left(\sigma\right) = \cos^2(s)\sigma - i\sin(s)\cos(s)[\rho,\sigma] + \sin^2(s)\rho\ .
\end{align} 
Since Def.~\ref{def:DME} is a special case of Def.~\ref{def:HME}, it approximates the corresponding unitary channel $\Eopch{s}{\id,\rho}$ according to Eq.~\eqref{eq:HME_def}, which we denote for simplicity as $\Eopch{s}{\rho}(\sigma) \coloneq e^{-i\rho s}\sigma e^{i\rho s}$.
As a special instance of Eq.~\eqref{eq:HME_diamond_dist} when $\|\hat N\|_\infty = \|\hat S\|_\infty = 1$, \begin{align}\label{eq:DME_diamond_dist_Mtimes}
	\frac{1}{2}\left\lVert \left(\hE{s/M}{\rho}\right)^{M} - \Eopch{s}{\rho}\right\rVert_{\Tr}\leq \frac{1}{2}\left\lVert \left(\hE{s/M}{\rho}\right)^{M} - \Eopch{s}{\rho}\right\rVert_{\diamond}  = \mO\left(\frac{s^{2}}{M}\right)\ .
\end{align}

Now we prove the most general version of a single memory-call operation, namely exponential of polynomials. 
\begin{lem}[Polynomial function exponentiation]\label{lem:PFE_technical}
	Let $f$ be a Hermitian-preserving polynomial of degree $d$.
	The unitary evolution $e^{if(\rho)}$ can be approximated with an arbitrarily small error $\epsilon$, using
	\begin{enumerate}\itemsep0em
		\item $\mO(d\epsilon^{-1}F)$ copies of $\rho$ and 
		\item a circuit oblivious to $\rho$, with depth $ \mO(\epsilon^{-1}F)$, where $F$ is a constant determined by the polynomial $f$. 
	\end{enumerate} 
\end{lem}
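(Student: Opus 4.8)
The plan is to linearize the degree-$d$ polynomial and then apply the Hermitian-preserving map exponentiation of Definition~\ref{def:HME} to the product reference state $\rho^{\otimes d}$. First I would invoke Lemma~\ref{lem:poly_corresponding_lin_map} to produce a linear map $\mN$ from $\mB(\mH^{\otimes d})$ to $\mB(\mH)$ with $\mN(\rho^{\otimes d}) = f(\rho)$; because $f$ is Hermitian-preserving, the same lemma lets us take $\mN$ Hermitian-preserving, so $e^{i\mN(\cdot)}$ is unitary on Hermitian inputs. The central identity is then
\begin{align}
	e^{if(\rho)} = e^{i\mN(\rho^{\otimes d})}\ ,
\end{align}
which re-expresses the polynomial exponential as a linear-map exponential evaluated on the enlarged, but still unknown, reference state $\rho^{\otimes d}$.

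Next I would feed $\mN$ and the reference $\rho^{\otimes d}$ into the channel $\hE{s}{\mN,\rho^{\otimes d}}$ of Eq.~\eqref{eq:HME_def}, using the remark after Definition~\ref{def:HME} that the construction tolerates a map whose input and output dimensions differ (here $\hN \in \mB(\mH^{\otimes d}\otimes\mH)$). Note that, in contrast to the black-box route of Lemma~\ref{lem:Murao_technique}, this route requires no condition such as $f(\1)\propto\1$, since it accesses $\rho^{\otimes d}$ directly through its Choi matrix rather than through phase-invariant evolutions $e^{-it\rho}$. Setting the total duration to $s=1$, the channel targets the unitary channel $\Eopch{1}{\mN,\rho^{\otimes d}}$, i.e. conjugation by $e^{i\mN(\rho^{\otimes d})} = e^{if(\rho)}$. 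Subdividing into $M$ slices and applying the repetition bound Eq.~\eqref{eq:HME_diamond_dist_Mtimes} gives
\begin{align}
	\frac{1}{2}\left\lVert \left(\hE{1/M}{\mN,\rho^{\otimes d}}\right)^{M} - \Eopch{1}{\mN,\rho^{\otimes d}}\right\rVert_{\Tr} = \mO\!\left(\frac{\lVert \hN\rVert_{\infty}^{2}}{M}\right)\ ,
\end{align}
where $\hN$ is the partial-transposed Choi matrix of $\mN$; the per-slice validity condition $\lVert\hN\rVert_\infty/M \le 0.8$ is automatic in the small-$\epsilon$ regime. Choosing $M = \mO(\lVert\hN\rVert_\infty^{2}/\epsilon)$ pushes the error below $\epsilon$.

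Finally I would read off the costs. The implementation is a sequence of $M$ memory-usage queries, each oblivious to $\rho$, so the circuit depth is $\mO(\lVert\hN\rVert_\infty^{2}/\epsilon)$; each query consumes one copy of $\rho^{\otimes d}$, i.e. $d$ copies of $\rho$, giving $Md = \mO(d\lVert\hN\rVert_\infty^{2}/\epsilon)$ copies in total. Setting $F \coloneq \lVert\hN\rVert_\infty^{2}$ reproduces exactly the claimed depth $\mO(\epsilon^{-1}F)$ and copy count $\mO(d\epsilon^{-1}F)$. The main (though mild) obstacle is bookkeeping the constant $F$: one must confirm that $\lVert\hN\rVert_\infty$ is fixed purely by $f$---through the operator coefficients appearing in the explicit decomposition of $\mN$ built in the proof of Lemma~\ref{lem:poly_corresponding_lin_map}---and is independent of the approximation parameter $\epsilon$, so that it legitimately plays the role of a polynomial-determined constant in the stated scaling.
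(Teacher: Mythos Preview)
Your proposal is correct and follows essentially the same route as the paper: linearize via Lemma~\ref{lem:poly_corresponding_lin_map}, apply the Hermitian-preserving map exponentiation of Definition~\ref{def:HME} to the reference $\rho^{\otimes d}$, invoke the repetition bound Eq.~\eqref{eq:HME_diamond_dist_Mtimes} with $M=\mO(\lVert\hN\rVert_\infty^{2}/\epsilon)$, and read off the depth and copy costs with $F=\lVert\hN\rVert_\infty^{2}$. Your additional remarks (on the absence of an $f(\1)\propto\1$ requirement and on $F$ being fixed by $f$) are valid clarifications beyond what the paper spells out.
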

\begin{proof} 
	Using Lemma~\ref{lem:poly_corresponding_lin_map}, a linear map $g$, such that $g(\rho^{\otimes d}) = f(\rho)$ for any $f$, exists.
	Then we can approximate $e^{ig(\rho)}$ using Hermitian-preserving map exponentiations $\hE{s}{\hat{G},\rho^{\otimes d}}$, defined in Def.~\ref{def:HME}, where $\hat{G}$ is the partial transpose of the Choi matrix of $g$.
	Following Eq.~\eqref{eq:HME_diamond_dist_Mtimes}, we take $s = M^{-1}$ and concatenate the channel $\hE{M^{-1}}{\hat{G},\rho^{\otimes d}}$ for $M$ times, which leads to the implementation error $\mO(\lVert \hat{G}\rVert_{\infty}^{2}M^{-1})$. 
	Setting $M = \mO(\lVert \hat{G}\rVert_{\infty}^{2}\epsilon^{-1})$, we can achieve the approximation with arbitrarily small error $\epsilon$.
	Each channel $\hE{s}{\hat{G},\rho^{\otimes d}}$ requires a circuit of depth $\mO(1)$ and $\mO(d)$ copies of $\rho$.
	Hence, $M$ concatenations of such channels cost $\mO(d\epsilon^{-1}\lVert \hat{G}\rVert_{\infty}^{2})$ copies of $\rho$ and the circuit depth $\mO(\epsilon^{-1}\lVert \hat{G}\rVert_{\infty}^{2})$.
\end{proof}

\section{Fast spectral convergence and efficient quantum dynamic programming}\label{subsec:main_Thm_proof}

We begin by illustrating a class of fixed-point iterations that admit efficient QDP implementation.
Our recursions operate on general mixed states but unitary; hence, it is natural to consider iterations with a fixed-point shared by isospectral states (states sharing the same eigenvalue spectrum). 
In this section, we denote a generic quantum recursion in the form Eq.~\eqref{eq:dynamicunitary_poly} as $\Upsirec{\rho}$.
\begin{defn}[Fast spectral convergence]\label{def:fast_spectral_convergence}
	Consider a quantum recursion $\Upsirec{\rho}$ that defines a fixed-point iteration
	\begin{align}\label{eq:fp_iteration}
		\rho\mapsto\Upsirec{\rho}\rho \left(\Upsirec{\rho}\right)^\dagger\ .
	\end{align} 
	This iteration \emph{converges spectrally}, with respect to an initial state $\rho_0$, when there exists a fixed-point $\tau$ that is attracting for all $\tilde\rho_0$ satisfying $\spec(\tilde{\rho}_0) = \spec(\rho_0)$.
	In other words, for any sequence $\lbrace\tilde{\rho}_0\rbrace_{n=0}^\infty$ obtained by the iteration, the trace distance $ \delta_{n} \coloneq \frac{1}{2}\lVert\tau - \tilde\rho_{n} \rVert_{1} < \delta_{n-1}$ and $\displaystyle\lim_{n\rightarrow\infty} \delta_{n} = 0$. 
	We say that spectral convergence is \emph{fast}, when 
	\begin{align}
		\delta_{n+1} \leq h(\delta_n) < \delta_n,
	\end{align}
	for a function $h(x)$ whose derivative $h'(x)<r$ for some $0<r<1$ and all relevant $x$. 
\end{defn}
Spectral convergence implies that the QDP implementation will also approach the same fixed-point, when the non-unitary error is not too large. 
Furthermore, the function $h$ quantifies the worst case performance of the algorithm; the distance to the fixed-point is guaranteed to be arbitrarily close to $h^{N}(\delta_{0})$ after $N$ iterations, where $h^{N}$ indicates the $N$ times composition $h\circ\cdots\circ h$.

\subsection{Quantum dynamic programming with a mixed initial state}
We first establish a theorem that demonstrates the efficiency of quantum dynamic programming when the initial and all the instruction states are mixed.

\begin{thm}[Spectral convergence is sufficient for efficient global convergence of QDP]\label{thm:qdp}
	If a quantum recursion $\Upsirec{\rho}$ satisfies fast spectral convergence for some initial state $\rho_0$, then the QDP implementation of $N$ iterations prepares a state $\rho'_{N}$, whose distance to the fixed-point $\delta_{\rho'_{N}} \leq h^{N}(\delta_{0}) + \epsilon $, by using a circuit with depth $\mO(N^{2}\epsilon^{-1})$. 
\end{thm}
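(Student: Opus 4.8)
The plan is to track the single scalar $\Delta_n \coloneq \tfrac{1}{2}\lVert \tau - \rho'_n\rVert_1$, the trace distance from the QDP state $\rho'_n$ to the target fixed-point $\tau$, and to derive a one-step recursion for it that cleanly separates the benign unitary error (using $\rho'_n$ rather than the exact $\rho_n$ as instruction) from the dangerous non-unitary error (the memory-usage channel not being exactly unitary). First I would split, via the triangle inequality,
\[
\Delta_{n+1} \leq \tfrac{1}{2}\bigl\lVert \tau - \Upsirec{\rho'_n}\rho'_n(\Upsirec{\rho'_n})^{\dagger}\bigr\rVert_1 + \epsilon_n,
\]
where $\epsilon_n$ is the per-step local error of Eq.~\eqref{eq:locally_accurate}, which by Eq.~\eqref{eq:HME_diamond_dist_Mtimes} is $\mO(1/M)$ when the memory-call is replaced by $M$ memory-usage queries. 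Because $\Upsirec{\rho'_n}\rho'_n(\Upsirec{\rho'_n})^{\dagger}$ is a unitary image of $\rho'_n$, it is isospectral to $\rho'_n$, so fast spectral convergence applies with respect to the fixed-point $\tau_n$ of $\rho'_n$'s own spectrum, giving $\tfrac12\lVert\tau_n - \Upsirec{\rho'_n}\rho'_n(\Upsirec{\rho'_n})^{\dagger}\rVert_1 \leq h(\tfrac12\lVert\tau_n - \rho'_n\rVert_1)$.

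The crux is that, unlike the exact recursion, QDP states leave the isospectral manifold of $\rho_0$ because memory-usage queries are not exactly unitary, so $\tau_n \neq \tau$. I would control this through the fixed-point drift $s_n \coloneq \tfrac12\lVert \tau - \tau_n\rVert_1$. Invoking the spectral-uniqueness assumption \emph{ii')} together with Lipschitz dependence of the fixed-point on the spectrum, $s_n \leq C\, d_{\spec}(\rho'_n,\rho_0)$ for a constant $C$; and since the exact step is isospectral to $\rho'_n$, the spectrum changes only through the non-unitary part of each step, so eigenvalue-perturbation (Weyl/Lidskii) bounds give $d_{\spec}(\rho'_{n+1},\rho'_n)\leq \epsilon_n$ and hence $d_{\spec}(\rho'_n,\rho_0)\leq\sum_{k<n}\epsilon_k$. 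Using $h(0)=0$ with $h'<r$ to get $h(x+y)\leq h(x)+ry$ and $h(x)\leq rx$, these estimates assemble into
\[
\Delta_{n+1} \leq h(\Delta_n) + (1+r)s_n + \epsilon_n, \qquad s_n \leq C\textstyle\sum_{k<n}\epsilon_k.
\]

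Taking a uniform budget $\epsilon_k=\epsilon_*$, the drift grows only linearly, $s_n=\mO(n\epsilon_*)$, so the inhomogeneity is $g_n\coloneq(1+r)s_n+\epsilon_n=\mO(n\epsilon_*)$. I would then compare against the exact worst-case trajectory $\delta_n^*\coloneq h^n(\delta_0)$: an induction using $h(\delta_n^*+E_n)\leq \delta_{n+1}^*+rE_n$ yields $\Delta_n\leq\delta_n^*+E_n$ with $E_{n+1}=rE_n+g_n$, whose sum evaluates to $E_N=\mO(\epsilon_*\sum_{k<N}r^{\,N-1-k}k)=\mO(N\epsilon_*)$ for fixed $r<1$. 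Thus $\Delta_N\leq h^N(\delta_0)+\mO(N\epsilon_*)$, and choosing $\epsilon_*=\mO(\epsilon/N)$ delivers the claimed $\delta_{\rho'_N}\leq h^N(\delta_0)+\epsilon$. Since reaching per-step error $\epsilon_*$ costs $M=\mO(1/\epsilon_*)=\mO(N/\epsilon)$ memory-usage queries and the per-step circuit depth is $\mO(M)$ over $N$ sequential steps, the total depth is $\mO(N^2\epsilon^{-1})$.

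The main obstacle I anticipate is precisely this spectral-drift control: establishing that the fixed-point depends Lipschitz-continuously on the spectrum (where assumption \emph{ii')} is indispensable) and that the drift accumulates additively rather than being amplified by the contraction. This is exactly why the mixed-state bound loses a factor of $N$ relative to the pure-state Theorem~\ref{thm:qdp_pure}: lacking a spectrum-preserving mixedness-reduction subroutine, each non-unitary error must be suppressed to $\mO(\epsilon/N)$ so the total spectral displacement stays below $\epsilon$, forcing $M=\mO(N/\epsilon)$ queries per step instead of $\mO(1/\epsilon)$.
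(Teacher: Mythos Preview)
Your overall architecture is right---split into a contraction part plus a non-unitary remainder, derive a recursion $\Delta_{n+1}\le h(\Delta_n)+\mO(n\epsilon_*)$, sum the geometric-times-linear series, and set $\epsilon_*=\mO(\epsilon/N)$. But there is a genuine gap at the step where you invoke fast spectral convergence on $\Upsirec{\rho'_n}\rho'_n(\Upsirec{\rho'_n})^\dagger$. Definition~\ref{def:fast_spectral_convergence} only furnishes a fixed-point $\tau$ and the bound $h$ for states isospectral to $\rho_0$; it promises nothing about states with a different spectrum. Since $\rho'_n$ has already drifted off that isospectral manifold, neither the existence of your $\tau_n$ nor the contraction $\tfrac12\lVert \tau_n-\Upsirec{\rho'_n}\rho'_n(\Upsirec{\rho'_n})^\dagger\rVert_1\le h(\cdot)$ with the \emph{same} $h$ follows from the hypotheses. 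The Lipschitz dependence of the fixed-point on the spectrum that you correctly flag as the ``main obstacle'' is a strictly stronger assumption than what is given: spectral uniqueness (assumption \emph{ii')}) does not imply continuity, let alone Lipschitz continuity, of $\tau$ as a function of the spectrum.

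The paper avoids this entirely by introducing an auxiliary sequence $\tilde\rho_n$ obtained by applying the \emph{exact} unitary channel $\Upsirecch{\rho'_{n-1}}$ (instructed by the approximate state) to $\tilde\rho_{n-1}$, starting from $\tilde\rho_0=\rho_0$. Because every step is unitary, $\tilde\rho_n$ remains exactly isospectral to $\rho_0$ for all $n$, so fast spectral convergence applies to it with the single fixed-point $\tau$ and the given $h$---no drifting family $\tau_n$ is ever needed. The only extra ingredient is Lipschitz continuity of $\rho\mapsto\Upsirecch{\rho}$ in its instruction (Eq.~\eqref{eq:MVT}), a mild regularity of the recursion unitary rather than of the fixed-point landscape. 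One then tracks $\varepsilon_n\coloneq\tfrac12\lVert\tilde\rho_n-\rho'_n\rVert_1\le n\eta$ and $\delta_{\tilde\rho_{n+1}}\le h(\delta_{\tilde\rho_n})+C\varepsilon_n$, after which exactly your summation closes the argument. The moral: keep the reference trajectory on the isospectral manifold so the contraction hypothesis always applies, and let the non-unitary error accumulate only in the gap $\varepsilon_n$ between that reference and the actual QDP state.
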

\begin{proof}
	Let us denote the unitary channel corresponding to the operator $\Upsirec{\rho}$ as $\Upsirecch{\rho}$.
	The recursion is spectrally converging.
	Hence, there exists a fixed-point $\tau$, such that the distance $ \frac{1}{2}\lVert \tau - \rho \rVert_{1} \eqcolon \delta_{\rho} $ follows
	\begin{align}\label{eq:QDP_assumption_contraction}
		\delta_{\Upsirecch{\rho}(\rho)} = \frac{1}{2}\left\lVert \tau - \Upsirecch{\rho}(\rho) \right\rVert_{1} \leq h(\delta_{\rho}) < \delta_{\rho}\ .  
	\end{align}
	Furthermore, we have  
	\begin{align}\label{eq:h_property_app}
		h(\delta + \epsilon) \leq h(\delta) + r\epsilon\ ,
	\end{align} 
	for some $r<1$ and any $\delta, \epsilon$ of interest from the assumption that the spectral convergence is fast.

	From Lemma~\ref{lem:PFE_technical}, we are able to locally accurately implement the unitary channel $\Upsirecch{\rho}$ by a \emph{non-unitary} channel $\Upsirecapp{\rho}$, such that 
	\begin{align}\label{eq:PFE_channel_eta}
		\frac{1}{2}\left\lVert\Upsirecch{\rho} - \Upsirecapp{\rho}\right\rVert_{\Tr} \leq \eta,
	\end{align}
	for any $\eta>0$, by making $\mO(\eta^{-1})$ memory-usage queries each consuming a copy of $\rho$.
	The circuit depth for this implementation is also $\mO(\eta^{-1})$. 
	
	Suppose that the sequence of states $\{\rho'_{n}\}_{n}$ is obtained from such emulation starting from $\rho'_{0} = \rho^{}_{0}$ and thus recursively defined 
	\begin{align}\label{eq:recursion_without_purification}
		\rho'_{n} = \Upsirecapp{\rho'_{n-1}}\left(\rho'_{n-1}\right)\ .
	\end{align}
	This sequence might deviate from the desired sequence $\{\rho_{n}\}_{n}$ very quickly and, in general, $\spec\{\rho'_{n}\} \neq \spec\{\tau\}$.
	
	We additionally consider a sequence of states $\{\tilde{\rho}_{n}\}_{n}$, defined by the exact unitary recursion with an erroneous instruction state $\tilde{\rho}_{n} = \Upsirecch{\rho'_{n-1}}(\tilde{\rho}_{n-1})$ starting from $\tilde{\rho}_{0} = \rho_{0}$. Hence, $\spec\{\tilde{\rho}_{n}\} = \spec\{\tau\}$.  
	Let us denote $\frac{1}{2}\lVert \tilde{\rho}_{n} - \rho'_{n}\rVert_{1} \eqcolon \varepsilon_{n}$ with $\varepsilon_{0} = 0$.
	We first analyze how $\varepsilon_{n}$ scales.
	Observe that
	\begin{align}\label{eq:varepsilon_scaling}
		\varepsilon_{n+1} = \frac{1}{2}\left \lVert \Upsirecch{\rho'_{n}}(\tilde{\rho}_{n}) - \rho'_{n+1} \right \rVert_{1} \leq \frac{1}{2}\left \lVert \Upsirecch{\rho'_{n}}(\tilde{\rho}_{n}) - \Upsirecch{\rho'_{n}}(\rho'_{n})  \right \rVert_{1} +\frac{1}{2} \left \lVert \Upsirecch{\rho'_{n}}(\rho'_{n}) - \Upsirecapp{\rho'_{n}}(\rho'_{n}) \right \rVert_{1} \leq \varepsilon_{n} + \eta\ ,
	\end{align}
	from the triangle inequality, the unitary invariance of the trace norm, and Eq.~\eqref{eq:PFE_channel_eta}.
	Therefore, $\varepsilon_{n}$ scales linearly as $\varepsilon_{n}\leq n\eta$. 
	The triangle inequality between $\tau$, $\rho'_{n}$, and $\rho_{n}$ yields $\delta_{\rho'_{n}}\leq\delta_{\tilde{\rho}_{n}} + \varepsilon_{n}$, which connects the quantity of our interest $\delta_{\rho'_{n}}$ and the term $\delta_{\tilde{\rho}_{n}}$ derived from a state $\tilde{\rho}_{n}$ isospectral to the desired fixed-point $\tau$.
	
	Next, recall that $\left \lVert e^{if(\rho)} - e^{if(\sigma)}\right \rVert \leq C'\lVert \rho - \sigma\rVert$
	for some constant $C'$, from the mean value theorem for operators.
	This implies 
	\begin{align}\label{eq:MVT}
		\left \lVert \Upsirecch{\rho}- \Upsirecch{\sigma}\right\rVert_{\Tr} \leq C\lVert \rho - \sigma\rVert_{1}\
	\end{align}
	for some constant $C$.
	Then we can write
	\begin{align}\label{eq:delta_eta_scaling}
		\delta_{\tilde{\rho}_{n+1}} \leq \frac{1}{2}\left\lVert  \tau - \Upsirecch{\tilde{\rho}_{n}}\left(\tilde{\rho}_{n}\right) \right\rVert_{1} + \frac{1}{2}\left\lVert \Upsirecch{\tilde{\rho}_{n}}\left(\tilde{\rho}_{n}\right) - \Upsirecch{\rho'_{n}}\left(\tilde{\rho}_{n}\right) \right\rVert_{1} \leq h\left(\delta_{\tilde{\rho}_{n}}\right) + \frac{1}{2}\left\lVert \Upsirecch{\tilde{\rho}_{n}} - \Upsirecch{\rho'_{n}}  \right\rVert_{\Tr} 
		\leq  h\left(\delta_{\tilde{\rho}_{n}}\right) + C\varepsilon_{n},
	\end{align}
	where we use the triangle inequality for the first, the assumption Eq.~\eqref{eq:QDP_assumption_contraction} for the second, and Eq.~\eqref{eq:MVT} for the last inequality. 
	The property of $h$, Eq.~\eqref{eq:h_property_app} gives
	\begin{align}
		\delta_{\tilde{\rho}_{n}} \leq h\left(\delta_{\tilde{\rho}_{n-1}}\right) + C\varepsilon_{n-1}  \leq h\left( h\left(\delta_{\tilde{\rho}_{n-2}}\right) + C\varepsilon_{n-2} \right) + C\varepsilon_{n-1} \leq h\circ h\Big(\delta_{\tilde{\rho}_{n-2}}\Big) + rC\varepsilon_{n-2} + C\varepsilon_{n-1}\ ,
	\end{align}
	which leads to
	\begin{align}\label{eq:delta_tilde_minus_delta_normal}
		\delta_{\tilde{\rho}_{N}} \leq h^{N}\left(\delta_{\rho_{0}}\right) + \sum_{n=0}^{N-1} r^{N-n-1}(Cn\eta) = h^{N}\left(\delta_{\rho_{0}}\right) + \mO(N\eta)\ . 
	\end{align}
	Therefore the final state $\rho'_{N}$ after $N$ iterations has the distance
	\begin{align}
		\delta_{\rho'_{N}} =  h^{N}\left(\delta_{\rho_{0}}\right) + \mO\big(N\eta\big)\ .
	\end{align}
	To obtain the fixed deviation $\epsilon>0$ from $ h^{N}(\delta_{\rho_{0}})$ after $N$ iterations, we can set $\eta = \mO(\epsilon N^{-1})$, which only requires the circuit of total depth $\mO(N^{2}\epsilon^{-1})$. 
	In case $\delta_{\rho_{N}} =  h^{N}(\delta_{\rho_{0}})$, the exact algorithm and the QDP iteration can be made to coincide after $N$ steps with an arbitrarily small deviation $\epsilon$. 
\end{proof}

\subsection{Quantum dynamic programming with a pure initial state}\label{subsec:QDP_pure}
When the initial state of the recursion---and thus also the target final state of the recursion---is pure, a more efficient QDP implementation can be made, by adopting a subroutine that reduces the mixedness of instruction states. 
The initial idea for this subroutine comes from Ref.~\cite{Cirac1999_purification}, which assumed that the system of interest is a qubit. 
Recently, generalizations to systems with an arbitrary dimension was proposed~\cite{Childs2024purification, Li2024PurityAmplification}, albeit with a less straightforward mixedness reduction performance. 
We adapt the protocol in Ref.~\cite{Childs2024purification} here.
\begin{lem}[Interferential mixedness reduction (IMR)]\label{lemma:purification_genearlised}
	Let $\rho = \sum_{i}\lambda_{i}\dm{v_{i}}$ be a $d$-dimensional density matrix with eigenvalues $\lambda_{1}>\lambda_{2}\geq\cdots\geq\lambda_{d}$ and an orthonormal basis $\{\ket{v_{i}}\}$.
	Using an auxiliary qubit and a controlled swap operation on the two copies of $\rho$, a state $\rho' = \lambda'_{i}\dm{v_{i}}$ with $\lambda'_{1} = \frac{1+\lambda_{1}}{1+\sum_{i}\lambda_{i}^{2}}\lambda_{1}>\lambda_{1}$ can be prepared with a probability $\frac{1}{2}(1+\sum_{i}\lambda_{i}^{2})$.
	When the mixedness parameter $x \coloneq 1-\lambda_{1}$ is small, $x' \coloneq 1-\lambda'_{1} = \frac{x}{2} + \mO(x^{2})$, effectively halving the mixedness of the state $\rho$, with the success probability larger than $(1 - x)$.
	This procedure can be done without any prior knowledge of the state $\rho$. 
\end{lem}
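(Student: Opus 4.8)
The plan is to analyze the standard interferometric ``swap-test'' purification circuit and simply read off its action on the spectrum of $\rho$. Concretely, I would prepare two copies $\rho\otimes\rho$ together with an auxiliary qubit initialized in $\ket{0}$, apply a Hadamard to the ancilla, then the controlled-swap (Fredkin) gate $\hat{C}_S = \dm{0}\otimes\1 + \dm{1}\otimes\hat{S}$ with $\hat{S}$ the swap on the two copies, apply a second Hadamard, and finally measure the ancilla in the computational basis. The ``success'' outcome $0$ corresponds to projecting the ancilla (before the second Hadamard) onto $\ket{+}$. Since the entire gate sequence is fixed and does not depend on $\rho$, the protocol is manifestly oblivious to the state, which establishes the last sentence of the claim.

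First I would compute the unnormalized post-selected state on the two copies. Projecting the ancilla onto $\ket{+}$ collapses the controlled-swap into the symmetric combination $\1+\hat{S}$, giving $\tilde\rho = \tfrac14(\1+\hat{S})(\rho\otimes\rho)(\1+\hat{S})$. Using cyclicity of the trace together with $(\1+\hat{S})^2 = 2(\1+\hat{S})$ and the swap-test identity $\Tr[\hat{S}(\rho\otimes\rho)] = \Tr[\rho^2] = \sum_i\lambda_i^2$, the success probability follows immediately as $\Tr[\tilde\rho] = \tfrac12(1+\sum_i\lambda_i^2)$, matching the stated value.

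Next I would obtain $\rho'$ by tracing out the second copy of the normalized success state. Expanding $(\1+\hat{S})(\rho\otimes\rho)(\1+\hat{S})$ into four terms and applying the partial-trace identities $\Tr_2[\rho\otimes\rho]=\rho$, $\Tr_2[\hat{S}(\rho\otimes\rho)]=\Tr_2[(\rho\otimes\rho)\hat{S}]=\rho^2$, and $\Tr_2[\hat{S}(\rho\otimes\rho)\hat{S}]=\rho$, the numerator collapses to $2(\rho+\rho^2)$, so that $\rho' = (\rho+\rho^2)/(1+\sum_i\lambda_i^2)$. Because $\rho+\rho^2$ is diagonal in the eigenbasis $\{\ket{v_i}\}$, the eigenvectors are preserved and the eigenvalues become $\lambda_i' = \lambda_i(1+\lambda_i)/(1+\sum_j\lambda_j^2)$, which for $i=1$ is exactly the claimed formula. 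The monotonicity $\lambda_1'>\lambda_1$ then reduces to the elementary inequality $\sum_j\lambda_j^2\le\lambda_1\sum_j\lambda_j=\lambda_1$ (using $\lambda_j\le\lambda_1$ and $\sum_j\lambda_j=1$), which is strict whenever $\rho$ is mixed.

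Finally, I would Taylor-expand around a nearly pure state. Writing $\lambda_1=1-x$, the remaining eigenvalues sum to $x$, so $\sum_{j\ge2}\lambda_j^2\le x^2=\mO(x^2)$ and hence $\sum_j\lambda_j^2 = (1-x)^2+\mO(x^2)=1-2x+\mO(x^2)$. Substituting into $\lambda_1'=(2-x)(1-x)/(2-2x+\mO(x^2))$ and expanding yields $\lambda_1'=1-\tfrac{x}{2}+\mO(x^2)$, i.e. $x'=\tfrac{x}{2}+\mO(x^2)$; the same substitution in the probability gives $P=\tfrac12(1+\sum_j\lambda_j^2)\ge\tfrac12(1+(1-x)^2)=1-x+\tfrac{x^2}{2}>1-x$. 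I do not anticipate a genuine obstacle here—the only points requiring care are fixing the convention in the partial-trace/swap identities so that the $\rho^2$ terms land in the right place, and justifying that the subleading spectral mass $\sum_{j\ge2}\lambda_j^2$ is $\mO(x^2)$ rather than $\mO(x)$, which is precisely what upgrades a mere reduction of the mixedness into an exact leading-order halving.
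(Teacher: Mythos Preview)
Your proposal is correct and follows essentially the same route as the paper: the same Hadamard--controlled-swap--Hadamard circuit, the same identification of the success branch with the symmetric projector $(\1+\hat S)/2$, the same partial-trace computation yielding $\rho'\propto\rho+\rho^2$, and the same monotonicity and small-$x$ estimates (the paper phrases the bound $\sum_{j\ge2}\lambda_j^2\le x^2$ as an instance of H\"older's inequality, but it is the same inequality you use). There is nothing missing or different in substance.
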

\begin{proof}
	Let us denote the space each copy of $\rho$ lives in as $1$ and $2$.
	The main idea of~\cite{Cirac1999_purification} and~\cite{Childs2024purification} is to apply the projector $ \frac{\1_{12}+S_{12}}{2} $, where $ \1_{12} $ is the identity on the total space of two copies and $ S_{12} $ is a swap between the first and the second copies. 
	This operation can be implemented by a linear combination of unitaries utilizing an additional ancilla qubit $ \ket{0}_{A} $~\cite{Long2006LCU, Childs2012LCU}, since both $ \1_{12} $ and $ S_{12} $ are unitaries. 
	To a state $ \dm{0}_{A}\otimes \rho^{\otimes 2} $ appended with ancilla, apply a unitary 
	\begin{align}
		U &= \left(H_{A}\otimes\1_{12}\right)\left(\dm{0}_{A}\otimes \1_{12} + \dm{1}_{A}\otimes S_{12}\right) \left(H_{A}\otimes\1_{12}\right)\nonumber \\&= \left(\dm{0}_{A}+\dm{1}_{A}\right)\otimes\frac{\1_{12}+S_{12}}{2} + \left(\ketbra{1}{0}_{A}+\ketbra{0}{1}_{A}\right)\otimes\frac{\1_{12}-S_{12}}{2} \ ,
	\end{align} 
	by concatenating Hadamard operators $ H_{A} $ acting on the ancilla and a controlled swap operation, applying the swap $ S_{12} $ between two copies conditioned on the ancilla state $ \ket{1} $.
	After this operation, the ancilla plus system state becomes
	\begin{align}
		U\left(\dm{0}_{A}\otimes \rho^{\otimes 2}\right)U^{\dagger} = &\dm{0}_{A}\otimes \left(\frac{\1_{12}+S_{12}}{2}\right)\rho^{\otimes 2}\left(\frac{\1_{12}+S_{12}}{2}\right) + \dm{1}_{A}\otimes \left(\frac{\1_{12}-S_{12}}{2}\right)\rho^{\otimes 2}\left(\frac{\1_{12}-S_{12}}{2}\right)\nonumber\\
		&+\left[ \ketbra{1}{0}_{A}\otimes\left(\frac{\1_{12}-S_{12}}{2}\right)\rho^{\otimes 2}\left(\frac{\1_{12}+S_{12}}{2}\right)+ \mathrm{h.c.}\right]\ .
	\end{align}
	By measuring the ancilla in $ \{\ket{0}_{A},\ket{1}_{A}\} $ basis, we effectively achieve (unnormalized) system states 
	\begin{align}\label{eq:postmeasurement_states}
		\begin{cases}
			\frac{1}{4}[2(\rho\otimes\rho) + S_{12}(\rho\otimes\rho) +(\rho\otimes\rho)S_{12}]   ,\quad &\text{if $ \ket{0}_{A} $ is measured,}\\
			\frac{1}{4}[2(\rho\otimes\rho) - S_{12}(\rho\otimes\rho) -(\rho\otimes\rho)S_{12}] \ , \quad &\text{if $ \ket{1}_{A} $ is measured,}
		\end{cases}
	\end{align}
	with probability given by their traces $\frac{1}{2}(1\pm\sum_{i}\lambda_{i}^{2})$.
	
	If the second system is traced out from Eq.~\eqref{eq:postmeasurement_states}, the remaining state becomes
	\begin{align}
		\rho' \propto
		\begin{cases}
			\frac{1}{2}(\rho + \rho^{2})   ,\quad &\text{if $ \ket{0}_{A} $ is measured,}\\
			\frac{1}{2}(\rho - \rho^{2}) \ , \quad &\text{if $ \ket{1}_{A} $ is measured.}
		\end{cases}
	\end{align}
	For our purpose, the protocol is successful when $ \ket{0}_{A} $ is measured, in which case the normalized density matrix
	\begin{align}\label{key}
		\rho' = \frac{\rho+\rho^{2}}{1+\Tr[\rho^{2}]} = \sum_{i}\frac{\lambda_{i}+\lambda_{i}^{2}}{1+\sum_{j}\lambda_{j}^{2}}\dm{v_{i}} \eqcolon \sum_{i}\lambda'_{i}\dm{v_{i}}
	\end{align}
	remains, and $\lambda'_{1} = \frac{1+\lambda_{1}}{1+\sum_{i}\lambda_{i}^{2}}\lambda_{1}$.
	From $\lambda_{1} = \lambda_{1}\sum_{i}\lambda_{i} > \sum_{i}\lambda_{i}^{2}$, we obtain $\lambda'_{1}>\lambda_{1}$ whenever $\lambda_{1}$ is the largest eigenvalue of $\rho$. 
	
	Finally, we consider the case $(1-\lambda_{1})\ll 1$.
	First, observe that 
	\begin{align}
		\frac{x'}{x} = \frac{1-\lambda'_{1}}{1-\lambda_{1}}= \frac{1+\sum_{i}\lambda_{i}^{2} - \lambda_{1}-\lambda_{1}^{2}}{(1+\sum_{i}\lambda_{i}^{2})(1-\lambda_{1})} = \frac{1- \lambda_{1}+\sum_{i>1}\lambda_{i}^{2} }{(1+\sum_{i}\lambda_{i}^{2})(1-\lambda_{1})} \leq  \frac{1- \lambda_{1}+(1-\lambda_{1})^{2} }{(1+\sum_{i}\lambda_{i}^{2})(1-\lambda_{1})} = \frac{1+(1-\lambda_{1})}{1+\sum_{i}\lambda_{i}^{2}}, 
	\end{align}
	where we used the Hölder's inequality $\sum_{k > 1}\abs{y_{k}z_{k}}\leq (\sum_{k>1} \abs{y_{k}}^{p})^{1/p} (\sum_{k>1} \abs{z_{k}}^{q})^{1/q}$ for vectors $y_{k} = z_{k} = \lambda_{k}$ and $p = q = 2$, which gives $\sum_{i>1}\lambda_{i}^{2}\leq (\sum_{i>1}\lambda_{i})^{2} = (1-\lambda_{1})^{2}$.
	On the other hand, by using $0 \leq \sum_{i>1}\lambda_{i}^{2}$, the ratio becomes
	\begin{align}\label{eq:x_ratio_upperbound}
		\frac{x'}{x}  \leq \frac{1+(1-\lambda_{1})}{1+\lambda_{1}^{2}} =  \frac{1+x}{2-2x + x^{2}} =  \frac{1}{2} + \mO(x). 
	\end{align}
	Second, the success probability $\frac{1}{2}(1+\sum_{i}\lambda_{i}^{2})\geq \frac{1}{2}(1+\lambda_{1}^{2})\geq\lambda_{1} = 1-x$, which concludes the proof.
\end{proof}

Using IMR protocol in Lemma~\ref{lemma:purification_genearlised}, we can efficiently combat the accumulation of the non-unitary error $\eta$, by spending more copies after each QDP implementation of $\Upsirecapp{\rho}$ to reduce the mixedness of the resulting state. 
Furthermore, it is guaranteed that the state can be arbitrarily close to a pure state by repeating the IMR subroutine. 
$R$ rounds of IMR protocl form a mixedness reduction subroutine that suppresses the mixedness parameter $x \coloneq 1 - \lambda_{1}$ under a certain threshold.
\begin{lem}[QDP mixedness reduction subroutine]\label{lemma:mixedness_reduction_subroutine}
	Let $ \rho $ be a density matrix with the largest eigenvalue $\lambda_{1} = 1-x$ for some mixedness parameter $x \in [0,\frac{1}{3}]$, corresponding to the eigenvector $\ket{v_{1}}$. 
	Given any maximum tolerable failure probability $q_\mathrm{th}\geq 0$, one can choose any parameter $\mathsf{g}\in\mathbb{R}$ and prepare $M$ copies of $ \rho' $, such that:
	\begin{enumerate}
		\item $\rho'$ has the largest eigenvalue $\lambda'_{1} \geq 1-\frac{x}{\mathsf{g}}$, corresponding to the same eigenvector $\ket{v_{1}}$. 
		\item a number of $ R = \mO(\log(\mathsf{g})) $ IMR rounds in Lemma \ref{lemma:purification_genearlised} is used,
		\item a total of $M \times (\frac{2}{c})^{R}$ copies of $ \rho $ is consumed, 
		with $ c\in(0,1) $ satisfying
		\begin{align}\label{eq:c_for_mixedness_reduction}
			c = 1 - x - M^{-\frac{1}{2}}\sqrt{\log\left(\frac{R}{q_{\rm th}}\right)}
		\end{align}
		\item the success probability of the entire subroutine is $q_{\rm succ} \geq 1 - q_{\rm th}$.
	\end{enumerate}
	Note that $M$ must be sufficiently large to guarantee $c>0$.
\end{lem}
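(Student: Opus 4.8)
The plan is to iterate the single-round IMR protocol of Lemma~\ref{lemma:purification_genearlised} in a tree of controlled-swap operations, separating the \emph{deterministic} contraction of the mixedness from the \emph{stochastic} attrition of copies caused by probabilistic success. First I would fix the number of rounds $R$. By Lemma~\ref{lemma:purification_genearlised}, one round sends a state with mixedness $x_k=1-\lambda_1^{(k)}$ to one with $x_{k+1}/x_k\le \frac{1+x_k}{2-2x_k+x_k^2}$, and this ratio is increasing on $[0,\tfrac13]$ with maximal value $\tfrac{12}{13}<1$ attained at $x_k=\tfrac13$. Since the mixedness only decreases along the rounds, every state stays in the admissible range $x_k\le\tfrac13$ and the mixedness contracts by a factor at most $\tfrac{12}{13}$ each time, giving $x_R\le(\tfrac{12}{13})^R x$. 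Choosing $R=\lceil \log\mathsf{g}/\log(13/12)\rceil=\mO(\log\mathsf{g})$ then guarantees $x_R\le x/\mathsf{g}$, i.e.\ $\lambda'_1\ge 1-x/\mathsf{g}$, which is item 1 together with item 2. Moreover, the post-selected output of each round is $\rho'\propto\rho+\rho^2$, a monotone function of $\rho$ that is diagonal in the eigenbasis of $\rho$ and preserves the ordering of eigenvalues; hence the top eigenvector $\ket{v_1}$ is carried through all $R$ rounds unchanged, and the per-round states are deterministic (only the number of copies is random).

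The more delicate part is the resource count (item 3) and the global success probability (item 4). Here I would provision a fixed geometric schedule: let level $k$ (for $k=0,\dots,R$) be required to hold $n_k=(2/c)^{R-k}M$ copies, so that the final level supplies the desired $M$ copies while the root consumes $n_0=(2/c)^R M$ copies of $\rho$. A single round at level $k$ pairs its copies into $\lfloor n_k/2\rfloor$ independent IMR attempts, each succeeding with probability $p_k\ge 1-x_k\ge 1-x$ by Lemma~\ref{lemma:purification_genearlised}; the number of successful outputs is thus lower-bounded stochastically by a binomial variable with mean at least $(n_k/2)(1-x)$. To certify that level $k+1$ receives its quota $n_{k+1}=(c/2)\,n_k$, I would apply Hoeffding's inequality to these $\lfloor n_k/2\rfloor\ge M$ Bernoulli trials: the probability that the empirical success fraction falls below $(1-x)-\delta$ is at most $\exp\!\big(-\Omega(M\delta^2)\big)$. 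Equating this tail to $q_{\rm th}/R$ forces the deviation $\delta=M^{-1/2}\sqrt{\log(R/q_{\rm th})}$, so that the admissible per-round rate is exactly $c=1-x-M^{-1/2}\sqrt{\log(R/q_{\rm th})}$, reproducing Eq.~\eqref{eq:c_for_mixedness_reduction}; the stipulation that $M$ be large enough is precisely what makes $c>0$ and the schedule sensible.

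Finally I would close the induction down the tree and union-bound. Conditioned on level $k$ carrying at least its scheduled $n_k$ copies, the Hoeffding estimate shows level $k+1$ carries at least $n_{k+1}$ copies except with probability $q_{\rm th}/R$; summing the $R$ failure probabilities gives total failure at most $q_{\rm th}$, hence $q_{\rm succ}\ge 1-q_{\rm th}$, establishing item 4, while $n_0=(2/c)^R M$ gives item 3. The main obstacle I anticipate is the probabilistic bookkeeping: the tree of copies is itself random, so the binomial at each level is conditioned on the (random) count produced by the previous level, and one must argue concentration against the fixed target schedule rather than the realized counts. I would neutralize this by exploiting monotonicity---having more copies at level $k$ can only increase the success count at level $k+1$---so that restricting attention to the ``every level meets its quota'' event makes the conditioning harmless and the clean union bound over the $R$ rounds goes through.
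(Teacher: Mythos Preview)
Your proposal is correct and follows essentially the same line as the paper's proof: iterate the single-round IMR contraction to fix $R=\mO(\log\mathsf{g})$, provision a geometric copy schedule $n_k=(2/c)^{R-k}M$, bound each round's attrition by Hoeffding with per-round failure $q_{\rm th}/R$, and union-bound over the $R$ rounds. The only cosmetic differences are that the paper bounds the per-round contraction by $\frac{1+x}{2-2x+x^2}$ evaluated at the initial $x$ rather than at $x=\tfrac13$, and it handles the random-count issue by explicitly discarding surplus copies down to the scheduled $n_k$ instead of invoking monotonicity; both devices are equivalent for the purpose of the lemma.
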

\begin{proof}
	To prove this lemma, we analyze $R$ sequential rounds of the IMR protocol, with initial and final states $\rho$ and $\rho'$. 
	Let us denote the intermediate states generated via the IMR protocols as $\lbrace \chi_j\rbrace_{j=0}^R$, where $\chi_0 = \rho$ and $\chi_R = \rho'$; and let $\lbrace y_j \rbrace_{j=0}^R$ be the mixedness parameter (i.e. $1$ minus the largest eigenvalue) for each $\chi_{j}$ with $y_{0} = x$ and $y_{R} \leq \frac{x}{\mathsf{g}}$. 
	Furthermore, let us denote $M_j$ to be the number of copies of an intermediate state $\chi_j$ generated; hence, we obtain $M_{R} = M$ copies of the desired state $\rho'$ at the end, and aim to show that $M_{0} = M\times (\frac{2}{c})^{R}$ suffices. 
	
	From Eq.~\eqref{eq:x_ratio_upperbound}, the mixedness parameter after $j$ rounds of the IMR protocol becomes $y_{j}\leq \frac{1+y_{j-1}}{2-2y_{j-1}+y_{j-1}^{2}}y_{j-1}$. 
	The final reduction $ \frac{x'}{x} = \frac{y_{R}}{y_{0}} \leq \mathsf{g}^{-1} $ is achievable if $R$ is sufficiently big to satisfy
	\begin{align}
		\frac{y_{R}}{y_{0}} = \prod_{j=1}^{R}\frac{y_{j}}{y_{j-1}} \leq \prod_{j=1}^{R} \frac{1+y_{j-1}}{2-2y_{j-1}+y_{j-1}^{2}} \leq \left(\frac{1+x}{2-2x+x^{2}}\right)^{R} \leq \mathsf{g}^{-1}\ .
	\end{align}
	The second last inequality follows from two facts: i) the factor $\frac{1+y}{2-2y+y^{2}}$ monotonically increases as $y$ increases ii) $y_{j}\leq y_{0}$ for all $j$.
	Since $\log(\frac{1+x}{2-2x+x^{2}})>0$ by the assumption $x\leq \frac{1}{3}$, the number of rounds $R =  \mO(\log(\mathsf{g}))$ is sufficient as claimed. 
	
	The parameter $c$ in the statement of the lemma can be interpreted as the `survival rate' after each round of the IMR protocol.
	More specifically, the $j$-th round of the protocol is successful if at least $ \frac{c}{2}M_{j-1} $ output states $ \chi_{j} $ is prepared from $ M_{j-1} $ copies of $ \chi_{j-1} $.
	If successful, we discard all the surplus output copies and set $ M_{j} = \frac{c}{2}M_{j-1} $.
	If not, we declare that the whole subroutine has failed.
	Hence, if the entire subroutine succeeds, the final number of copies $ M = M_{R} $ is related to the initial number of copies as $M_{0} = (2/c)^{R} M$. 
	
	Finally, we estimate the success probability $ q_{\rm succ} $ given $ R, M, c, x $ and require it to be lower bounded with $ (1 - q_{\rm th}) $.
	The failure probability of each round will be bounded using Hoeffding's inequality. 
	From Lemma~\ref{lemma:purification_genearlised}, the success probability of preparing one copy of $ \chi_{j} $ from a pair $ \chi_{j-1}^{\otimes 2} $ has a lower bound $p(y_{j-1}) \geq 1-y_{j-1}$.
	At each round, $ \frac{M_{j-1}}{2} $ independent trials of this IMR protocol are conducted.
	The probability of more than $ c\frac{M_{j-1}}{2} $ attempts succeeds, i.e. the success probability of a round, is
	\begin{align}
		q_{r}^{(j)} = 1 - F\left(\left\lceil\frac{c M_{j-1}}{2}-1\right\rceil;\frac{M_{j-1}}{2},p(y_{j-1})\right)\geq   1 - F\left(\frac{c M_{j-1}}{2};\frac{M_{j-1}}{2},1-y_{j-1}\right)\ ,
	\end{align}
	where $ F(k;n,q) $ is the cumulative binomial distribution function defined as
	\begin{align}
		F(c n;n,p) = \sum_{i=0}^{\lfloor c n\rfloor}\binom{n}{i}p^{i}(1-p)^{n-i}\ .
	\end{align}
	Furthermore, the Hoeffding's inequality gives $F(c n;n,p) \leq e^{-2n(p-c)^{2}}$, and consequently sets the bound
	\begin{align}
		q_{r}^{(j)} \geq 1 - e^{-M_{j}\left(1-y_{j-1} - c\right)^{2}}\ .
	\end{align}
	The success probability of an entire subroutine can then be bounded as
	\begin{align}
		q_{\rm succ} &= \prod_{j = 1}^{R}q_{r}^{(j)} \geq \prod_{j=1}^{R}\left(1 - e^{-M_{j}\left(1-y_{j-1} - c\right)^{2}}\right) >1 - \sum_{j=1}^{R}e^{-M_{j}\left(1-y_{j-1} - c\right)^{2}}\ ,
	\end{align}
	where the last inequality uses $ \prod_{j}(1-q_{j})>1-\sum_{j}q_{j} $ that holds when $ q_{j}\in(0,1) $ for all $ j $.
	Moreover, by recalling $ x\geq y_{j-1} $ and $ M\leq M_{j} $ for all $ j $, the final bound
	\begin{align}\label{eq:qsucc_bound_c}
		q_{\rm succ}> 1 - Re^{-M\left(1-x- c\right)^{2}} 
	\end{align}
	is obtained. 
	The prescribed failure threshold $ q_{\rm th} $ holds when 
	$e^{-M\left(1-x - c\right)^{2}} = \frac{q_{\rm th}}{R}$ ,
	or equivalently, when $M$ is sufficiently large and Eq.~\eqref{eq:c_for_mixedness_reduction} is true. 
\end{proof}
During the earlier recursions, $M$ is always sufficiently large to guarantee that $c$ is well separated from $0$. 
For later recursions, in particular for the last recursion where only $M = 1$ is needed, Eq.~\eqref{eq:c_for_mixedness_reduction} can become close to zero, or even negative. 
To guarantee positive and non-vanishing $c$, we allow some redundancy of the final state $\rho_{N}$ and prepare $M>1$ copies of it.  

Now we demonstrate the advantage of using mixedness reduction subroutines for pure state QDP. 
\begin{thm}[Efficient QDP for pure states]\label{thm:qdp_pure}
	Suppose that a quantum recursion $\Upsirec{\psi}$ satisfies fast spectral convergence for some pure initial state $\psi_0 = \dm{\psi_0}$.
	Then for any $p_{\mathrm{th}}>0$, the QDP implementation of $N$ iterations prepares a state $\rho'_{N}$ with success probability $1-p_{\rm th}$, such that the distance of $\rho'_{N}$ to the pure fixed-point is 
	\begin{align}
		\delta_{\rho'_{N}} \leq h^{N}(\delta_{0}) + \epsilon ,
	\end{align}
	and $0<\epsilon<\frac{2}{3}$.
	This implementation requires $\epsilon^{-N}e^{\mO(N)}\log(p_{\rm th}^{-1})$ copies of $\psi_{0}$ and a circuit with depth $\mO(N\epsilon^{-1})$. 
\end{thm}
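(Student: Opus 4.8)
The plan is to follow the telescoping error analysis of Theorem~\ref{thm:qdp}, but to interleave each approximate recursion step with the interferential mixedness reduction (IMR) of Lemma~\ref{lemma:purification_genearlised}. Since the fixed-point is now pure, IMR can actively steer the spectrum of the memory state back toward $(1,0,\dots,0)$ after every step, which---crucially---prevents the non-unitary error from accumulating. Concretely, I would define each QDP iteration as a two-stage map: first apply the polynomial-function-exponentiation channel $\Upsirecapp{\rho'_{n-1}}$ of Lemma~\ref{lem:PFE_technical} with per-step non-unitary error $\eta$, then apply the mixedness-reduction subroutine of Lemma~\ref{lemma:mixedness_reduction_subroutine} with a fixed reduction factor $\mathsf{g}>1$ to restore near-purity of the output $\rho'_n$.

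The analytical core is to show that the mixedness stays uniformly small. Writing $x_n = 1-\lambda_1(\rho'_n)$, the approximate-unitary stage is $\eta$-close in trace norm to a genuine unitary channel, which preserves the spectrum; by Weyl's inequality the eigenvalues move by at most $\mO(\eta)$, so the pre-IMR mixedness is at most $x_{n-1}+\mO(\eta)$. Lemma~\ref{lemma:mixedness_reduction_subroutine} then contracts it by $\mathsf{g}$, giving the recursion $x_n \le (x_{n-1}+\mO(\eta))/\mathsf{g}$; for any constant $\mathsf{g}>1$ this is a contraction with an $\mO(\eta)$ source, so $x_n=\mO(\eta)$ uniformly in $n$, in sharp contrast to the linear growth $\varepsilon_n\le n\eta$ of Theorem~\ref{thm:qdp}. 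With the mixedness pinned at $\mO(\eta)$, I would track the dominant eigenvector $\ket{u_n}$ of $\rho'_n$, noting that $\frac12\|\rho'_n-\dm{u_n}\|_1 = x_n = \mO(\eta)$, and that the large spectral gap of a near-pure state makes $\ket{u_n}$ deviate from the exactly unitarily-evolved vector by only $\mO(\eta)$ per step rather than cumulatively. Feeding this into the fast-spectral-convergence inequality together with the mean-value bound Eq.~\eqref{eq:MVT} yields $\delta_{\dm{u_N}}\le h(\delta_{\dm{u_{N-1}}})+\mO(\eta)$, which unrolls via Eq.~\eqref{eq:h_property_app} to $h^N(\delta_0)+\mO(\eta)$; adding back $x_N$ gives $\delta_{\rho'_N}\le h^N(\delta_0)+\mO(\eta)$, and choosing $\eta=\mO(\epsilon)$ closes the accuracy claim.

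For the resource counts I would track depth and width separately. Because errors no longer accumulate, $\eta=\mO(\epsilon)$ (rather than $\mO(\epsilon/N)$) suffices, so each step costs depth $\mO(\eta^{-1})=\mO(\epsilon^{-1})$ for the polynomial exponentiation plus $\mO(\log\mathsf{g})=\mO(1)$ for IMR; over $N$ steps this is depth $\mO(N\epsilon^{-1})$. For the width, preparing one copy of $\rho'_n$ consumes $(M+1)$ copies of $\rho'_{n-1}$ with $M=\mO(\epsilon^{-1})$ memory-usage queries, multiplied by the IMR branching factor $(2/c)^R=e^{\mO(1)}$; iterating over $N$ steps gives $[\epsilon^{-1}e^{\mO(1)}]^N=\epsilon^{-N}e^{\mO(N)}$ copies of $\psi_0$.

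The remaining ingredient, and the step I expect to be the main obstacle, is the probabilistic bookkeeping together with verifying that IMR steers toward the \emph{correct} pure state. Since IMR is post-selected, I would split the failure budget $p_{\rm th}$ across the $\mO(N)$ subroutines by a union bound, fix each $q_{\rm th}$, and use Eq.~\eqref{eq:c_for_mixedness_reduction} to keep the survival rate $c$ bounded away from $0$; this forces $M=\Omega(\log(p_{\rm th}^{-1}))$ and supplies the overall $\log(p_{\rm th}^{-1})$ width factor and success probability $1-p_{\rm th}$. The delicate points are that Lemma~\ref{lemma:purification_genearlised} only amplifies whichever eigenvector is already dominant, so one must argue that for small enough $\eta$ the dominant eigenvector of $\rho'_n$ never gets overtaken and stays aligned with the ideal trajectory, that $x_n$ remains within the validity window $[0,\tfrac13]$ of Lemma~\ref{lemma:mixedness_reduction_subroutine} throughout, and that the final step---where only $M=1$ copy would be demanded and $c$ can vanish---is handled by carrying extra redundancy so that $c$ stays positive.
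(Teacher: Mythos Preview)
Your proposal is correct and follows essentially the same approach as the paper's proof: interleave the approximate recursion $\Upsirecapp{\rho'_{n-1}}$ with the IMR subroutine, track the dominant eigenvector $\ket{u_n}$ (the paper's $\ket{\tilde\psi_n}$) and show its distance to the fixed-point obeys $\delta_{\dm{u_{n+1}}}\le h(\delta_{\dm{u_n}})+\mO(\eta)$ via the same chain of triangle inequalities and Eq.~\eqref{eq:MVT}, then unroll using Eq.~\eqref{eq:h_property_app} and set $\eta=\mO(\epsilon)$. Your resource accounting (depth $\mO(N\epsilon^{-1})$, width $\epsilon^{-N}e^{\mO(N)}$ times a $\log(p_{\rm th}^{-1})$ redundancy at the last step, union bound with $q_{\rm th}=p_{\rm th}/N$, and the care that $c$ stays bounded away from zero) also matches the paper; the only cosmetic difference is that the paper fixes the post-IMR mixedness threshold $\varepsilon$ first and deduces $\mathsf{g}=2$, whereas you phrase it as a contracting recursion $x_n\le(x_{n-1}+\mO(\eta))/\mathsf{g}$, but these are equivalent.
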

\begin{proof}
	The target state, i.e. the fixed-point, is again denoted as $\tau$ and the distance from it as $\frac{1}{2}\lVert \tau - \rho \rVert_{1} \eqcolon \delta_{\rho}$.
	Eqs.~\eqref{eq:QDP_assumption_contraction}--\eqref{eq:PFE_channel_eta} still holds, yet the recursion starting from $\rho'_{0} = \psi_{0}$ is now defined as
	\begin{align}
		\rho'_{n} = \mathrm{IMR}\circ\Upsirecapp{\rho'_{n-1}}(\rho'_{n-1}),
	\end{align}
	where $\mathrm{IMR}$ denotes the $R$ rounds of the IMR subroutine as in Lemma~\ref{lemma:mixedness_reduction_subroutine}. 
	Suppose that $R$ is sufficiently large to guarantee that $\rho_{n}$ has the largest eigenvalue $1-x$ with a mixedness parameter $x<\varepsilon$ for some small $\varepsilon$ and the corresponding eigenvector $\ket{\tilde{\psi}_{n}}$.
	
	Then we can write
	\begin{align}\label{eq:delta_rho_prime_tilde_psi}
		\delta_{\rho'_{n}} \leq \frac{1}{2}\left\lVert \tau - \tilde{\psi}_{n} \right\rVert_{1} + \frac{1}{2}\left\lVert \tilde{\psi}_{n} - \rho'_{n} \right\rVert_{1} \leq  \frac{1}{2}\left\lVert \tau - \tilde{\psi}_{n} \right\rVert_{1} + \varepsilon = \delta_{\tilde{\psi}_{n}} + \varepsilon\ ,
	\end{align}
	where triangle inequality and the assumption on the mixedness of $\rho'_{n}$ give rise to the first and the second inequality, respectively. 
	Then the important quantity becomes $\delta_{\tilde{\psi}_{n}}$, which evolves as
	\begin{align}
		\delta_{\tilde{\psi}_{n+1}}= \frac{1}{2}\left\lVert \tau - \tilde{\psi}_{n+1} \right\rVert_{1} \leq \frac{1}{2}\left\lVert \tau - \Upsirecch{\tilde{\psi}_{n}}(\tilde{\psi}_{n}) \right\rVert_{1} + \frac{1}{2}\left\lVert \tilde{\psi}_{n+1} - \Upsirecch{\tilde{\psi}_{n}}(\tilde{\psi}_{n}) \right\rVert_{1} \leq h(\delta_{\tilde{\psi}_{n}}) + \frac{1}{2}\left\lVert \tilde{\psi}_{n+1} - \Upsirecch{\tilde{\psi}_{n}}(\tilde{\psi}_{n}) \right\rVert_{1}, 
	\end{align}
	where the second inequality comes from Eq.~\eqref{eq:QDP_assumption_contraction}.
	The second term can be further bounded by the triangle inequality 
	\begin{align}\label{eq:psi_tilde_evolve_triangle}
		\frac{1}{2}\left\lVert \tilde{\psi}_{n+1} - \Upsirecch{\tilde{\psi}_{n}}(\tilde{\psi}_{n}) \right\rVert_{1} \leq \frac{1}{2}\left\lVert \tilde{\psi}_{n+1} - \Upsirecapp{\rho'_{n}}(\rho'_{n}) \right\rVert_{1} + \frac{1}{2}\left\lVert \Upsirecapp{\rho'_{n}}(\rho'_{n}) - \Upsirecch{\tilde{\psi}_{n}}(\tilde{\psi}_{n})\right\rVert_{1}\ . 
	\end{align}
	Note that IMR subroutines do not change the eigenvector corresponding to the largest eigenvalue of the state; hence $\tilde{\psi}_{n+1}$ is the eigenvector of $\Upsirecapp{\rho'_{n}}(\rho'_{n})$ corresponding to the largest eigenvalue of it. 
	The first term of the bound Eq.~\eqref{eq:psi_tilde_evolve_triangle} is then the mixedness parameter of $\Upsirecapp{\rho'_{n}}(\rho'_{n})$.
	Since $\rho'_{n}$ has the mixedness parameter smaller than $\varepsilon$ and $\Upsirecapp{\rho'_{n}}$ is only $\eta$ different from the unitary operator $\Upsirecch{\rho'_{n}}$, we obtain $\frac{1}{2}\lVert \tilde{\psi}_{n+1} - \Upsirecapp{\rho'_{n}}(\rho'_{n}) \rVert_{1} \leq \varepsilon + \eta$.
	The second term of the bound Eq.~\eqref{eq:psi_tilde_evolve_triangle} can be further expanded as
	\begin{align}
		\frac{1}{2}\left\lVert \Upsirecapp{\rho'_{n}}(\rho'_{n}) - \Upsirecch{\tilde{\psi}_{n}}(\tilde{\psi}_{n})\right\rVert_{1} &\leq \frac{1}{2}\left\lVert \Upsirecapp{\rho'_{n}}(\rho'_{n}) - \Upsirecapp{\rho'_{n}}(\tilde{\psi}_{n}) \right\rVert_{1} + \frac{1}{2}\left\lVert \Upsirecapp{\rho'_{n}}(\tilde{\psi}_{n}) - \Upsirecch{\rho'_{n}}(\tilde{\psi}_{n}) \right\rVert_{1} + \frac{1}{2}\left\lVert \Upsirecch{\rho'_{n}}(\tilde{\psi}_{n}) - \Upsirecch{\tilde{\psi}_{n}}(\tilde{\psi}_{n}) \right\rVert_{1} \nonumber \\
		&\leq \frac{1}{2}\left\lVert \rho'_{n} - \tilde{\psi}_{n} \right\rVert_{1} + \frac{1}{2}\left\lVert \Upsirecapp{\rho'_{n}} - \Upsirecch{\rho'_{n}} \right\rVert_{\Tr} + \frac{1}{2}\left\lVert \Upsirecch{\rho'_{n}} - \Upsirecch{\tilde{\psi}_{n}} \right\rVert_{\Tr} \leq C\varepsilon + \eta\ ,
	\end{align}
	for some constant $C$.
	The second and third inequalities follow from data processing inequality, definition of the channel distance, and Eqs.~\eqref{eq:PFE_channel_eta} and~\eqref{eq:MVT}.
	Combining everything, 
	\begin{align}
		\delta_{\tilde{\psi}_{n+1}} \leq h(\delta_{\tilde{\psi}_{n}}) + (C+1)\varepsilon + 2\eta\ .
	\end{align}
	Using Eq.~\eqref{eq:h_property_app}, and putting $z =  (C+1)\varepsilon + 2\eta$,
	\begin{align}
		\delta_{\tilde{\psi}_{N}} \leq h(\delta_{\tilde{\psi}_{N-1}}) + z \leq h(h(\delta_{\tilde{\psi}_{N-2}}) + z) + z \leq h^{2}(\delta_{\tilde{\psi}_{N-2}}) + rz +z \leq \cdots \leq h^{N}(\delta_{0}) + \sum_{k = 0}^{N-1}r^{k}z = h^{N}(\delta_{0}) + \mO(\varepsilon) + \mO(\eta).
	\end{align}
	From Eq.~\eqref{eq:delta_rho_prime_tilde_psi}, 
	\begin{align}
		\delta_{\rho'_{N}} \leq h^{N}(\delta_{0}) + \mO(\varepsilon) + \mO(\eta).
	\end{align}
	
	We can get a desired result $	\delta_{\rho'_{N}} \leq h^{N}(\delta_{0}) + \epsilon $ by setting $\eta = \mO(\epsilon)$ and $\varepsilon = \mO(\epsilon)$.
	For simplicity, let us choose $\eta = \varepsilon$. 
	$R$ rounds of IMR need to reduce the mixedness parameter of $\Upsirecapp{\rho'_{n-1}}(\rho'_{n-1})$ to that of $\rho'_{n}$ each of which is lower bounded by $2\varepsilon$ and $\varepsilon$, i.e. $\mathsf{g} = 2$.
	Since $\varepsilon = \mO(\epsilon)$ is small, $R = \mO(1)$ suffices. 
	The requirement that $\eta = \mO(\epsilon)$ can be met by  using $\mO(\epsilon^{-1})$ depth circuit for each $\Upsirecapp{\rho'_{n}}$ implementation. 
	Hence, the overall circuit depth for $N$ recursions will be $\mO(N\epsilon^{-1})$.
	
	Finally, we derive the number of initial state copies needed for the algorithm. 
	Let us denote $\mathtt{I}_{n}$ and $\mathtt{O}_{n}$ to be the number of $\rho'_{n-1}$ and $\Upsirecapp{\rho'_{n-1}}(\rho'_{n-1})$ copies before and after $n$'th recursion step. 
	We require $\mathtt{I}_{N+1}\geq1$ and $\mathtt{I}_{1}$ will be the number of $\psivec{0}$ copies we need.
	As the implementation of $\Upsirecapp{\rho'_{n-1}}$ requires $\mO(\eta^{-1}) = \mO(\epsilon^{-1})$ copies, 
	\begin{align}
		\mathtt{I}_{n} = \mO(\epsilon^{-1})\mathtt{O}_{n}.
	\end{align}

	$\mathtt{I}_{n+1}$ and $\mathtt{O}_{n}$ are related by the number of mixedness reduction rounds $R$ and survival rate $c$ as
	\begin{align}\label{eq:On_in_QDPpureThm}
		\mathtt{O}_{n} = \left(\frac{2}{c}\right)^{R}\mathtt{I}_{n+1}.
	\end{align}
	$R = \mO(1)$, while $c$ can be determined by the success probability $q_{\rm succ}$ of $R$ round mixedness reduction subroutines using Eq.~\eqref{eq:c_for_mixedness_reduction} with $M = \mathtt{I}_{n+1}$. 
	Assume that we impose the success probability $q_{\rm succ}$ to be higher than $1 - q_{\rm th}$ for some threshold value $ q_{\rm th} $. 
	The probability of all $N$ mixedness reduction protocols to be successful is
	\begin{align}\label{eq:pth_in_QDPpureThm}
		p_{\rm succ} = q_{\rm succ}^{N} \geq \left(1 - q_{\rm th}\right)^{N} \geq 1 - Nq_{\rm th}\ ,
	\end{align}
	where the second inequality follows from the Taylor expansion. 
	Thus setting $ q_{\rm th}= \frac{p_{\rm th}}{N} $ ensures that $ p_{\rm succ} \geq 1 - p_{\rm th} $ as required. 
	
	Now we fix the value of $c$.
	Recalling Eq.~\eqref{eq:c_for_mixedness_reduction} and $\varepsilon < \frac{\epsilon}{2} < \frac{1}{3}$, we have
	\begin{align}\label{eq:fix_c}
		\mathtt{I}_{N+1}^{-\frac{1}{2}}\sqrt{\log(\frac{R}{q_{\rm th}})} < \frac{1}{6} \quad \Rightarrow\quad c = \frac{1}{2},
	\end{align}
	or $\mathtt{I}_{N+1} = \mO(\log(p_{\rm th}^{-1}N))$ implies $c = \frac{1}{2}$.
	Combining everything, 
	\begin{align}
		\mathtt{I}_{n} = \epsilon^{-1}e^{\mO(1)}\mathtt{I}_{n+1}\ ,
	\end{align}
	and therefore
	\begin{align}
		\mathtt{I}_{1} = \epsilon^{-N}e^{\mO(N)}\mathtt{I}_{N+1} = \epsilon^{-N}e^{\mO(N)}\mO(\log(p_{\rm th}))\ ,
	\end{align}
	which concludes the proof. 
\end{proof}

\section{Analysis of the nested fixed-point Grover search}\label{app:Grover}

In this section, we delineate the entire protocol for the QDP implementation of nested fixed-point Grover search recursions, using it as an illustrative example. 
The exact recursion is analyzed in Sec.~\ref{appendix:nested} with the explicit formula giving the distance between the resulting state and the target state after each recursion.
QDP implementation of the recursion is carried out by combining two subroutines: approximation of the recursion unitary using DME memory-usage queries and the interferential mixedness reduction (IMR) subroutine that maintains the mixedness of the instruction states below some threshold.
The first subroutine is analysed in detail in Sec.~\ref{app:QDP_NFGS_convergence}.
In the rest of the section, we calculate the total cost of this implementation, namely the depth of the circuit and the number of copies needed, and show that the final state can be made arbitrarily close to the desired final state.

\subsection{Exact implementation of nested fixed-point Grover search}\label{appendix:nested}
The analysis of the nested fixed-point Grover search, introduced in Ref.~\cite{Yoder2014Grover}, involves Chebyshev polynomials of the first kind
\begin{align}\label{eq:Chebyshev_poly}
	T_{m}(x) = 
	\begin{cases}
		\cos\bigl(m\arccos(x)\bigr) , &\abs{x}\leq 1\ , \\
		\cosh\bigl(m\arcosh(x)\bigr),& \abs{x} >1\ .
	\end{cases}
\end{align}
We observe that if $ \abs{x} \leq 1 $ then
$ \abs{T_{m}(x)} \leq 1 $ and $ T_{l}(T_{m}(x)) = T_{lm}(x)$ .
Choosing $l = m^{-1}$ gives a special case
\begin{align}\label{eq:Chebyshev_multiplication}
	T_{m^{-1}}\left(T_{m}(x)\right) = T_1(x)= x,
\end{align}
and we will also use the property that $T_{m}(1) = 1$.

We begin with $ n = 0 $, which is to specify that we are given the initial state $ \psivec{0} $, and target state $\ket{\tau}$. 
In the $ n $th step of the iteration, we apply a composite unitary operation with $L$ memory-calls
\begin{align}\label{eq:Gammaop_redef}
	\Gammaop{L}{\psi_{n-1}} = \Eop{\alpha_{L}}{\psi_{n-1}}\Eop{\beta_{L}}{\tau}\cdots  \Eop{\alpha_{1}}{\psi_{n-1}}\Eop{\beta_{1}}{\tau}
\end{align}
to the iterated state $ \psivec{n-1} $ obtaining $\psivec{n} = \Gammaop{L}{\psi_{n-1}}\psivec{n-1}$.
The composite unitary operation in Eq.~\eqref{eq:Gammaop_redef} consists of partial reflection unitaries defined as
\begin{align}\label{key}
	\Eop{s}{\psi} \coloneq \1 - (1-e^{-is})\dm{\psi}\ ,
\end{align}
with angle $ s $, where we have adapted the notation $ \psi = \dm{\psi} $ and $ \tau = \dm{\tau} $ for pure states. 
When $ s = \pi $, we obtain the usual reflection unitary around the state $ \ket{\psi} $. 
In Eq.~\eqref{eq:Gammaop_redef}, following in the steps of ~\cite{Yoder2014Grover}, we consider $ L $ partial reflections around both $ \ket{\tau} $ and $ \psivec{n-1} $ with angles
\begin{align}\label{eq:angles}
	\alpha_{l}(q) = -\beta_{L-l+1}(q) = 2\cot^{-1}\left(\tan\left(\frac{2\pi l}{2L+1}\right)\sqrt{1-\left[T_{1/(2L+1)}(q)\right]^{-2}}\right)\ ,
\end{align}
defined with some number $ q $.
The choice of the parameter $ q $ determines the final distance to the target state as~\cite{Yoder2014Grover}
\begin{align}\label{eq:distance_recursion_1}
	\frac{1}{2}\left\lVert \tau - \Gammaop{L}{\psi}(\psi)\left(\Gammaop{L}{\psi}\right)^{\dagger}\right\rVert_{1} = qT_{2L+1}\left(T_{1/(2L+1)}(q^{-1})\frac{1}{2}\lVert \tau - \psi\rVert_{1}\right) \ .
\end{align}

\begin{lem}\label{lem:NFGS_good_parameter_q}
	Suppose that the parameter $q$ in Eq.~\eqref{eq:angles} is chosen to be $q = \delta_{n}$ for the recursion unitary $\Gammaop{L}{\psi_{n-1}}$ at any $n$, where 
	\begin{align}\label{eq:modified_step_error_app}
		\delta_{n} = \sech\left((2L+1)\arcsech\left(\delta_{n-1}\right)\right)\ ,\quad \forall n>0\ ,
	\end{align}
	starting from the base $ \delta_{0} = \frac{1}{2}\lVert \tau - \psi_{0}\rVert_{1} $.
	Then 
		$\frac{1}{2}\left\lVert \tau - \psi_{n}\right\rVert_{1} =  \delta_{n}$
		for all $n$, where $\psi_{n}$ is recursively defined as $\psivec{n} = \Gammaop{L}{\psi_{n-1}}\psivec{n-1}$.
	\end{lem}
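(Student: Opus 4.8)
The plan is to prove the identity by induction on $n$, using the single-step distance formula Eq.~\eqref{eq:distance_recursion_1} as the engine and the Chebyshev identities recorded in Eqs.~\eqref{eq:Chebyshev_poly}--\eqref{eq:Chebyshev_multiplication} to supply the algebra. The base case $n=0$ is immediate, since $\delta_0 = \frac{1}{2}\lVert\tau-\psi_0\rVert_1$ by definition.

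For the inductive step I would assume $\frac{1}{2}\lVert\tau-\psi_{n-1}\rVert_1 = \delta_{n-1}$ and apply Eq.~\eqref{eq:distance_recursion_1} to $\psi=\psi_{n-1}$ with the prescribed self-referential choice $q=\delta_n$. This gives
\begin{align}
	\frac{1}{2}\lVert\tau-\psi_n\rVert_1 = \delta_n\, T_{2L+1}\!\left(T_{1/(2L+1)}(\delta_n^{-1})\,\delta_{n-1}\right),
\end{align}
so the whole claim collapses to showing that the outer Chebyshev polynomial is evaluated at exactly $1$, i.e. that $T_{1/(2L+1)}(\delta_n^{-1})\,\delta_{n-1}=1$; then $T_{2L+1}(1)=1$ closes the induction.

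The crux is to recognize the recursion Eq.~\eqref{eq:modified_step_error_app} as a disguised Chebyshev statement. Since $\delta_{n-1}\in[0,1]$ forces $\delta_{n-1}^{-1}\ge 1$, I would rewrite $\sech$ and $\arcsech$ through $\cosh$ and $\arcosh$ to obtain $\delta_n^{-1} = \cosh((2L+1)\arcosh(\delta_{n-1}^{-1})) = T_{2L+1}(\delta_{n-1}^{-1})$, invoking the hyperbolic branch of Eq.~\eqref{eq:Chebyshev_poly} (licensed because the argument exceeds $1$). Feeding this into the composition property Eq.~\eqref{eq:Chebyshev_multiplication} in the form $T_{1/(2L+1)}(T_{2L+1}(\delta_{n-1}^{-1}))=\delta_{n-1}^{-1}$ yields $T_{1/(2L+1)}(\delta_n^{-1})=\delta_{n-1}^{-1}$, and hence $T_{1/(2L+1)}(\delta_n^{-1})\,\delta_{n-1}=1$ as needed.

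I expect the only genuine obstacle to be the domain bookkeeping: one must verify that every Chebyshev argument stays in the regime $\lvert x\rvert\ge 1$ along the whole recursion, so that the hyperbolic formula and the cancellation $\arcosh(\cosh y)=y$ for $y\ge 0$ underlying Eq.~\eqref{eq:Chebyshev_multiplication} remain valid. This follows inductively from $0\le\delta_n\le 1$, which is guaranteed because $\cosh\ge 1$ makes $\delta_n^{-1}\ge 1$ at every step. Beyond this, the argument is pure substitution and requires no new estimates.
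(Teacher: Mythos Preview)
Your proposal is correct and follows essentially the same approach as the paper: induction on $n$, rewriting the $\sech/\arcsech$ recursion as $\delta_n^{-1}=T_{2L+1}(\delta_{n-1}^{-1})$, applying the composition identity $T_{1/(2L+1)}\circ T_{2L+1}=\id$ to reduce the argument of $T_{2L+1}$ in Eq.~\eqref{eq:distance_recursion_1} to $1$, and closing with $T_{2L+1}(1)=1$. Your added remark on domain bookkeeping is a welcome clarification but does not alter the structure of the argument.
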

	\begin{proof}
		We first observe that Eq.~\eqref{eq:modified_step_error_app} is equivalent to
		\begin{align}\label{eq:delta_n_exact}
			\delta_{n}^{-1}= T_{2L+1}\left(\delta_{n-1}^{-1}\right)\ ,
		\end{align}
		by recalling that ${\rm arcsech}(x) = {\rm arcosh}(x^{-1})$. In particular, this also means that by applying $T_{1/(2L+1)}$ to both sides, we get
		\begin{align}\label{eq:conc_Ts}
			T_{1/(2L+1)}(\delta_{n}^{-1}) = T_{1/(2L+1)}\left(T_{2L+1}\left(\delta_{n-1}^{-1}\right)\right) = \delta_{n-1}^{-1}.
		\end{align}
		We may now use this identity in simplifying the trace distance of interest: starting from the general relation Eq.~\eqref{eq:distance_recursion_1} with $q = \delta_{n}$,
		\begin{align}\label{eq:distance_recursion_2}
			\frac{1}{2}\lVert \tau - \psi_{n}\rVert_{1} = \delta_{n} T_{2L+1}\left(T_{1/(2L+1)}\bigl(\delta_{n}^{-1}\bigr)\frac{1}{2}\lVert \tau - \psi_{n-1}\rVert_{1}\right) = \delta_{n} T_{2L+1}\left(\delta_{n-1}^{-1}\frac{1}{2}\lVert \tau - \psi_{n-1}\rVert_{1}\right)\ ,
		\end{align}
		where the second equality comes from applying Eqs.~\eqref{eq:conc_Ts}.
		If the Lemma is true for step $n-1$, i.e. if $\frac{1}{2}\lVert \tau - \psi_{n-1}\rVert_{1} = \delta_{n-1}$, then Eq.~\eqref{eq:distance_recursion_2} leads to $\frac{1}{2}\lVert \tau - \psi_{n}\rVert_{1} = \delta_{n}$ using $T_{2L+1}(1) = 1$. 
		Since the base case $\frac{1}{2}\lVert \tau - \psi_{0} \rVert_{1} = \delta_{0}$ is true, the lemma holds by induction. 
	\end{proof}
	
	The explicit form of the $\delta_N$, i.e. the distance to the target state after $N$ iterations can also be derived. 
	From Eqs.~\eqref{eq:Chebyshev_multiplication} and~\eqref{eq:delta_n_exact}, $\delta_{N}^{-1} = T_{(2L+1)^{N}}\left(\delta_{0}^{-1}\right)$, or equivalently,
	\begin{align}\label{eq:delta_N_explicit}
		\delta_{N} = \sech\left((2L+1)^{N}\arcosh\Bigl(\delta_{0}^{-1}\Bigr)\right)\ .
	\end{align}
	For large $ N $, Eq.~\eqref{eq:delta_N_explicit} asymptotically behaves as $\delta_{N} = \mO\left(e^{-\arcosh\bigl(\delta_{0}^{-1}\bigr) (2L+1)^{N}}\right)\ $.
	A typical setting of a Grover search starts from the initial fidelity $ F_{0} = 1-\delta_{0}^{2}\ll1 $, e.g. the initial state is a superposition of all basis states.
	Then the decay rate can be rewritten as
	\begin{align}
		\arcosh\left(\delta_{0}^{-1}\right) = \arcosh((1-F_{0})^{-1/2}) = \sqrt{F_{0}} + \mO(F_{0}^{3/2})\ ,
	\end{align}
	by Taylor expansion.
	If only the leading order is taken, $\delta_{N} \sim e^{-\sqrt{F_{0}}(2L+1)^{N}}$, and thus 
	\begin{align}\label{eq:nested_length_scaling}
		(2L+1)^{N}  = \mO\left(\log\left(\delta_{N}^{-1}\right)F_{0}^{-1/2}\right)
	\end{align}
	is needed for the final distance $\delta_{N}$.
	The circuit depth for the entire algorithm is proportional to Eq.~\eqref{eq:nested_length_scaling} due to the unfolding implementation, and it contains the characteristic Grover scaling factor $ F_{0}^{-1/2} $. 
	If a non-recursive algorithm is applied, Eq.~\eqref{eq:delta_N_explicit} can still be used with $ N = 1 $. 
	Denote the final distance as 
	\begin{align}
		\delta' = \sech\left((2L'+1)\arcosh\left(\delta_{0}^{-1}\right)\right),
	\end{align}
	when the alternation length of this algorithm is $ L' $. For $ \delta' = \delta_{N} $, the alternation length of the non-nested algorithm should satisfy $ 2L'+1 = (2L+1)^{N} $, resulting in exponentially many partial reflections. 
	Note that Eq.~\eqref{eq:nested_length_scaling} then implies $ L' \sim F_{0}^{-1/2} $, which displays the quantum advantage compared to the classical search requiring $ L' \sim F_{0}^{-1} $.

	\subsection{Robustness of dynamic programming for the Grover search}\label{app:QDP_NFGS_convergence}
	In this section, we study the nested fixed-point Grover search algorithm implemented with bounded noise arising from using a dynamic programming approach.
	We establish the robustness of the algorithm against small implementation errors.

	\subsubsection{Distance bounds after a DME implementation}\label{subsec:mixed_state_propagation}
	
	In this subsection, we introduce the DME implementation of an iteration of the dynamic Grover search. 
	We first establish (through Proposition~\ref{pro:DME_stays_qubit} and Corollary~\ref{cor:NFGS_stays_qubit}) that the DME implementation still has the essential feature of the Grover search, namely all relevant states are effectively two-dimensional. 
	Then the main results Lemma~\ref{lem:mixed_step_1} and Corollary~\ref{lem:mixed_step_2} provide the upper bound of a distance after one iteration of DME implemented Grover search, starting from a mixed state. 
	In short, we observe that whenever the initially prepared state is close to being pure, then the distance to our Grover target state still follows a similar recursive relation as before, namely Eq.~\eqref{eq:delta_n_prop}.
	
	The `dynamic' parts of the recursion unitary $\Gammaop{L}{\rho}$ are the memory-calls $\Eop{\alpha}{\rho}$, i.e. partial reflectors around the state $\rho$.
	Hence, DME (Def.~\ref{def:DME}) is the right choice of memory-usage query to use for implementing the memory-call $\Eop{\alpha}{\rho}$.
	To recap, the DME memory-usage query can be written as
	\begin{align}\label{eq:DME_explicit_def_restatement}
		\hE{s}{\rho}\left(\sigma\right) = \cos^2(s)\sigma - i\sin(s)\cos(s)[\rho,\sigma] + \sin^2(s)\rho\ ,
	\end{align} 
	with the duration $s$ of our choice.
	If $M$ queries are made with $s = \frac{\alpha}{M}$, the implemented channel $(\hE{s}{\rho}(\sigma) )^{M}$ approximates the memory-call $\Eop{\alpha}{\rho}$ with error $\mO(1/M)$. 
	
	There are two key considerations in the implementation of quantum instructions via DME:
	\begin{enumerate}
		\item The implementation of a recursion step $ \Gammaop{L}{\psi} $ becomes non-unitary even when the instruction state $ \psi = \dm{\psi} $ and the input state $ \phi = \dm{\phi} $ are pure. To see this, consider a special case of DME protocol with pure instruction state $ \psi $ and input state $ \phi $,
		\begin{align}
			\hE{s}{\psi}(\phi) = \cos^{2}(s)\phi + \sin^{2}(s)\psi - i\sin(s)\cos(s)\braket{\psi}{\phi}\ketbra{\psi}{\phi}+ \mathrm{h.c.}.
		\end{align}
		We may directly compute the purity as
		\begin{align}\label{key}\Tr\left[\lset\hE{s}{\psi}(\phi)\rset^2\right] &= \cos^4(s) + \sin^4(s) + 2\sin^2(s)\cos^2(s)\left(2\lvert\braket{\psi}{\phi}\rvert^2 - \lvert\braket{\psi}{\phi}\rvert^4\right)\\
			&= 1 - 2\sin^2(s)\cos^2(s)\left(1- \lvert\braket{\psi}{\phi}\rvert^2\right)^{2}\neq1, \no
		\end{align}
		hence $ \rho_{n} $, the state after $ n $ iterations of the Grover search recursion, is no longer pure in general.
		\item Another important property to check is whether DME maps a target state supported on a two-dimensional subspace to another (now potentially mixed) state supported on the same space. 
		In the exact algorithm, state vectors always stay in the space 
		\begin{align}\label{key}
			\mH_{\rm rel} \coloneq \mathrm{span}\lset \ket{\tau},  \ket{\tp} \coloneq \frac{\psivec{0} - \braket{\tau}{\psi_0}\ket{\tau}}{\sqrt{1-\vert\braket{\tau}{\psi_0}\vert^2}}\rset,
		\end{align} 
		defined by the initial state vector $ \psivec{0} $ and the target state vector $ \ket{\tau} $. 
		This restriction is the reason that the fixed-point Grover search of~\cite{Yoder2014Grover} could utilize a $ \mathrm{SU(2)} $ representation.
		To establish similar restriction for mixed states, we define 
		\begin{align}
			\mS_{\rm rel} \coloneq \lset \rho \,\vert\, \rho: \text{density matrix},\ \mathrm{supp}(\rho)\subset\mH_{\rm rel} \rset
		\end{align}
		the set of density matrices having support only in $ \mH_{\rm rel} $.
		Proposition \ref{pro:DME_stays_qubit} then rigorously shows that all relevant density operators resulting from the DME implementation stay in $\mS_{\rm rel}$.
	\end{enumerate}

	\begin{pro}\label{pro:DME_stays_qubit} 
		Suppose two states $\rho,\sigma \in \mS_{\rm rel}$.  
		The state after a DME channel $\sigma^\prime = \hE{s}{\rho} (\sigma) $ defined in Eq.~\eqref{eq:DME_def} is also contained in $\mS_{\rm rel}$.
		In other words, for all $s\in\mbR$, we have that
		$\hE{s}{\mS_\mathrm{rel}}(\mS_\mathrm{rel})\subset \mS_\mathrm{rel}\ $.
	\end{pro}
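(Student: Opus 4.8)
The plan is to read off the three terms in the explicit form of the DME channel, Eq.~\eqref{eq:DME_explicit_def}, and verify that each one separately keeps its support inside $\mH_{\rm rel}$. Writing $P$ for the orthogonal projector onto $\mH_{\rm rel}$, I would first translate the membership condition into the algebraic identity: $\rho\in\mS_{\rm rel}$ is equivalent to $\rho = P\rho P$ (and likewise $\sigma = P\sigma P$), since $\rho$ is a positive operator whose range lies in $\mH_{\rm rel}$. This compression characterization is the right tool here, because it converts ``support containment'' into a property that is manifestly stable under sums and products, whereas the geometric definition via $\mathrm{supp}(\cdot)$ is awkward to propagate through a commutator.

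Recalling that
\begin{align}
	\hE{s}{\rho}(\sigma) = \cos^2(s)\,\sigma - i\sin(s)\cos(s)\,[\rho,\sigma] + \sin^2(s)\,\rho\ ,
\end{align}
the output $\sigma'$ is a complex linear combination of $\sigma$, $\rho$, and $[\rho,\sigma]$. The first and third terms already lie in $\mS_{\rm rel}$ by hypothesis, so the only point requiring verification is the commutator. I would handle the two products separately: using $\rho = P\rho P$, $\sigma = P\sigma P$, and $P^2 = P$, one immediately obtains $\rho\sigma = P\rho P\,\sigma P = P(\rho\sigma)P$ and identically $\sigma\rho = P(\sigma\rho)P$, whence $[\rho,\sigma] = P[\rho,\sigma]P$. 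Since every term is then $P$-invariant and $P$-invariance is preserved by linear combinations, $\sigma' = P\sigma' P$. The mechanism is nothing more than the closure of the compressed operator algebra $P\,\mB(\mH)\,P$ under multiplication, so I expect no genuine obstacle at this step; the only care needed is to state the product identity correctly rather than to compute anything.

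Finally I would close the argument by observing that $\sigma'$ is automatically a legitimate density matrix---positive semidefinite and unit trace---because $\hE{s}{\rho}$ is a bona fide quantum channel, being a partial trace of a unitary conjugation as in Def.~\ref{def:DME}. Hence positivity need not be re-derived from the explicit formula. Combining $\sigma' = P\sigma' P$ with the fact that $\sigma'$ is a density matrix yields $\mathrm{supp}(\sigma')\subset\mH_{\rm rel}$, i.e. $\sigma'\in\mS_{\rm rel}$, and since nothing in the argument constrained $s$, this gives $\hE{s}{\mS_{\rm rel}}(\mS_{\rm rel})\subset\mS_{\rm rel}$ for all $s\in\mbR$, as claimed.
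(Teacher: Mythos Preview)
Your proof is correct and follows essentially the same approach as the paper: both arguments use the explicit three-term formula Eq.~\eqref{eq:DME_explicit_def} and check term by term that the output is supported in $\mH_{\rm rel}$. The paper phrases this as $\hE{s}{\rho}(\sigma)\ket{\phi}=0$ for every $\ket{\phi}\in\mH_{\rm rel}^{\mathsf{C}}$, while you use the equivalent compression identity $\sigma'=P\sigma'P$; the underlying mechanism---closure of $P\,\mB(\mH)\,P$ under products and sums---is identical.
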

	
	\begin{proof} 
		From direct calculation,
		\begin{align}
			\hE{s}{\rho}(\sigma)\ket{\phi} = \cos^2(s)\sigma\ket{\phi} - i\sin(s)\cos(s)[\rho,\sigma]\ket{\phi} + \sin^2(s)\rho\ket{\phi} = 0\ ,
		\end{align}
		whenever $ \ket{\phi}\in \mH_{\rm rel}^{\mathsf{C}} $.
		Therefore, $ \mathrm{supp}(\hE{s}{\rho}(\sigma))\subset\mH_{\rm rel} $. 
	\end{proof}
	Noticing that DME for dynamic Grover iterations satisfies these conditions at every step of the recursion, we can be sure that the states generated by QDP in Grover search have full support in the relevant Hilbert space representation.
	\begin{cor}\label{cor:NFGS_stays_qubit}
		The QDP implementation of Grover search using DME generates a sequence of states $\rho_n\in\mS_{\rm rel} $ for all $ n\in\mbN $ whenever $\rho_{0} = \psi_{0} \in\mS_{\rm rel}$. 
	\end{cor}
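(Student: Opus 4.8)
The plan is to prove this by induction on $n$, using Proposition~\ref{pro:DME_stays_qubit} as the essential closure property and verifying that the only other operations appearing in the recursion also respect $\mS_{\rm rel}$. The base case is immediate: by hypothesis $\rho_{0} = \psi_{0}\in\mS_{\rm rel}$.

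For the inductive step, I would assume $\rho_{n}\in\mS_{\rm rel}$ and unfold the single QDP recursion step that produces $\rho_{n+1}$. Recalling Eq.~\eqref{eq:Gammaop_redef}, this step approximates $\Gammaop{L}{\rho_{n}}$ and is built by alternating two kinds of operations, applied to a working state that enters equal to $\rho_{n}$: the \emph{static} partial reflections $\Eop{\beta_{i}}{\tau}$ around the target, and the \emph{memory-calls} $\Eop{\alpha_{i}}{\rho_{n}}$ around the current memory state, each of the latter realized by $M$ DME queries $\hE{\alpha_{i}/M}{\rho_{n}}$ whose instruction copies are $\rho_{n}$. The strategy is to track the working state through this alternation and argue that it never leaves $\mS_{\rm rel}$.

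First I would dispatch the static reflections. Since $\Eop{\beta_{i}}{\tau} = \1 - (1-e^{-i\beta_{i}})\dm{\tau}$ and $\ket{\tau}\in\mH_{\rm rel}$, this unitary acts as the identity on $\mH_{\rm rel}^{\mathsf{C}}$ and maps $\mH_{\rm rel}$ into itself, i.e. it is block diagonal with respect to $\mH_{\rm rel}\oplus\mH_{\rm rel}^{\mathsf{C}}$; hence conjugation by it sends $\mS_{\rm rel}$ into $\mS_{\rm rel}$. Next, each DME query within a memory-call has instruction $\rho_{n}\in\mS_{\rm rel}$ (the induction hypothesis) and acts on the running working state, so Proposition~\ref{pro:DME_stays_qubit} applies verbatim and keeps the output in $\mS_{\rm rel}$; iterating $M$ times keeps the whole memory-call inside $\mS_{\rm rel}$. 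Since the working state enters the recursion step in $\mS_{\rm rel}$ and every elementary operation preserves $\mS_{\rm rel}$, the output $\rho_{n+1}$ lies in $\mS_{\rm rel}$, completing the induction. If the pure-state protocol additionally interleaves the IMR subroutine of Lemma~\ref{lemma:purification_genearlised}, the same conclusion holds, because the IMR output $\rho'\propto\rho+\rho^{2}$ shares the eigenbasis of $\rho$ and therefore has support in $\mH_{\rm rel}$ whenever $\rho$ does.

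The argument is closure/bookkeeping rather than computation, so the one point needing genuine care—and the place where one could slip—is the chaining of Proposition~\ref{pro:DME_stays_qubit} across the successive memory-calls. One must keep clear that the instruction fed to \emph{every} DME query during the $n$-th step is precisely the memoized state $\rho_{n}$, so that it is the induction hypothesis (and not some intermediate working state) that supplies the instruction's membership in $\mS_{\rm rel}$, while the input to each query is the running working state whose membership has already been established by the preceding operations. Once this separation between the fixed instruction register and the evolving working register is made explicit, the closure chains cleanly and the induction goes through.
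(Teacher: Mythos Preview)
Your proof is correct and follows essentially the same approach as the paper: verify that each constituent of the QDP recursion step---the DME queries (via Proposition~\ref{pro:DME_stays_qubit}) and the exact partial reflections around $\tau$---preserves $\mS_{\rm rel}$, then conclude by composition/induction. Your version is simply more explicit about the inductive bookkeeping and helpfully adds the closure of the IMR subroutine, which the paper's terse proof of this corollary omits.
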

	\begin{proof}
		From Proposition~\ref{pro:DME_stays_qubit}, DME protocols map states in $ \mS_{\rm rel} $ to $ \mS_{\rm rel} $.
		Furthermore, reflections around the target state $ \tau\in\mS_{\rm rel} $, which is implemented by an exact unitary, not DME, also ensures that
		$		\Eopch{s}{\tau}\left(\sigma\right) \in \mS_{\rm rel}$
		whenever $ \sigma\in\mS_{\rm rel} $.
		Since a recursion step of Grover search comprises the above two operations,
		the state after the iteration, $ \rho_{n} $, also stays in $ \mS_{\rm rel} $.
	\end{proof}
	
	Corollary~\ref{cor:NFGS_stays_qubit} enables a simple representation of intermediate states $ \rho_{n} $. 
	Defining the rank-$ 2 $ projector onto $ \mH_{\rm rel} $, $\1_{\rm rel} = \dm{\tau}+\dm{\tp}$, we note that this operator is invariant under an exact Grover recursion step with respect to any state $ \phi\in\mS_{\rm rel} $, i.e.
	\begin{align}\label{eq:idrel_invar}
		\Gammaopch{L}{\phi}\left(\1_{\rm rel}\right) = \1_{\rm rel},
	\end{align} 
	due to unitarity of $\Gammaopch{L}{\phi}$. 
	Since $\dim (\mathcal{H}_{\rm rel}) =2$, we can decompose any $ \rho \in\mS_{\rm rel} $ in its eigenbasis
	\begin{align}\label{key}
		\rho = (1-x)\ketbra{\psi}{\psi} + x \ketbra{\psi^\perp}{\psi^\perp} = (1-2x)\ketbra{\psi}{\psi} + x\1_{\rm rel}, 
	\end{align}
	for $x \in [0,\frac{1}{2}] $ by using the fact that $\1_{\rm rel} = \ketbra{\psi}{\psi}+\ketbra{\psi^\perp}{\psi^\perp}$.
	We will refer to $\psi $ as the pure state associated with $\rho$ and $x$ as the mixedness parameter.
	Interestingly, the partial reflection around $ \rho$ with angle $ s $ can be rewritten using the observation that $[\psi,\1_{\rm rel}] = 0$:
	
	\begin{align}\label{eq:rho_reflection_approx_psi}
		\Eop{s}{\rho} = e^{-is\left[(1-2x)\psi + x\1_{\rm rel}\right]} = e^{-isx\1_{\rm rel}}\Eop{s(1-2x)}{\psi}\ ,
	\end{align}
	which is identical to the partial reflection around the corresponding pure state $ \psi $ with the angle reduced by the factor $(1-2x)$, apart from the phase factor globally acting on $\mH_{\rm rel} $. 
	If the mixedness parameter $ x $ is known, one can effectively implement the reflection around the pure state $ \psi $ using mixed states $ \rho $ by adjusting the DME duration $s\mapsto s/(1-2x)$.
	
	Nevertheless, it is usually the case that only an upper bound, but not the exact amount of mixedness $x$ is known. 
	The Proposition and the Lemma below assess the QDP implementation of a Grover search recursion step when the mixedness parameter $ x $ is unknown. 
	Instead of the exact partial reflection unitary channels $ \Eopch{s}{\psi} = \Eopch{s/(1-2x)}{\rho} $ around its associated pure state $ \psi $, we employ DME queries $ \hE{s}{\rho} $ as defined in Eq.~\eqref{eq:DME_explicit_def_restatement}.

	\begin{pro}\label{pro:dist_unitaries_psi_rho}
		Let $ \rho = (1-2x)\psi + x\1_{\rm rel} $ be a state with an associated pure state $ \psi $ and a mixedness parameter $ x\in[0,\varepsilon] $ for some $ 0\leq\varepsilon< \frac{1}{2} $.
		A dynamic unitary channel 
		\begin{align}\label{eq:gammaopch_rho}
			\Gammaopch{L}{\rho} = \Eopch{\alpha_{L}}{\rho}\circ\Eopch{\beta_{L}}{\tau}\circ\cdots\circ  \Eopch{\alpha_{1}}{\rho}\circ\Eopch{\beta_{1}}{\tau}
		\end{align}
		that uses the same angles $\{\alpha_{l},\beta_{l}\}_{l}$ used in $\Gammaopch{L}{\psi}$ has the distance 
		\begin{align}\label{eq:Lpiep_2}
			\frac{1}{2}\left\lVert \Gammaopch{L}{\psi} - \Gammaopch{L}{\rho} \right\rVert_{\rm Tr} \leq L\pi\varepsilon\ ,
		\end{align}
		from the dynamic unitary channel constructed with the associated pure state $\psi$.
	\end{pro}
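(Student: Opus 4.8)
\emph{Proof strategy.} The plan is to exploit the fact that $\Gammaopch{L}{\psi}$ and $\Gammaopch{L}{\rho}$, written as in Eq.~\eqref{eq:gammaopch_rho}, are compositions of the same $2L$ partial-reflection channels, the only difference being that the $L$ memory-calls around $\psi$ are replaced by memory-calls around $\rho$; the reflections $\Eopch{\beta_l}{\tau}$ around the target are identical in both and carry no error. First I would reduce the claim to a per-reflection estimate using subadditivity of the channel trace-distance under composition: since every partial-reflection channel is CPTP and hence contractive in trace norm, a telescoping argument applied to the definition Eq.~\eqref{eq:trace_norm_dist_def} gives
\begin{align}
\frac12\left\lVert \Gammaopch{L}{\psi} - \Gammaopch{L}{\rho}\right\rVert_{\rm Tr} \leq \sum_{l=1}^{L}\frac12\left\lVert \Eopch{\alpha_l}{\psi} - \Eopch{\alpha_l}{\rho}\right\rVert_{\rm Tr},
\end{align}
where only the $L$ memory-call terms survive because the $\tau$-reflections cancel. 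It then suffices to show that each summand is at most $\pi\varepsilon$.

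\emph{Per-reflection bound.} For a single memory-call I would work directly with the underlying unitaries, taking the principal representative $\lvert\alpha_l\rvert\le\pi$ (which leaves $\Eopch{\alpha_l}{\psi}$ unchanged and is the angle actually programmed into the query). Because $\rho = (1-2x)\psi + x\1_{\rm rel}$ is simultaneously diagonal with $\psi$ in the eigenbasis $\{\ket{\psi},\ket{\psi^\perp}\}$ of $\mH_{\rm rel}$, we have $[\rho,\psi]=0$ and $\rho-\psi = x(\psi^\perp-\psi)$. Both $\Eop{\alpha_l}{\psi}=e^{-i\alpha_l\psi}$ and $\Eop{\alpha_l}{\rho}=e^{-i\alpha_l\rho}$ act as the identity on $\mH_{\rm rel}^{\mathsf C}$ and are diagonal on $\mH_{\rm rel}$, so their difference is supported on $\mH_{\rm rel}$ with diagonal entries of modulus $2\sin(\lvert\alpha_l\rvert x/2)$; hence $\lVert \Eop{\alpha_l}{\psi}-\Eop{\alpha_l}{\rho}\rVert_\infty = 2\sin(\lvert\alpha_l\rvert x/2)$ (equivalently, this follows from the factorization Eq.~\eqref{eq:rho_reflection_approx_psi}). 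Converting from operators to channels via the standard bound $\tfrac12\lVert U\cdot U^\dagger - V\cdot V^\dagger\rVert_\diamond \le \lVert U-V\rVert_\infty$ together with $\lVert\cdot\rVert_{\rm Tr}\le\lVert\cdot\rVert_\diamond$ yields
\begin{align}
\frac12\left\lVert \Eopch{\alpha_l}{\psi} - \Eopch{\alpha_l}{\rho}\right\rVert_{\rm Tr} \leq 2\sin\!\Big(\tfrac{\lvert\alpha_l\rvert x}{2}\Big) \leq \lvert\alpha_l\rvert\, x \leq \pi\varepsilon,
\end{align}
using $2\sin(\theta/2)\le\theta$, $\lvert\alpha_l\rvert\le\pi$ and $x\le\varepsilon$. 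Summing over the $L$ memory-calls gives the claimed bound $L\pi\varepsilon$.

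\emph{Main obstacle.} The delicate point is the angle bookkeeping rather than any analytic estimate. The raw angles of Eq.~\eqref{eq:angles} can exceed $\pi$ (indeed $\alpha_l=2\cot^{-1}(\cdots)$ ranges in $(0,2\pi)$), so a naive estimate only gives $2\pi\varepsilon$ per reflection and misses the stated constant. The resolution I would stress is asymmetric: the reflection around $\psi$ depends on $\alpha_l$ only through $\alpha_l\bmod 2\pi$ (its generator $\psi$ has integer eigenvalues), whereas the reflection around $\rho$ does not (the eigenvalues of $\rho$ are non-integer), so the implementation must fix the principal representative $\lvert\alpha_l\rvert\le\pi$, and it is exactly this choice that caps the per-reflection error at $\pi\varepsilon$. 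A secondary subtlety is that the phase prefactor $e^{-i\alpha_l x\1_{\rm rel}}$ in Eq.~\eqref{eq:rho_reflection_approx_psi} is a \emph{relative} phase between $\mH_{\rm rel}$ and its complement, hence not invisible to inputs coherent with $\mH_{\rm rel}^{\mathsf C}$; computing the operator norm of the full difference (rather than restricting to $\mS_{\rm rel}$) handles this automatically and keeps the bound valid for the diamond, and therefore the trace, channel distance over all inputs.
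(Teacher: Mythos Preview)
Your proof is correct and follows the same overall architecture as the paper: telescope the composition into $L$ per-reflection differences, then bound each by $\pi\varepsilon$. The per-reflection step is where you diverge. The paper computes the channel difference explicitly: restricting the input $\sigma$ to $\mS(\mH_{\rm rel})$ and using the $2\times2$ matrix representation Eq.~\eqref{eq:variables}, it evaluates $(\Eopch{\alpha}{\psi}(\sigma)-\Eopch{\alpha}{\rho}(\sigma))^2$ directly, obtains the \emph{exact} value $\tfrac12\lVert\Eopch{\alpha}{\psi}-\Eopch{\alpha}{\rho}\rVert_{\rm Tr}=|\sin(\alpha x)|$ for the restricted trace norm, and then bounds this by $\pi\varepsilon$ using $|\alpha|\le\pi$. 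You instead bound the operator-norm of the unitary difference (getting $2|\sin(\alpha x/2)|$) and convert to the diamond norm via the standard inequality. Your bound $2|\sin(\alpha x/2)|$ is slightly looser than the paper's $|\sin(\alpha x)|$ (by a factor $\cos(\alpha x/2)$), but both reduce to the same $\pi\varepsilon$. What your route buys is generality: it covers the unrestricted trace (indeed diamond) norm over all of $\mS(\mH)$, whereas the paper's explicit computation tacitly optimizes only over $\mS(\mH_{\rm rel})$---fine for the downstream application but a small gap relative to the proposition as stated. Your discussion of the angle bookkeeping is also more explicit than the paper's, which simply asserts $\alpha\in[-\pi,\pi]$ without comment.
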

	\begin{proof}
		First, observe that 
		\begin{align}
			\left\lVert \Gammaopch{L}{\psi} - \Gammaopch{L}{\rho} \right\rVert_{\rm Tr} \leq \sum_{i=1}^{L}\left\lVert \Eopch{\alpha_{i}}{\psi} - \Eopch{\alpha_{i}}{\rho} \right\rVert_{\rm Tr}\ ,
		\end{align}
		from the triangle inequality and the unitary invariance of the trace norm. 
		The distance between $\Eopch{\alpha}{\rho}$ and $\Eopch{\alpha}{\psi}$ can be written as
		\begin{align}
			\frac{1}{2}\left\lVert  \Eopch{\alpha}{\psi} - \Eopch{\alpha}{\rho}\right\rVert_{\rm Tr} = \frac{1}{2}\max_{\sigma\in\mS(\mH_{\rm rel})}\left\lVert \Eopch{\alpha}{\psi}\left(\sigma\right) -\Eopch{\alpha}{\rho}(\sigma) \right\rVert_{1} = \frac{1}{2}\max_{\sigma\in\mS(\mH_{\rm rel})}\Tr\left\lbrace \sqrt{\left(\Eopch{\alpha}{\psi}\left(\sigma\right) -\Eopch{\alpha}{\rho}(\sigma)\right)^{2}} \right\rbrace\ .
		\end{align}
		Using the matrix notation 
		\begin{align}\label{eq:variables}
			\psi = \begin{bmatrix}
				1 & 0 \\ 0 & 0
			\end{bmatrix},\quad \rho = \begin{bmatrix}
				1-x & 0 \\ 0 & x
			\end{bmatrix},\quad \sigma = \begin{bmatrix}
				p & q \\ q^* & 1-p
			\end{bmatrix}\ ,
		\end{align}
		we arrive at the explicit expression
		\begin{align}
			\left(\Eopch{\alpha}{\psi}(\sigma) - \Eopch{\alpha}{\rho}(\sigma)\right)^{2} &= \begin{bmatrix}
				4\abs{q}^{2}\sin^{2}\left(\alpha x\right) & 0 \\ 0 & 4\abs{q}^{2}\sin^{2}\left(\alpha x\right) 
			\end{bmatrix}\ .
		\end{align}
		The state $\sigma$ giving the maximum distance is the one having the maximal $\abs{q}$.
		The condition that $\sigma\geq 0$ imposes $\abs{q}\leq \frac{1}{2} $, and $\frac{1}{2}\lVert  \Eopch{\alpha}{\psi} - \Eopch{\alpha}{\rho}\rVert_{\rm Tr} = \lvert \sin\left(\alpha x\right)\rvert$. From $\alpha \in [-\pi,\pi]$ and $\abs{\alpha \varepsilon}<\frac{\pi}{2}$, we bound $\abs{\sin(\alpha x)} < \pi \varepsilon$ and achieve Eq.~\eqref{eq:Lpiep_2}.
	\end{proof}

	Now we establish the main results of this subsection.
	We first define
	\begin{align}\label{eq:delta_n_prop}
		\delta_{n}\coloneq \left(T_{(2L+1)}\left(\delta_{n-1}^{-1}\right)\right)^{-1} +\epsilon\ ,
	\end{align}
	and use it as the parameter  $q = \delta_{n}$ when choosing angles Eq.~\eqref{eq:angles}.
	At the same time, we also demonstrate that $\delta_{n}$ bounds the distance to the target state, after an iteration of DME implementation with error $\epsilon$, i.e. it is the counterpart to Eq.~\eqref{eq:modified_step_error_app} used for the exact iteration.
	Precise statements of this result are delineated in Lemma~\ref{lem:mixed_step_1} and Corollary~\ref{lem:mixed_step_2}.
	Further analysis on Eq.~\eqref{eq:delta_n_prop} itself will be made in Sec.~\ref{subsec:delta_n_analysis}.

	\begin{lem}[DME implementation subroutine]\label{lem:mixed_step_1}
		We consider iteration $n$ of the nested Grover search algorithm. 
		Assume that the initial state for the iteration $ \rho_{n-1} = (1-2x_{n-1})\psi_{n-1} + x_{n-1}\1_{\rm rel} $ is unknown to us but satisfies the following:
		\begin{itemize}
			\item An associated pure state $\psi_{n-1}$ has the distance to the target state bounded by some number $\delta_{n-1}$, i.e. $\frac{1}{2}\lVert \tau - \psi_{n-1} \rVert_{1} \leq \delta_{n-1}$.
			\item The mixedness parameter is bounded by $ x_{n-1}  \leq \frac{\epsilon}{2L\pi}$ for some number $\epsilon$.
		\end{itemize}
		Then, it is possible to prepare a state $\rho'_{n} = (1-2x_{n}')\psi_{n} + x'\1_{\rm rel} $, such that
		\begin{align}\label{eq:delta_n_prop_2}
			\frac{1}{2}\lVert \tau - \rho'_{n} \rVert_{1} &\leq \delta_{n} -\frac{\epsilon}{4},
		\end{align}
		where $\delta_{n}$ is defined by the recurrence relation Eq.~\eqref{eq:delta_n_prop}.
		The preparation requires consuming $ \mO(L^2\epsilon^{-1}) $ copies of $ \rho_{n-1} $.
	\end{lem}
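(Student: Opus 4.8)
The plan is to decompose the error in preparing $\rho'_n$ into two clearly separated contributions: (i) the \emph{unitary} error that arises because we use the exact dynamic unitary channel $\Gammaopch{L}{\rho_{n-1}}$ built from the mixed instruction $\rho_{n-1}$ rather than from its associated pure state $\psi_{n-1}$, and (ii) the \emph{non-unitary} (DME-implementation) error that arises because each memory-call $\Eopch{\alpha_i}{\rho_{n-1}}$ is only approximated by $M$ DME memory-usage queries with finite $M$. I would first invoke Proposition~\ref{pro:DME_stays_qubit} and Corollary~\ref{cor:NFGS_stays_qubit} to guarantee that all relevant states remain in $\mS_{\rm rel}$, so that the effective two-dimensional picture of Eq.~\eqref{eq:rho_reflection_approx_psi} applies throughout; this lets me work entirely with the $2\times2$ parametrization in Eq.~\eqref{eq:variables}.

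For the unitary error, I would apply Proposition~\ref{pro:dist_unitaries_psi_rho} directly with $\varepsilon = x_{n-1} \le \frac{\epsilon}{2L\pi}$, yielding
\begin{align}
	\frac{1}{2}\left\lVert \Gammaopch{L}{\psi_{n-1}} - \Gammaopch{L}{\rho_{n-1}} \right\rVert_{\rm Tr} \leq L\pi x_{n-1} \leq \frac{\epsilon}{2}\ .
\end{align}
The key point is that applying $\Gammaopch{L}{\rho_{n-1}}$ to $\rho_{n-1}$ produces, up to this bounded trace-norm error, the same state as applying $\Gammaopch{L}{\psi_{n-1}}$ to $\psi_{n-1}$: the shared spectrum and the invariance $\Gammaopch{L}{\phi}(\1_{\rm rel})=\1_{\rm rel}$ from Eq.~\eqref{eq:idrel_invar} ensure the mixedness parameter is transported without amplification. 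Combining this with the exact recursion identity from Lemma~\ref{lem:NFGS_good_parameter_q} and the relation Eq.~\eqref{eq:distance_recursion_1}, the associated-pure-state part of the output satisfies $\frac{1}{2}\lVert \tau - \psi_n\rVert_1 \le \bigl(T_{2L+1}(\delta_{n-1}^{-1})\bigr)^{-1}$, since $\frac{1}{2}\lVert \tau - \psi_{n-1}\rVert_1 \le \delta_{n-1}$ by hypothesis and $T_{2L+1}$ is monotone on the relevant domain.

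For the non-unitary error, I would use the DME accuracy bound Eq.~\eqref{eq:DME_diamond_dist_Mtimes}: choosing $s=\alpha_i/M$ and concatenating $M$ queries per memory-call approximates each $\Eopch{\alpha_i}{\rho_{n-1}}$ with trace-norm error $\mO(1/M)$, so across all $L$ memory-calls the total DME error is $\mO(L/M)$. Setting $M = \mO(L^2/\epsilon)$ makes this contribution at most $\epsilon/4$ while consuming $\mO(L^2\epsilon^{-1})$ copies of $\rho_{n-1}$, matching the claimed copy count (each of the $L$ memory-calls uses $M=\mO(L/\epsilon)$ copies). Assembling the two contributions via the triangle inequality gives
\begin{align}
	\frac{1}{2}\lVert \tau - \rho'_n\rVert_1 \leq \bigl(T_{2L+1}(\delta_{n-1}^{-1})\bigr)^{-1} + \frac{\epsilon}{2} + \frac{\epsilon}{4} = \delta_n - \frac{\epsilon}{4}\ ,
\end{align}
using the definition Eq.~\eqref{eq:delta_n_prop} of $\delta_n$, which bundles the full $+\epsilon$ margin so that $\frac{3\epsilon}{4}$ of accumulated error still leaves the $-\frac{\epsilon}{4}$ slack.

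The main obstacle I anticipate is bookkeeping the budget so that the \emph{unitary} and \emph{non-unitary} errors together consume strictly less than the $\epsilon$ appearing in the definition of $\delta_n$, leaving the explicit $-\frac{\epsilon}{4}$ slack demanded by Eq.~\eqref{eq:delta_n_prop_2}. This requires tracking constants carefully in Proposition~\ref{pro:dist_unitaries_psi_rho} and in the $\mO(L/M)$ DME bound, and verifying that the hypothesis $x_{n-1}\le\frac{\epsilon}{2L\pi}$ is exactly what is needed to cap the unitary part at $\epsilon/2$. A secondary subtlety is confirming that the output mixedness parameter $x'_n$ stays controlled (so that the induction in the surrounding argument can proceed), which follows because DME applied to near-pure states in $\mS_{\rm rel}$ increases mixedness only at second order in $s$, as the purity computation in Eq.~\eqref{key} shows; I would note this explicitly but defer its quantitative use to the companion Corollary~\ref{lem:mixed_step_2}.
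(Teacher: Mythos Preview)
Your approach is essentially the paper's: the same triangle-inequality decomposition into (i) the exact Grover step bound via Eq.~\eqref{eq:distance_recursion_1} with $q=\delta_n-\epsilon$, (ii) the unitary instruction error via Proposition~\ref{pro:dist_unitaries_psi_rho}, and (iii) the DME error via Eq.~\eqref{eq:DME_diamond_dist_Mtimes}, all while working in $\mS_{\rm rel}$. Two small bookkeeping points are worth tightening. First, your justification for the exact-step bound invokes ``$T_{2L+1}$ is monotone on the relevant domain''; this is not the property used (Chebyshev polynomials oscillate on $[-1,1]$). What the paper uses is $|T_{2L+1}(x)|\le 1$ for $|x|\le 1$, which together with $\frac{1}{2}\|\tau-\psi_{n-1}\|_1\le\delta_{n-1}$ gives the bound $\delta_n-\epsilon$ directly from Eq.~\eqref{eq:distance_recursion_1}. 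Second, your displayed budget $(\delta_n-\epsilon)+\tfrac{\epsilon}{2}+\tfrac{\epsilon}{4}$ omits the input-mixedness contribution $\tfrac{1}{2}\|\psi_{n-1}-\rho_{n-1}\|_1=x_{n-1}\le\tfrac{\epsilon}{2L\pi}$, which enters when you pass from $\Gammaopch{L}{\rho_{n-1}}(\psi_{n-1})$ to $\Gammaopch{L}{\rho_{n-1}}(\rho_{n-1})$; the paper absorbs this small term into the $\tfrac{\epsilon}{4}$ DME budget by requiring $\tfrac{7L^2}{M}+\tfrac{\epsilon}{2L\pi}<\tfrac{\epsilon}{4}$, still with $M=\mO(L^2\epsilon^{-1})$. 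With these two fixes your argument matches the paper's.
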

	\begin{proof}
		The preparation is performed by implementing nested fixed-point Grover search with DME memory-usage queries.
		Firstly, we choose the angles $\{\alpha_{l}(q),\beta_{l}(q)\}_{l}$ for the dynamic unitary $\Gammaopch{L}{\psi_{n-1}}$. 
		Following Lemma~\ref{lem:NFGS_good_parameter_q}, we set $q = \sech((2L+1)\arcsech(\delta_{n-1}))$, or equivalently, $q = \delta_{n}-\epsilon$.
		Secondly, we denote the total number of DME queries $\hE{s}{\rho_{n-1}}$ used for implementing $\Gammaopch{L}{\psi_{n-1}}$ as $M$.
		Since $L$ memory-calls (partial reflections around $\rho_{n-1}$) are contained in $\Gammaopch{L}{\psi_{n-1}}$, each memory-call employs $\frac{M}{L}$ DME memory-usage queries.
		The implemented channel reads
		\begin{align}\label{eq:Gammaerr_def}
			\Gammaerr{L}{\rho_{n-1}} \coloneq  \left(\hE{s_{L}}{\rho_{n-1}}\right)^{\frac{M}{L}}\circ\Eopch{\beta_{L}}{\tau}\circ\cdots\circ  \left(\hE{s_{1}}{\rho_{n-1}}\right)^{\frac{M}{L}}\circ\Eopch{\beta_{1}}{\tau}\ ,
		\end{align}
		with $s_{l} = \frac{\alpha_{l}L}{M}$, the angle $\alpha_{l}$ distributed equally to $\frac{M}{L}$ DME queries.
		The output state of the channel is
		\begin{align}
			\rho'_{n} := \Gammaerr{L}{\rho_{n-1}}\left(\rho_{n-1}\right) = (1-2x_{n}')\psi_{n} + x_{n}'\1_{\rm rel}\  .
		\end{align}
		We additionally define the state after an exact unitary channel $\Gammaopch{L}{\rho_{n-1}}$ as
		\begin{align}\label{eq:tilde_rho_def}
			\tilde{\rho}_{n} := \Gammaopch{L}{\rho_{n-1}}\left(\rho_{n-1}\right) = (1-2x_{n-1})\phi_{n} + x_{n-1}\1_{\rm rel}\ .
		\end{align}
		Since $\Gammaopch{L}{\rho_{n-1}}$ is a unitary channel, it leaves the projector $\1_{\rm rel}$ intact in Eq.~\eqref{eq:tilde_rho_def}, and the associated pure state
		\begin{align}
			\phi_{n} = \Gammaopch{L}{\rho_{n-1}}(\psi_{n-1})\ ,
		\end{align}
		while the mixedness parameter for $\tilde{\rho}_{n}$ is identical to that of $\rho_{n-1}$.
		We prove the Lemma starting from the triangle inequality $\lVert \tau - \rho'_{n}\rVert_{1} \leq \lVert \tau - \phi_{n}\rVert_{1} +\lVert \phi_{n} - \rho'_{n}\rVert_{1}$ and upper bounding each term in the RHS.
		
		\begin{itemize}
			\item \emph{Upper bound for $\lVert \tau - \phi_{n} \rVert_{1}$}:
			Another triangle inequality gives 
			\begin{align}\label{eq:err1}
				\lVert \tau - \phi_{n} \rVert_{1} = \left\lVert \tau - \Gammaopch{L}{\rho_{n-1}}(\psi_{n-1})\right\rVert_{1} \leq \left\lVert \tau - \Gammaopch{L}{\psi_{n-1}}(\psi_{n-1})\right\rVert_{1} + \left\lVert \Gammaopch{L}{\psi_{n-1}}(\psi_{n-1}) - \Gammaopch{L}{\rho_{n-1}}(\psi_{n-1})\right\rVert_{1}\ .
			\end{align}
			The second term in the RHS is bounded by the trace norm distance of channels $\Gammaopch{L}{\psi_{n-1}}$ and $\Gammaopch{L}{\rho_{n-1}}$.
			Applying Proposition~\ref{pro:dist_unitaries_psi_rho} and using the assumption $x_{n-1}\leq \frac{\epsilon}{2L\pi}$,
			\begin{align}\label{eq:tracenorm_pureins_vs_mixed}
				\left\lVert \Gammaopch{L}{\psi_{n-1}}(\psi_{n-1}) - \Gammaopch{L}{\rho_{n-1}}(\psi_{n-1})\right\rVert_{1}\leq \left\lVert \Gammaopch{L}{\psi_{n-1}} - \Gammaopch{L}{\rho_{n-1}} \right\rVert_{\rm Tr} \leq \epsilon\ .
			\end{align}
			The other term $\lVert \tau - \Gammaopch{L}{\psi_{n-1}}(\psi_{n-1})\rVert_{1}$ resembles the final distance after an exact nested fixed-point Grover search in Lemma~\ref{lem:NFGS_good_parameter_q}.
			However, we cannot directly apply Lemma~\ref{lem:NFGS_good_parameter_q}, since we do not know the initial distance $\lVert \tau - \psi_{n-1}\rVert_{1}$ exactly; we only assume that it is bounded by $\delta_{n-1}$.
			Hence, we must start from the more general relation Eq.~\eqref{eq:distance_recursion_1}, which gives the equality
			\begin{align}\label{eq:distance_exact_with_unknowndistance}
				\frac{1}{2}\left\lVert \tau - \Gammaopch{L}{\psi_{n-1}}(\psi_{n-1}) \right\rVert_{1} = (\delta_{n}-\epsilon)T_{2L+1}\left(T_{1/(2L+1)}\left((\delta_{n}-\epsilon)^{-1}\right)\frac{1}{2}\left\lVert \tau - \psi_{n-1} \right\rVert_{1}\right)\ ,
			\end{align}
			substituting $q = \delta_{n}-\epsilon$.
			Recalling that $\delta_{n} - \epsilon = (T_{(2L+1)}(\delta_{n-1}^{-1}))^{-1}$ and Eq.~\eqref{eq:Chebyshev_multiplication}, we get $T_{1/(2L+1)}((\delta_{n}-\epsilon)^{-1}) = \delta_{n-1}^{-1}$.
			Eq.~\eqref{eq:distance_exact_with_unknowndistance} then becomes
			\begin{align}
				\frac{1}{2}\left\lVert \tau - \Gammaopch{L}{\psi_{n-1}}(\psi_{n-1}) \right\rVert_{1} =(\delta_{n}-\epsilon)T_{2L+1}\left(\delta_{n-1}^{-1}\frac{1}{2}\left\lVert \tau - \psi_{n-1} \right\rVert_{1}\right)\ .
			\end{align}
			Chebyshev polynomials follow $T_{2L+1}(x)\leq 1$ whenever $x\in[0,1]$, and the argument of $T_{2L+1}$ in the RHS, $\delta_{n-1}^{-1}\frac{1}{2}\lVert \tau - \psi_{n-1} \rVert_{1} $ is indeed smaller than or equal to $1$ from the assumption of the Lemma.
			Then $\frac{1}{2}\lVert \tau - \Gammaopch{L}{\psi_{n-1}}(\psi_{n-1}) \rVert_{1}\leq \delta_{n}-\epsilon$ follows, and we finally obtain the bound
			\begin{align}\label{eq:dist_tau_phi}
				\frac{1}{2}\lVert \tau - \phi_{n} \rVert_{1} \leq  \delta_{n}-\frac{\epsilon}{2}\ .
			\end{align}
			
			\item \emph{Upper bound for $\lVert \phi_{n} - \rho'_{n} \rVert_{1}$}:
			Use the triangle inequality to obtain
			\begin{align}\label{eq:phi_psi_distance}
				\lVert \phi_{n} - \rho'_{n} \rVert_{1} &\leq \left\lVert \phi_{n} - \Gammaerr{L}{\rho_{n-1}}(\psi_{n-1}) \right\rVert_{1} + \left\lVert \Gammaerr{L}{\rho_{n-1}}(\psi_{n-1}) - \rho'_{n}\right\rVert_{1}\no \\
				&= \left\lVert \Gammaopch{L}{\rho_{n-1}}(\psi_{n-1}) - \Gammaerr{L}{\rho_{n-1}}(\psi_{n-1}) \right\rVert_{1} + \left\lVert \Gammaerr{L}{\rho_{n-1}}(\psi_{n-1}) - \Gammaerr{L}{\rho_{n-1}}(\rho_{n-1})\right\rVert_{1}\ .
			\end{align}
			The first term is bounded by 
			\begin{align}
				\left\lVert \Gammaopch{L}{\rho_{n-1}}(\psi_{n-1}) - \Gammaerr{L}{\rho_{n-1}}(\psi_{n-1}) \right\rVert_{1} \leq \left\lVert \Gammaopch{L}{\rho_{n-1}} - \Gammaerr{L}{\rho_{n-1}}\right\rVert_{\rm Tr}\leq \sum_{l=1}^{L}\frac{M}{L} \left\lVert \Eopch{s_{l}}{\rho_{n-1}} - \hE{s_{l}}{\rho_{n-1}} \right\rVert_{\rm Tr}\ ,
			\end{align}
			where the second inequality follows from comparing Eqs.~\eqref{eq:gammaopch_rho} and~\eqref{eq:Gammaerr_def}.
			Making use of the explicit expressions in Eq.~\eqref{eq:variables} again, the eigenvalues of $(\Eopch{s}{\rho} - \hE{s}{\rho})^{2}$ are evaluated to be degenerate and of order $\lambda = \mO(s^{4})$. 
			Numerical maximization over all possible values of $s_{l}$ and $\rho_{n-1}$ gives
			$\frac{1}{2}\lVert \Eopch{s_{l}}{\rho_{n-1}} - \hE{s_{l}}{\rho_{n-1}} \rVert_{\rm Tr} < 0.71 s_{l}^{2}$.
			Each $s_{l}$ can be bounded as $\abs{s_{l}} = \abs{\frac{\alpha_{l}L}{M}}\leq \frac{\pi L}{M}$, which leads to a bound
			\begin{align}\label{eq:tr_diff_uni_nonuni_rho}
				\frac{1}{2}\left\lVert \Gammaopch{L}{\rho_{n-1}}(\psi_{n-1}) - \Gammaerr{L}{\rho_{n-1}}(\psi_{n-1}) \right\rVert_{1}  \leq \frac{1}{2}\left\lVert \Gammaopch{L}{\rho_{n-1}} - \Gammaerr{L}{\rho_{n-1}}\right\rVert_{\rm Tr} < \frac{7L^{2}}{M}\ .
			\end{align}
			
			The second term of Eq.~\eqref{eq:phi_psi_distance} is easier to bound; using data processing inequality, 
			\begin{align}\label{eq:tr_diff_uni_nonuni_rho_2}
				\frac{1}{2}\left\lVert \Gammaerr{L}{\rho_{n-1}}(\psi_{n-1}) - \Gammaerr{L}{\rho_{n-1}}(\rho_{n-1})\right\rVert_{1} \leq \frac{1}{2}\lVert \psi_{n-1} - \rho_{n-1}\rVert_{1} = x_{n-1} \leq \frac{\epsilon}{2L\pi}\ .
			\end{align}
			Therefore, 
			\begin{align}
				\frac{1}{2}\lVert \phi_{n} - \rho'_{n}\rVert_{1} <\frac{7L^{2}}{M} + \frac{\epsilon}{2L\pi}
			\end{align}
		\end{itemize}
		
		Combining Eqs.~\eqref{eq:dist_tau_phi} and~\eqref{eq:tr_diff_uni_nonuni_rho_2}, we achieve
		\begin{align}
			\frac{1}{2}\lVert \tau - \rho'_{n}\rVert_{1} < \delta_{n} - \frac{\epsilon}{2} + \frac{7L^{2}}{M} + \frac{\epsilon}{2L\pi}.
		\end{align}
		With $M = \mO(L^{2}\epsilon^{-1})$, we can make 
		\begin{align}\label{eq:M_bound}
			\frac{7L^{2}}{M} + \frac{\epsilon}{2L\pi} < \frac{\epsilon}{4}\ ,
		\end{align}
		which proves Eq.~\eqref{eq:delta_n_prop_2}.
		The number of $\rho_{n-1}$ needed for the implementation is $M = \mO(L^{2}\epsilon^{-1})$, which concludes the proof.
	\end{proof}
	Lemma~\ref{lem:mixed_step_1} indicates that as long as the mixedness parameter stays low throughout the iterations, the total implementation error of the QDP Grover search can be kept bounded up to a desired accuracy.
	Although the mixedness parameter generally increases, we can bound the increment. 
	Consider the distance $\frac{1}{2}\lVert \rho'_{n} - \tilde{\rho}_{n} \rVert_{1}$: it is minimized when their associated pure states are identical, i.e. 
	\begin{align}
		\frac{1}{2}\lVert \rho'_{n} - \tilde{\rho}_{n} \rVert_{1} \geq \left\lVert (x'_{n}-x_{n-1})\psi_{n} - (x'_{n}-x_{n-1})\frac{\1_{\rm rel}}{2}\right\rVert_{1} = \abs{x'_{n} - x_{n-1}}.
	\end{align}
	On the other hand, 	
	\begin{align}
		\frac{1}{2}\lVert \rho'_{n} - \tilde{\rho}_{n} \rVert_{1} \leq \frac{1}{2}\left\lVert \Gammaerr{L}{\rho_{n-1}}(\rho_{n-1}) - \Gammaopch{L}{\rho_{n-1}}(\rho_{n-1})  \right\rVert_{\rm 1} \leq \frac{1}{2}\left\lVert \Gammaopch{L}{\rho_{n-1}} - \Gammaerr{L}{\rho_{n-1}}\right\rVert_{\rm Tr} < \frac{7L^{2}}{M}\ ,
	\end{align}
	from Eq.~\eqref{eq:tr_diff_uni_nonuni_rho}.
	Therefore, 
	\begin{align}\label{eq:mixedness_para_bound}
		0<x'_{n} < x_{n-1} + \frac{7L^{2}}{M} < \frac{\epsilon}{4}\ ,
	\end{align}
	since we set $ \frac{7L^{2}}{M} < \frac{L\pi - 2}{4L\pi}\epsilon$ in Eq.~\eqref{eq:M_bound}.
	
	The mixedness factor $x'_{n}$ can be reduced into $x_{n}\leq \frac{\epsilon}{2L\pi}$ using the IMR protocol Lemma~\ref{lemma:mixedness_reduction_subroutine}, and we can continue the DME implementation without an accumulating error.
	Since we eventually change the factor $x'_{n}$, it is more important to get a bound for $\frac{1}{2}\lVert \tau - \psi_{n}\rVert_{1}$ corresponding to the associated pure state, rather than $\frac{1}{2}\lVert \tau - \rho'_{n} \rVert_{1}$ as in Lemma~\ref{lem:mixed_step_1}.

	\begin{cor}\label{lem:mixed_step_2}
		The resulting state $\rho'_{n} = (1-2x_{n}')\psi_{n} + x'\1_{\rm rel} $ of Lemma~\ref{lem:mixed_step_1} has the associated pure state, such that
		\begin{align}\label{eq:delta_n_prop_3}
			\frac{1}{2}\lVert \tau - \psi_{n} \rVert_{1} &\leq \delta_{n}\ ,
		\end{align}
		where $\delta_{n}$ is defined in Eq.~\eqref{eq:delta_n_prop}.
	\end{cor}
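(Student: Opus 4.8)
The plan is to transfer the mixed-state distance bound of Lemma~\ref{lem:mixed_step_1} onto the associated pure state $\psi_n$, exploiting that the slack $-\epsilon/4$ deliberately built into Eq.~\eqref{eq:delta_n_prop_2} was engineered precisely to absorb the residual mixedness of $\rho'_n$. The only genuinely new step is a triangle inequality separating $\psi_n$ from $\rho'_n$, followed by an exact evaluation of the trace distance between the two eigenprojectors of $\rho'_n$.

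First I would write
\begin{align}
\frac{1}{2}\lVert \tau - \psi_n\rVert_1 \leq \frac{1}{2}\lVert \tau - \rho'_n\rVert_1 + \frac{1}{2}\lVert \rho'_n - \psi_n\rVert_1,
\end{align}
and control the first term by $\delta_n - \frac{\epsilon}{4}$ directly through Eq.~\eqref{eq:delta_n_prop_2}.

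Next I would evaluate the second term exactly. Working inside the two-dimensional subspace $\mH_{\rm rel}$, which contains all relevant states by Corollary~\ref{cor:NFGS_stays_qubit}, the eigendecomposition reads $\rho'_n = (1-x'_n)\psi_n + x'_n \dm{\psi_n^\perp}$ with $\ket{\psi_n^\perp}$ the pure state orthogonal to $\ket{\psi_n}$ inside $\mH_{\rm rel}$, so that $\rho'_n - \psi_n = x'_n(\dm{\psi_n^\perp} - \psi_n)$. Since two orthogonal pure states are perfectly distinguishable, the operator $\dm{\psi_n^\perp} - \psi_n$ has eigenvalues $\pm 1$, giving $\frac{1}{2}\lVert \dm{\psi_n^\perp} - \psi_n\rVert_1 = 1$ and hence $\frac{1}{2}\lVert \rho'_n - \psi_n\rVert_1 = x'_n$.

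Finally I would invoke the mixedness bound $x'_n < \frac{\epsilon}{4}$ from Eq.~\eqref{eq:mixedness_para_bound}, so that
\begin{align}
\frac{1}{2}\lVert \tau - \psi_n\rVert_1 \leq \Big(\delta_n - \frac{\epsilon}{4}\Big) + x'_n < \delta_n,
\end{align}
which establishes Eq.~\eqref{eq:delta_n_prop_3}. I do not anticipate a real obstacle here: the argument is essentially bookkeeping, and the one point requiring care is the exact identity $\frac{1}{2}\lVert \rho'_n - \psi_n\rVert_1 = x'_n$, which hinges on $\ket{\psi_n^\perp}$ being a bona fide pure state orthogonal to $\ket{\psi_n}$ — a fact guaranteed by the effective two-dimensionality of $\mH_{\rm rel}$ proved in Corollary~\ref{cor:NFGS_stays_qubit}.
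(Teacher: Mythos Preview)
Your proof is correct and follows exactly the same approach as the paper: triangle inequality, then bound the first term by $\delta_n - \tfrac{\epsilon}{4}$ via Lemma~\ref{lem:mixed_step_1} and the second term by $x'_n < \tfrac{\epsilon}{4}$ via Eq.~\eqref{eq:mixedness_para_bound}. Your explicit justification of $\tfrac{1}{2}\lVert \rho'_n - \psi_n\rVert_1 = x'_n$ through the two-dimensionality of $\mH_{\rm rel}$ is slightly more detailed than the paper, which simply asserts that identity.
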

	\begin{proof}
		The distance in Eq.~\eqref{eq:delta_n_prop_3} follows the triangle inequality $\frac{1}{2}\lVert \tau - \psi_{n} \rVert_{1} \leq \frac{1}{2}\lVert \tau - \rho'_{n} \rVert_{1} +\frac{1}{2}\lVert \rho'_{n} - \psi_{n} \rVert_{1}$.
		The first term in the RHS is already bounded by $\delta_{n}-\frac{\epsilon}{4}$ in Lemma~\ref{lem:mixed_step_1}.
		The second term $\frac{1}{2}\lVert\rho'_{n}- \psi_{n} \rVert_{1} = x'_{n} <\frac{\epsilon}{4} $ using Eq.~\eqref{eq:mixedness_para_bound}. 
	\end{proof}

		\subsection{QDP Grover search with linear circuit depth and exponential circuit width}\label{subsec:delta_n_analysis}
		
		In this subsection, we repeat the analysis in Theorem~\ref{thm:qdp_pure} for QDP nested fixed-point Grover search algorithm. 
		This time, we can obtain scaling factors more precisely by combining the results from Section~\ref{app:QDP_NFGS_convergence}.
		As a result, we obtain the required number of initial state as a function of $L$, $\epsilon$, and $p_{\rm th}$ denoting the number of partial reflections of each recursion step, the threshold implementation error parameter, and the threshold failure probability.
		
		\begin{lem}[QDP Grover search]\label{lem:dynamic_Grover_technical}
			Consider the nested fixed-point Grover search with $ N $ recursions and $2L$ partial reflections at each recursion step, and let
			\begin{itemize}
				\item $\tau = \dm{\tau}$ be the target state of the search,
				\item $ \psi_{0} = \dm{\psi_{0}} $ be the initial state whose distance to the target $ \frac{1}{2}\lVert \tau - \psi_{0}\rVert_{1} \eqcolon \delta_{0} $, 
				\item $\epsilon\in (0,\frac{2}{3})$ be the threshold implementation error parameter,
				\item $p_{\mathrm{th}} \in (0,1)$ be the threshold failure probability of the algorithm.
			\end{itemize}
			Then, the QDP implementation of the search prepares the final state $\rho_{N}$ with
			\begin{itemize}
				\item the distance to the target state $\frac{1}{2}\lVert \tau - \rho_{N} \rVert_{1} \leq  \delta_{N} + \frac{\epsilon}{2L\pi}$, where $\delta_{N}$ is defined recursively for any $n>0$ as
				\begin{align}\label{eq:modified_step_error}
					\delta_{n} = \sech\left((2L+1)\arcsech\left(\delta_{n-1}\right)\right) + \epsilon,
				\end{align}
				\item the success probability greater than $1-p_{\mathrm{th}}$,
				\item a circuit of depth $ \mO(L^{2}N\epsilon^{-1}) $, and 
				\item a number of initial state $\psi_{0}$ scaling as $ L^{(2+\Delta)N}\epsilon^{-N}\log(p_{\rm th}^{-1}) $ with some positive constant $\Delta$.
			\end{itemize}
			In contrast, the unfolding implementation of the nested fixed-point Grover search has a circuit depth $\mO((2L)^{N})$ and a constant width.
		\end{lem}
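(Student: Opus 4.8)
The plan is to run the nested fixed-point Grover recursion for $N$ steps, where each step is the DME implementation subroutine of Lemma~\ref{lem:mixed_step_1} followed by the mixedness reduction subroutine of Lemma~\ref{lemma:mixedness_reduction_subroutine}, and to track four quantities in parallel across the iteration: the trace distance to $\tau$, the circuit depth, the number of root-state copies (width), and the cumulative success probability. The engine is an inductive invariant: entering step $n$, every available copy of the memory state $\rho_{n-1} = (1-2x_{n-1})\psi_{n-1} + x_{n-1}\1_{\rm rel}$ has associated pure state within $\frac{1}{2}\lVert\tau-\psi_{n-1}\rVert_1 \leq \delta_{n-1}$ and mixedness $x_{n-1}\leq\frac{\epsilon}{2L\pi}$, which are exactly the hypotheses of Lemma~\ref{lem:mixed_step_1}. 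Corollary~\ref{cor:NFGS_stays_qubit} guarantees all states remain in $\mS_{\rm rel}$, so this two-parameter description stays valid throughout.

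For accuracy I would induct on $n$. The base case is immediate, since $\rho_0=\psi_0$ is pure ($x_0=0$) with $\frac{1}{2}\lVert\tau-\psi_0\rVert_1=\delta_0$. In the inductive step, Lemma~\ref{lem:mixed_step_1} with parameter $q=\delta_n-\epsilon$ produces $\rho'_n$, Corollary~\ref{lem:mixed_step_2} bounds its associated pure state by $\frac{1}{2}\lVert\tau-\psi_n\rVert_1\leq\delta_n$, and Eq.~\eqref{eq:mixedness_para_bound} bounds its mixedness by $x'_n<\frac{\epsilon}{4}<\frac13$ (using $\epsilon<\frac23$). I then apply the mixedness reduction subroutine of Lemma~\ref{lemma:mixedness_reduction_subroutine} with reduction factor $\mathsf{g}=\frac{L\pi}{2}$, lowering the mixedness to $x_n\leq x'_n/\mathsf{g}<\frac{\epsilon}{2L\pi}$; crucially, this subroutine leaves the dominant eigenvector untouched, so the bound $\frac{1}{2}\lVert\tau-\psi_n\rVert_1\leq\delta_n$ survives and the invariant closes. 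After $N$ steps the final state obeys $\frac{1}{2}\lVert\tau-\rho_N\rVert_1\leq\frac{1}{2}\lVert\tau-\psi_N\rVert_1+x_N\leq\delta_N+\frac{\epsilon}{2L\pi}$, with $\delta_N$ satisfying the recursion Eq.~\eqref{eq:modified_step_error}; its unperturbed part is the exact Grover relation of Lemma~\ref{lem:NFGS_good_parameter_q}, whose closed form is Eq.~\eqref{eq:delta_N_explicit}.

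The resource counts follow the bookkeeping in the proof of Theorem~\ref{thm:qdp_pure}. Each step spends $M=\mO(L^2\epsilon^{-1})$ DME queries arranged sequentially, giving depth $\mO(L^2\epsilon^{-1})$; the $R=\mO(\log\mathsf{g})=\mO(\log L)$ IMR rounds contribute only subdominant depth, so the total is $\mO(L^2 N\epsilon^{-1})$. For the width, let $\mathtt{I}_n$ and $\mathtt{O}_n$ denote the copy counts of $\rho_{n-1}$ and $\rho'_n$ before and after the DME implementation at step $n$. Producing each output copy costs $M$ inputs, so $\mathtt{I}_n=\mO(L^2\epsilon^{-1})\,\mathtt{O}_n$, while Eq.~\eqref{eq:On_in_QDPpureThm} gives $\mathtt{O}_n=(2/c)^R\,\mathtt{I}_{n+1}$. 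With $c$ held at a constant (see below) and $R=\mO(\log L)$, the IMR overhead is $(2/c)^R=L^{\Delta}$ for some constant $\Delta>0$, which also absorbs the constant-base exponential $e^{\mO(N)}$ from the $\mO$-constants into the base $L$. Chaining the per-step factor $L^{2+\Delta}\epsilon^{-1}$ over the $N$ steps then gives $\mathtt{I}_1=L^{(2+\Delta)N}\epsilon^{-N}\,\mathtt{I}_{N+1}$.

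Finally, imposing per-round failure threshold $q_{\rm th}=p_{\rm th}/N$ makes the probability that all $N$ reductions succeed at least $(1-p_{\rm th}/N)^N\geq 1-p_{\rm th}$, exactly as in Eq.~\eqref{eq:pth_in_QDPpureThm}. The main obstacle, and the one point needing care, is keeping the survival rate $c$ of Eq.~\eqref{eq:c_for_mixedness_reduction} bounded away from zero at every step: $c$ degrades as the number of processed copies shrinks, and at the last step only one copy of $\rho_N$ is strictly required. I resolve this by allowing redundancy—demanding $\mathtt{I}_{N+1}=\mO(\log(N/p_{\rm th}))$ copies of the final state forces $c=\frac12$ at every step through Eq.~\eqref{eq:c_for_mixedness_reduction}—which fixes $\mathtt{I}_{N+1}=\mO(\log(p_{\rm th}^{-1}))$ and yields the claimed width $L^{(2+\Delta)N}\epsilon^{-N}\log(p_{\rm th}^{-1})$. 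The contrasting unfolding cost, $\mO((2L)^N)$ depth at constant width, is the covariance-based expansion analyzed in Sec.~\ref{appendix:nested} (cf. Eq.~\eqref{eq:nested_length_scaling}).
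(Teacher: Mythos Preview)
Your proposal is correct and follows essentially the same approach as the paper's proof: both combine the DME implementation subroutine (Lemma~\ref{lem:mixed_step_1} and Corollary~\ref{lem:mixed_step_2}) with the mixedness reduction subroutine (Lemma~\ref{lemma:mixedness_reduction_subroutine}) under the inductive invariant $x_{n-1}\leq\frac{\epsilon}{2L\pi}$ and $\frac{1}{2}\lVert\tau-\psi_{n-1}\rVert_1\leq\delta_{n-1}$, choose $\mathsf{g}=\frac{L\pi}{2}$, $q_{\rm th}=p_{\rm th}/N$, and $c=\tfrac12$ via final-copy redundancy, and arrive at the same depth and width bookkeeping through the $\mathtt{I}_n,\mathtt{O}_n$ chain. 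Your write-up is slightly more explicit about the induction structure, but the content matches the paper.
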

		\begin{proof}
			
			The algorithm consists of $ N $ iterations of the recursion step, where each iteration involves two subroutines: a DME implementation in Lemma~\ref{lem:mixed_step_1} followed by a mixedness reduction subroutine in Lemma~\ref{lemma:mixedness_reduction_subroutine}.
			The proof is very similar to that of Theorem~\ref{thm:qdp_pure}, except the parts where we explicitly consider the dependence on $L$.
			\begin{enumerate}
				\item At a DME implementation subroutine, $ \mathtt{I}_{n} $ copies of $ \rho_{n-1} $ are transformed to $ \mathtt{O}_{n} $ copies of $ \rho_{n}' $ deterministically.
				Input state 
				\begin{align}
					\rho_{n-1} = (1-2x_{n-1})\psi_{n-1} + x_{n-1}\1_{\rm rel} 
				\end{align}
				is assumed to satisfy $ x_{n-1}\leq\frac{\epsilon}{2L\pi} $ and $ \frac{1}{2}\lVert \tau - \psi_{n-1}\rVert_{1} \leq \delta_{n-1} $, while the output state
				\begin{align}\label{eq:rho_n_prime}
					\rho_{n}' = \Gammaerr{L}{\rho_{n-1}}(\rho_{n-1}) = (1-x_{n}')\psi_{n} + x_{n}'\frac{\1_{\rm rel}}{2}\ ,
				\end{align}
				satisfies $ x'_{n}<\frac{\epsilon}{4} $ from Eq.~\eqref{eq:mixedness_para_bound}, and $ \frac{1}{2}\lVert \tau - \psi_{n} \rVert_{1} \leq \delta_{n} $ by Corollary~\ref{lem:mixed_step_2}. 
				For each channel $ \Gammaerr{L}{\rho_{n-1}} $ implementation, $ \mO(L^{2}\epsilon^{-1}) $ DME queries are made and consequently $ \mO(L^{2}\epsilon^{-1}) $ copies of $ \rho_{n-1} $ are consumed. 
				Plus, $ L $ partial reflections $ \Eopch{s}{\tau} $ around the target state are also applied.
				In sum, we need 
				\begin{align}
					\mathtt{I}_{n} = \mO\left(\frac{L^{2}}{\epsilon}\right)\mathtt{O}_{n}
				\end{align} 
				copies of $ \rho_{n-1} $ with depth $ \mO(\frac{L^{2}}{\epsilon}) $ elementary gates for the subroutine. 
				
				\item At a mixedness reduction subroutine, described in Lemma~\ref{lemma:mixedness_reduction_subroutine}, $ \mathtt{O}_{n} $ copies of $ \rho_{n}' $, output from the DME implementation subroutine (Eq.~\eqref{eq:rho_n_prime}), are transformed into $ \mathtt{I}_{n+1} $ copies of $ \rho_{n} = (1-2x_{n})\psi_{n} + x_{n}\1_{\rm rel} $, such that $ x_{n}\leq\frac{\epsilon}{2L\pi} $, with a guarantee that the success probability $ q_{\rm succ} $ is higher than $ 1 - q_{\rm th} $ for some threshold value $ q_{\rm th} $. 
				Following Eqs.~\eqref{eq:On_in_QDPpureThm} and~\eqref{eq:pth_in_QDPpureThm}, we can set $ q_{\rm th}= \frac{p_{\rm th}}{N} $ and $\mathtt{O}_{n} = (\frac{2}{c})^{R}\mathtt{I}_{n+1}$, where $ R $ is the number of IMR protocol rounds proportional to the $ \log $ of rate of reduction $ \frac{x_{n}'}{x_{n}} $, while $ c $ is the factor that decides how many transformed states survive after a round of the IMR protocol.
				The rate of reduction $ \frac{x_{n}'}{x_{n}} = \frac{L\pi}{2}$, giving $R = \mO(\log(\mathsf{g})) = \mO(\log(L)) $, while we can set $c = \frac{1}{2}$ from Eq.~\eqref{eq:fix_c}.
				Combining all,
				\begin{align}
					\mathtt{O}_{n} = 4^{\mO\left(\log(L)\right)}\mathtt{I}_{n+1} = \mathrm{poly}(L)\mathtt{I}_{n+1}\ ,
				\end{align}
				and circuit depth $ \mO(R) = \mO(\log(L)) $ from $ R $ repetitions of interferential mixedness reduction protocols (Lemma~\ref{lemma:purification_genearlised}) is added. 
			\end{enumerate}
			Finally, we account for the redundancy of the final number of copies $I_{N+1}= \mO(\log(Np_{\rm th}^{-1}))$ in Eq.~\eqref{eq:fix_c}, introduced to guarantee high success probability and fix the survival rate $c$.
			Then the initial number of copies needed 
			\begin{align}
				I_{1} = L^{(2+\Delta)N}\epsilon^{-N}\log(p_{\rm th}^{-1}),
			\end{align}
			where $\Delta$ is the additional exponent coming from the mixedness reduction rounds. 
			The number of initial copies is exponential in $N$ and grows faster than $L^{2N}$ as a function of $L$.
			Furthermore, the circuit depth scales as 
			\begin{align}
				\mO\left(\log(L)+\frac{L^{2}}{\epsilon}\right)N = \mO\left(\frac{NL^{2}}{\epsilon}\right)\ ,
			\end{align}
			which is linear in $ N $.
		\end{proof}

		\newcommand{\tdel}{\tilde{\delta}}
		\newcommand{\tL}{\tilde{L}}
		\newcommand{\tilh}{\tilde{h}}
		\subsection{From Lemma~\ref{lem:dynamic_Grover_technical} to Theorem~\ref{thm:dynamic_Grover}}\label{app:delta_convergence}
		
		In Lemma~\ref{lem:dynamic_Grover_technical}, we have shown that the distance $ \delta_{n} $ between the final state and the target state $\tau$ after $ n $ QDP Grover search iterations follows the recursion relation Eq.~\eqref{eq:modified_step_error}. 
		In this subsection, we demonstrate that for a reasonably small implementation error $ \epsilon $, the sequence $ \delta_{n} $ does not deviate from the sequence of distance obtained from ideal, errorless iterations.
		To do this, we set $ \delta_{n+1} = h(\delta_{n}) + \epsilon $, with the function 
		\begin{align}\label{eq:function_f}
			h(x) = \sech\left((2L+1)\arcsech(x)\right) \ .
		\end{align}
		Note that Eq.~\eqref{eq:function_f} gives distance recurrence relations for the exact nested fixed-point Grover search (Lemma~\ref{lem:NFGS_good_parameter_q}), i.e. $\frac{1}{2}\lVert \tau - \psi_{n}\rVert_{1} = h(\frac{1}{2}\lVert \tau - \psi_{n-1}\rVert_{1})$ for $\psivec{n} = \Gammaopch{L}{\psi_{n-1}}\psivec{n-1}$.
		By finding the range for fast spectral convergence (Def.~\ref{def:fast_spectral_convergence}), i.e. the range where its derivative $h'(x) <r$ for some $r<1$ and $h(x)+\varepsilon < x$, we demonstrate that the resulting distance from $N$ iterations of the QDP implementation $\delta_{N}$ can still be arbitrarily close to that of the exact implementation.
		
		As a first step, we establish the convexity of $ h(x) $ in the range $ x\in(0,1) $.
		\begin{pro}\label{pro:convexity}
			The function $ h(x) $ defined as in Eq.~\eqref{eq:function_f} is monotonically increasing and convex in the range $ x\in(0,1) $.
		\end{pro}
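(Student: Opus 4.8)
The plan is to eliminate the nested transcendental functions through the substitution $x=\sech\theta$ with $\theta\in(0,\infty)$, under which $h$ collapses to $h=\sech(m\theta)$ where $m\coloneqq 2L+1$. Since $\arcsech$ is a strictly decreasing bijection from $(0,1)$ onto $(0,\infty)$, both monotonicity and convexity of $h$ as a function of $x$ can be extracted from the parametric pair $\bigl(\sech\theta,\sech(m\theta)\bigr)$; in particular, a statement about $x$ increasing translates into one about $\theta$ decreasing.

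For monotonicity I would differentiate directly. Using $\tfrac{d}{dx}\arcsech(x)=-1/\bigl(x\sqrt{1-x^2}\bigr)$ together with the chain rule, and noting that at $x=\sech\theta$ one has $\sqrt{1-x^2}=\tanh\theta$, a short computation gives
\[
	h'(x)=\frac{m\,\sech(m\theta)\tanh(m\theta)}{\sech\theta\,\tanh\theta}\ ,
\]
which is manifestly positive for all $\theta>0$. Hence $h$ is strictly increasing on $(0,1)$.

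For convexity I would use that $h$ is convex iff $h'$ is non-decreasing in $x$; because $\theta$ decreases as $x$ increases, this is equivalent to $P(\theta)\coloneqq h'\!\bigl(x(\theta)\bigr)$ being non-increasing in $\theta$, i.e. $P'(\theta)\le 0$. Taking the logarithmic derivative of the displayed expression yields the clean form
\[
	\frac{P'(\theta)}{P(\theta)}=m\,R(m\theta)-R(\theta)\ ,\qquad R(\phi)\coloneqq\coth\phi-2\tanh\phi\ ,
\]
and the decisive observation is that, writing $f(\phi)\coloneqq\phi\,R(\phi)$, the right-hand side equals $\bigl(f(m\theta)-f(\theta)\bigr)/\theta$. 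Since $m\ge 1$ forces $m\theta\ge\theta$, it suffices to show that $f$ is decreasing on $(0,\infty)$ — a single-variable claim with no dependence on $m$ at all. I would then compute $f'$ and reduce its sign to the numerator $\sinh\phi\cosh\phi\,(1-\sinh^2\phi)-\phi\,(1+3\sinh^2\phi)$, which is non-positive via the chain $\sinh\phi\cosh\phi\,(1-\sinh^2\phi)\le\sinh\phi\cosh\phi=\tanh\phi\,\cosh^2\phi\le\phi\,\cosh^2\phi=\phi\,(1+\sinh^2\phi)\le\phi\,(1+3\sinh^2\phi)$, where the only nontrivial input is the elementary bound $\tanh\phi\le\phi$ for $\phi>0$.

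The main obstacle is arranging the convexity argument so that the apparent joint dependence on $m$ and $\theta$ collapses: the key trick is the factorization through $f(\phi)=\phi\,R(\phi)$, which converts ``$P$ non-increasing for every fixed $m$'' into the $m$-free monotonicity of $f$, after which the estimate is routine. The degenerate case $L=0$ ($m=1$, $h(x)=x$) is linear and trivially convex, so one may assume $m\ge 3$. I would present this convexity claim alongside the monotonicity claim, as both follow from the same parametrization and are the ingredients needed subsequently to establish fast spectral convergence.
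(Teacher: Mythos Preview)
Your proof is correct and follows essentially the same route as the paper: both parametrize via $y=\arcsech(x)$ (your $\theta$), reduce convexity to showing that $g(y)=h'(x(y))$ is decreasing, and identify the sign of $g'$ with $f(my)-f(y)$ for the very same function $f(\phi)=\phi(\coth\phi-2\tanh\phi)=\phi\,\frac{2\sech^2\phi-1}{\tanh\phi}$. Your write-up is in fact more complete, since the paper merely asserts that $f$ is monotonically decreasing, whereas you supply the explicit inequality chain establishing it.
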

		\begin{proof}
			Let us define $\tL \coloneq 2L+1$ for simplicity. 
			The derivatives of $h(x)$ become
			\begin{align}\label{eq:hprime_def}
				h'(x) &\coloneq \frac{dh(x)}{dx} = \frac{\tL\sech\left(\tL\arcsech(x)\right)\tanh\left(\tL\arcsech(x)\right)}{x\sqrt{1-x^{2}}} >0\ ,
			\end{align}
			from which the monotonicity of $h(x)$ follows.
			We define another variable $ y\coloneq\arcsech\left(x\right)\in(0,\infty) $ and 
			\begin{align}
				g(y) \coloneq \frac{\tL\sech\left(\tL y\right)\tanh\left(\tL y\right)}{\sech(y)\tanh(y)}\ ,
			\end{align}
			satisfying $g(\arcsech(x)) = h'(x)$. 
			Thus,
			\begin{align}
				h''(x) \coloneq \frac{d^{2}h(x)}{dx^{2}} = \frac{dg(y)}{dy}\frac{dy}{dx} = -\frac{g'(y)}{x\sqrt{1-x^{2}}}\ ,
			\end{align}
			and $h(x)$ is convex when $ g'(y)\coloneq \frac{dg(y)}{dy} $ is negative for $y>0$.
			From direct calculation
			\begin{align}
				g'(y) = \frac{\tL}{y}\frac{\cosh^{2}(y)\sinh(\tL y)}{\sinh(y)\cosh^{2}(\tL y)}\left[\tL y\frac{2\sech^{2}(\tL y) - 1}{\tanh(\tL y)} - y\frac{2\sech^{2}(y)-1}{\tanh(y)}\right]\ .
			\end{align}
			Note that the sign of $g'(y)$ is determined by terms inside the square bracket, which can be written as $f(\tL y) - f(y)$ by defining 
			\begin{align}
				f(y) \coloneq y\frac{2\sech^{2}(y)-1}{\tanh(y)}\ .
			\end{align}
			$f(y)$ is a monotonically decreasing function, hence $f(\tL y) - f(y)<0$ and $h(x)$ is convex.
		\end{proof}

		Due to convexity, $h'(x)<r$ can be guaranteed whenever $x<\delta^{*}$ such that $h'(\delta^{*}) = r$. 
		The threshold $\delta^{*}$ is the smallest when $L =1$ and grows as $L$ increases. 
		Setting $L = 1$, we numerically obtain that $\delta^{*} = 0.73$ is sufficient for $h'(x)<1$ for any $L$.
		With a more reasonable choice $L = 5$, the threshold becomes $\delta^{*}\simeq 0.93$. 
		Furthermore, Proposition~\ref{pro:convexity} also implies that when $\epsilon$ is small, $h(x)+\epsilon = x$ has two solutions $x_{1}$ and $x_{2}$ with $x_{1}<x_{2}$ in the range $x\in(0,1)$. 
		
		Assuming $\epsilon\ll 1$, we can evaluate the approximate value of $x_{1}$ and $x_{2}$.
		\begin{pro}\label{pro:x2}
			When $\epsilon \ll 1$, a larger solution $x_{2}$ satisfying $h(x)+\epsilon = x$ is
			\begin{align}
				x_{2} = 1 - \frac{1}{4(L^{2}+L)}\epsilon + \mO(\epsilon^{2})\ .
			\end{align}
		\end{pro}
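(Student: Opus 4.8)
The plan is to locate the root of $x - h(x) = \epsilon$ that lies near $x = 1$ and expand it in powers of $\epsilon$. First I would record the anchor point: since $\arcsech(1) = 0$ and $\sech(0) = 1$, we have $h(1) = 1$, so for $\epsilon = 0$ the larger root is exactly $x = 1$. For small $\epsilon$ the root $x_{2}$ guaranteed by Proposition~\ref{pro:convexity} sits just below $1$, so I would substitute $x = 1 - u$ with $u \to 0^{+}$ and determine $u$ to leading order in $\epsilon$.

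The key analytic step is to expand $h(1-u)$ around $u = 0$. Writing $y \coloneq \arcsech(1-u)$, the defining relation $\cosh(y) = (1-u)^{-1}$ lets me solve for $y^{2}$ as an analytic function of $u$: matching $1 + y^{2}/2 + \mO(y^{4})$ against $1 + u + \mO(u^{2})$ gives $y^{2} = 2u + \mO(u^{2})$. Crucially, $h$ depends on $y$ only through $\sech((2L+1)y)$, which is even in $y$ and hence analytic in $y^{2}$; this is exactly what tames the square-root branch point of $\arcsech$ at $x = 1$ and guarantees that $h(1-u)$ is a genuine power series in $u$. Using $\sech(z) = 1 - z^{2}/2 + \mO(z^{4})$ with $z = (2L+1)y$ and $z^{2} = (2L+1)^{2} y^{2} = 2(2L+1)^{2} u + \mO(u^{2})$ then yields $h(1-u) = 1 - (2L+1)^{2} u + \mO(u^{2})$.

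With this expansion in hand, the fixed-point condition becomes $x - h(x) = (1-u) - h(1-u) = [(2L+1)^{2} - 1]\,u + \mO(u^{2})$. Since $(2L+1)^{2} - 1 = 4(L^{2} + L)$, setting this equal to $\epsilon$ gives $4(L^{2}+L)\,u + \mO(u^{2}) = \epsilon$. The leading coefficient $4(L^{2}+L)$ is nonzero, so inverting the relation (by the implicit function theorem, or simply by series reversion) produces $u = \epsilon/[4(L^{2}+L)] + \mO(\epsilon^{2})$, whence $x_{2} = 1 - \epsilon/[4(L^{2}+L)] + \mO(\epsilon^{2})$, as claimed.

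I would expect the only genuine obstacle to be the bookkeeping around the branch point: a naive Taylor expansion of $\arcsech$ at $x = 1$ produces half-integer powers of $u$, so the argument must be organized so that only the even combination $\sech((2L+1)\arcsech(\cdot))$ — equivalently, only $y^{2}$ as an analytic function of $u$ — ever enters. Once that is arranged, the rest is a routine two-term expansion, and confirming that the discarded contributions are truly $\mO(\epsilon^{2})$ follows immediately from $u = \mO(\epsilon)$.
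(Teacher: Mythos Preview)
Your proposal is correct and follows essentially the same route as the paper: both set $x = 1 - u$ (the paper writes $y = 1 - x$) and use the leading behaviour $\arcsech(1-u) \sim \sqrt{2u}$ to extract the linear-in-$\epsilon$ term, arriving at $(2L+1)^{2} - 1 = 4(L^{2}+L)$ as the relevant coefficient. The only organisational difference is that the paper first rewrites $h(x)+\epsilon = x$ as $(2L+1)\arcsech(x) = \arcsech(x-\epsilon)$ and expands both sides to get $\tilde L\sqrt{2u} = \sqrt{2(u+\epsilon)} + \mO(u^{3/2})$, whereas you expand $h(1-u)$ directly; your remark that $\sech((2L+1)y)$ is even in $y$ and hence analytic in $u$ is a clean way to bypass the half-integer powers that the paper handles by having $\sqrt{\,\cdot\,}$ on both sides.
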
 
		\begin{proof}
			The equation $h(x)+\epsilon = x$ can be rewritten as $\tL \arcsech(x) = \arcsech(x-\epsilon)$.
			Let us define $y\coloneq1-x$ and assume that $y$ and $\epsilon$ are both small. 
			The equation becomes
			\begin{align}
				\tL\sqrt{2y}  = \sqrt{2(y+\epsilon)} + \mO(y^{3/2})\  
			\end{align}
			or $y = \frac{1}{\tL^{2}-1}\epsilon + \mO(\epsilon^{2})$, which proves the proposition. 
		\end{proof}
		Proposition~\ref{pro:x2} indicates that for any reasonably small $\epsilon$, we have $\delta^{*}<x_{2}$.
		Hence, for the fast spectral convergence of the recursion, we only need to ensure that the initial distance $\delta_{0}<\delta^{*}$.
		Now we demonstrate that the QDP implementation of the Grover search can converge to the exact algorithm with an arbitrary error $\eta>0$.
		\begin{thm}[Efficient QDP Grover search algorithm]\label{thm:dynamic_Grover}
			Consider the nested fixed-point Grover search with $ N $ recursions and $2L$ partial reflections at each recursion step.
			Furthermore, assume that the distance between the initial state $\psi_{0}$ and the target state $\tau$ satisfies $ \frac{1}{2}\lVert \tau - \psi_{0}\rVert_{1} \eqcolon \delta_{0} <\delta^{*} $, where $\delta^{*}$ is determined by $h'(\delta^{*}) = 1$ with $h'$ defined in Eq.~\eqref{eq:hprime_def}. 
			Then, QDP implementation of the algorithm can output a state $\rho_{N}$, whose distance to the target state $\tau$ is arbitrarily close to that of the $\psivec{N}$, an output from the exact algorithm, i.e.
			\begin{align}
				\frac{1}{2}\lVert \tau - \rho_{N} \rVert_{1} \leq \frac{1}{2}\lVert \tau - \psi_{N} \rVert_{1} + \eta\ ,
			\end{align}
			for any $\eta>0$.
			In addition, the success probability of the QDP implementation exceeds $1-p_{\mathrm{th}}$ for any $p_{\mathrm{th}} \in(0,1)$.
			This implementation requires $ L^{(2+\Delta)N}\eta^{-N}\log(p_{\rm th}^{-1}) $ copies of $\psivec{0}$ and a circuit of depth $ \mO(L^{2}N\eta^{-1}) $ with some constant $\Delta>0$.
		\end{thm}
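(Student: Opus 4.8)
The plan is to compare the QDP distance sequence $\delta_n$ produced by Lemma~\ref{lem:dynamic_Grover_technical} against the exact, errorless sequence, and to show that the gap between them can be driven below any prescribed $\eta$ by taking the per-step implementation error $\epsilon$ small enough. First I would introduce the exact recursion $\tilde\delta_n = h(\tilde\delta_{n-1})$ with $\tilde\delta_0 = \delta_0 = \frac12\lVert\tau-\psi_0\rVert_1$, which by Lemma~\ref{lem:NFGS_good_parameter_q} satisfies $\tilde\delta_N = \frac12\lVert\tau-\psi_N\rVert_1$, and place it alongside the perturbed recursion $\delta_n = h(\delta_{n-1}) + \epsilon$ of Eq.~\eqref{eq:modified_step_error}. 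Since $h$ is monotonically increasing and $e_0 = \delta_0 - \tilde\delta_0 = 0$, an immediate induction gives $e_n \coloneq \delta_n - \tilde\delta_n \geq 0$ for all $n$.

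The heart of the argument is a contraction estimate valid in the fast-spectral-convergence regime. By Proposition~\ref{pro:convexity}, $h'$ is increasing on $(0,1)$ with $h'(\delta^*) = 1$, so I would fix an intermediate threshold $\delta_0 < \delta^{**} < \delta^*$ and set $r \coloneq h'(\delta^{**}) < 1$. Choosing $\epsilon$ small enough that $\delta_0 + \epsilon/(1-r) \leq \delta^{**}$ (possible precisely because $\delta_0 < \delta^*$ is strict), I would prove by a single induction that $\delta_n \leq \delta^{**}$ for all $n \leq N$: the exact sequence is monotonically decreasing, since convexity together with $h(0)=0,\,h(1)=1$ forces $h(x)<x$ on $(0,1)$, so $\tilde\delta_n \leq \delta_0$; and the mean value theorem applied on $(0,\delta^{**})$ yields $e_n = h(\delta_{n-1}) - h(\tilde\delta_{n-1}) + \epsilon \leq r\,e_{n-1} + \epsilon$, whence $e_n \leq \epsilon/(1-r)$ and therefore $\delta_n = \tilde\delta_n + e_n \leq \delta_0 + \epsilon/(1-r) \leq \delta^{**}$, which closes the induction. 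Proposition~\ref{pro:x2} certifies that $\delta^* < x_2$ for small $\epsilon$, confirming that the threshold $\delta^{**}$ lies strictly inside the contracting region and that the perturbed stable fixed point $x_1$ sits below it, so the decrease is never obstructed.

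Having established $e_N \leq \epsilon/(1-r)$, I would feed this into the distance bound of Lemma~\ref{lem:dynamic_Grover_technical}, namely $\frac12\lVert\tau-\rho_N\rVert_1 \leq \delta_N + \epsilon/(2L\pi)$, to obtain
\begin{align}
\frac12\lVert\tau-\rho_N\rVert_1 \leq \frac12\lVert\tau-\psi_N\rVert_1 + \frac{\epsilon}{1-r} + \frac{\epsilon}{2L\pi}.
\end{align}
Setting $\epsilon = \mO(\eta)$ so that the last two terms sum to at most $\eta$ delivers the claimed accuracy. The resource counts then follow verbatim from Lemma~\ref{lem:dynamic_Grover_technical} under this substitution: the copy count $L^{(2+\Delta)N}\epsilon^{-N}\log(p_{\rm th}^{-1})$ becomes $L^{(2+\Delta)N}\eta^{-N}\log(p_{\rm th}^{-1})$, the depth $\mO(L^2N\epsilon^{-1})$ becomes $\mO(L^2N\eta^{-1})$, and the success probability lower bound $1-p_{\rm th}$ is inherited directly.

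The main obstacle is the stability bookkeeping in the second paragraph: one must rule out the possibility that the accumulated non-unitary error drives $\delta_n$ past $\delta^*$, where $h'\geq 1$ and contraction fails, triggering the runaway amplification flagged in Section~\ref{app:locally_accurate_implementations}. Handling this cleanly hinges on the strict separation $\delta_0 < \delta^*$ and on using the bound $e_n \leq \epsilon/(1-r)$ \emph{self-consistently} within one induction rather than proving $e_n \leq \epsilon/(1-r)$ and $\delta_n \leq \delta^{**}$ as independent claims; it also relies on Proposition~\ref{pro:x2} to guarantee that $\delta^*$ genuinely lies below the upper perturbed fixed point, so that the whole trajectory remains in the region of geometric decay.
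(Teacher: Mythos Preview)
Your proposal is correct and follows essentially the same approach as the paper: compare the perturbed recursion $\delta_n = h(\delta_{n-1})+\epsilon$ against the exact one $\tilde\delta_n = h(\tilde\delta_{n-1})$, use convexity of $h$ and the assumption $\delta_0<\delta^*$ to extract a contraction factor $r<1$, bound the discrepancy by a geometric series $e_N\le \epsilon/(1-r)$, and set $\epsilon=\mO(\eta)$ to read off the resource scalings from Lemma~\ref{lem:dynamic_Grover_technical}. The paper's proof is terser---it implicitly uses $\delta_0$ itself as the threshold (via $h(\delta_0)+\epsilon<\delta_0$) rather than introducing an intermediate $\delta^{**}$---but your more explicit self-consistent induction is a welcome clarification of exactly the stability issue the paper glosses over.
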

		\begin{proof}
			After $N$ iterations of the exact algorithm, output state $\psi_{N}$ satisfies 
			\begin{align}
				\frac{1}{2}\lVert \tau - \psi_{N} \rVert_{1} = \tilde{\delta}_{N},\quad \frac{1}{2}\lVert \tau-\psi_{0}\rVert_{1} = \delta_{0}\ ,
			\end{align}
			where $\tilde{\delta}_{N}$ is recursively defined by $\tilde{\delta}_{n+1} = h(\tilde{\delta}_{n})$ and $\tilde{\delta}_{0} = \delta_{0}$. 
			On the other hand, the QDP implementation (Lemma~\ref{lem:dynamic_Grover_technical}) gives
			\begin{align}
				\frac{1}{2}\lVert \tau - \rho_{N}\rVert_{1} \leq \delta_{N} + \frac{\epsilon}{2L\pi}\ ,
			\end{align}
			by using $ L^{\mO(N)}\epsilon^{-N}\log(p_{\rm th}^{-1}) $ copies of the initial state $\psi_0$,
			where $\delta_{n+1} = h(\delta_{n}) + \epsilon$ and $\epsilon$ is the parameter that decides circuit depth and width of the QDP algorithm. 
			
			Since $\delta_{0}<\delta^{*}$, we have $h'(\delta_{0})<r$ for some $r<1$ and $h(\delta_{0})+\epsilon<\delta_{0}$. 
			Then $\delta_{1} = h(\delta_{0})+\epsilon$ and $\delta_{2} = h(h(\delta_{0})+\epsilon)+\epsilon<h^{2}(\delta_{0})+(r+1)\epsilon$.
			Hence, we can bound $\delta_{N}$ by
			\begin{align}
				\delta_{N} < h^{N}(\delta_{0})  + \sum_{n=0}^{N-1}r^{n}\epsilon =  \tilde{\delta}_{N} + \frac{1-r^{N}}{1-r}\epsilon\ .
			\end{align}
			Setting $\epsilon = \mO(\eta)$, we achieve $\frac{1}{2}\lVert \tau - \rho_{N}\rVert_{1} \leq \tilde{\delta}_{N} + \eta$.
			In other words, by using circuit depth linear in $\eta$ and width exponential in $\eta$, QDP algorithm converges to the exact algorithm with error bounded by an arbitrary number $\eta>0$. 
		\end{proof}
		
		In the unfolding implementation, the circuit depth scales as $ (2L+1)^{N} \sim \log(\delta_{N}^{-1})(1-\delta_{0}^{2})^{-1/2} $ with the Grover scaling factor $(1-\delta_{0}^{2})^{-1/2}$.
		However, for sampling-based Grover search algorithms, i.e. if copies of the initial state $\psi_{0}$ are used instead of the oracle $\Eop{\alpha}{\psi_{0}}$, Ref.~\cite{Kimmel2017DME_OP} proved that at least $\mO(L^{2N}) = \mO((1-\delta_{0}^{2})^{-1})$ copies of the initial state are needed. 
		The QDP implementation also obeys this scaling, as can be seen in the exponent $(2+\Delta)N$ for the width scaling in $L$.

		\def\A{\hat A}
		\def\B{\hat B}
		\def\linops{{\mathcal L(\mathbb C^{\times D})}}
		
		\def\r{\rho^{(A)}}
		\def\rp{\tilde \rho^{(A)}}
		\def\Z{\hat Z}
		\def\X{\hat X}
		\def\Y{\hat Y}
		\def\p{\Z}
		\def\d{\Delta}
		\def\o{\sigma}
		\def\h{\hat H}
		\def\j{\hat J}
		\def\w{\hat W}
		\def\rxi{\mathcal R}
		\def\wosd{\hat W^{\mathrm{(OSD)}}}

		\section{Double bracket iterations: quantum imaginary-time evolution and oblivious Schmidt decomposition}\label{app:OSD}
		
		Two example algorithms---quantum imaginary-time evolution and oblivious schmidt decomposition---can be considered as an ordered diagonalization task: from a generic pure state to the ground state in the Hamiltonian basis in the former case and from a generic reduced state to the diagonal state in computational basis in the latter case.  
		In both cases, we use a recently-proposed quantum algorithm~\cite{Gluza2024DBI}, which we call double-bracket iterations, that adapts the continuous gradient flow for matrices for discrete gate-based quantum computing models. 
		
		\subsection{Double-bracket iterations}\label{app:DBI}
		In this section, we expound on the double-bracket iteration algorithm, which can be used for scenarios beyond quantum imaginary-time evolution or oblivious Schmidt decomposition.
		A double-bracket iteration step with an instruction operator $\hat{P}_{n}$ is defined as
		\begin{align}\label{eq:discrete_flow4}
			\hat{P}_{n+1} = e^{s_{n} [\hat{D},\hat{P}_{n}] }\hat{P}_{n}e^{-s_{n} [\hat{D},\hat{P}_{n}] }\ .
		\end{align} 
		In other words, this is a single memory-call recursion unitary, where the memory-call is defined by the linear Hermitian-preserving map $\mN(\hat{P}) = -is_{n}[\hat{D},\hat{P}]$.
		Flow duration $s_{n}$ and diagonal operator $\hat{D}$ are not part of the quantum instructions and can be chosen obliviously to $\hat{P}_{n}$, or strategically with some a priori information on $\hat{P}_{n}$.
		
		This recursion can be understood better when considering a function
		\begin{align}\label{eq:cost_func_DBI}
			f\left(\hat{P}\right) \coloneq \lVert \hat{P} - \hat{D}\rVert_{2}^{2}\ .
		\end{align}
		The generator $\frac{d}{d s_{n}}\hat{P}_{n+1} = [[\hat{D},\hat{P}_{n}],\hat{P}_{n}]$ is shown to be $-\mathrm{grad}f(\hat{P}_{n})$ for some Riemannian metric defined on the manifold of matrices isospectral to $\hat{P}_{0}$~\cite{HelmkeMoore1994Book}.
		Hence, the iteration effectively applies the gradient flow of $f$ and we expect $f$ to reduce over the iterations towards the minimum. This is in fact true for suitable choices of $s_{n}$ and $\hat{D}$.
		\begin{lem}[Theorem~4.4 in Ref.~\cite{Moore1994DiscreteDBI}]\label{lem:DBI_convergence_Moore}
			When $\hat{D}$ is a non-degenerate diagonal matrix such that $\hat{D} = \sum_{i}\mu_{i}\dm{i}$ with $\mu_{i}>\mu_{j}$ for all $i>j$ and $s_{n} = \frac{1}{4\lVert \hat{P}_{0}\rVert_{2}\lVert \hat{D}\rVert_{2}}$, the recursion Eq.~\eqref{eq:discrete_flow4} has a unique stable fixed-point $\hat{P}_{\infty} = \sum_{i}\lambda_{i}\dm{i}$, where $\lambda_{i}$ are eigenvalues of $\hat{P}_{0}$ arranged in a non-increasing order. 
			Furthermore, this fixed-point is locally exponentially stable. 
		\end{lem}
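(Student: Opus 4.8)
The plan is to view Eq.~\eqref{eq:discrete_flow4} as a discretized gradient descent for the cost $f$ of Eq.~\eqref{eq:cost_func_DBI} and to combine a Lyapunov (monotone-decrease) argument with a linearization at the critical points. First I would record two structural facts. Because $\hat{D}$ and $\hat{P}_{n}$ are Hermitian, the generator $[\hat{D},\hat{P}_{n}]$ is anti-Hermitian, so $e^{s_{n}[\hat{D},\hat{P}_{n}]}$ is unitary and every step is an isospectral conjugation; hence $\spec(\hat{P}_{n})=\spec(\hat{P}_{0})$ and $\lVert \hat{P}_{n}\rVert_{2}=\lVert \hat{P}_{0}\rVert_{2}$ for all $n$. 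Consequently $f(\hat{P}_{n})=\lVert \hat{P}_{0}\rVert_{2}^{2}-2\Tr[\hat{P}_{n}\hat{D}]+\lVert \hat{D}\rVert_{2}^{2}$, so minimizing $f$ is the same as maximizing the overlap $\Tr[\hat{P}_{n}\hat{D}]$, and the fixed points are exactly the $\hat{P}$ with $[\hat{D},\hat{P}]=0$, i.e. (since $\hat{D}$ is non-degenerate) the matrices diagonal in the eigenbasis of $\hat{D}$.

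The core step is establishing a strict per-step decrease of $f$. Expanding the conjugation in Eq.~\eqref{eq:discrete_flow4} gives $\hat{P}_{n+1}=\hat{P}_{n}+s_{n}[[\hat{D},\hat{P}_{n}],\hat{P}_{n}]+R_{n}$, and a short computation using cyclicity of the trace shows that the first-order term produces $\tfrac{d}{ds}\Tr[\hat{P}\hat{D}]=\lVert[\hat{D},\hat{P}]\rVert_{2}^{2}$, i.e. a linear gain $f(\hat{P}_{n+1})-f(\hat{P}_{n})=-2s_{n}\lVert[\hat{D},\hat{P}_{n}]\rVert_{2}^{2}$ before the remainder. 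The entire role of the prescribed step size $s_{n}=\tfrac{1}{4\lVert \hat{P}_{0}\rVert_{2}\lVert \hat{D}\rVert_{2}}$ is to guarantee that the higher-order remainder $R_{n}$, which I would bound by $\mO\!\left(s_{n}^{2}\lVert \hat{D}\rVert_{\infty}^{2}\lVert \hat{P}_{0}\rVert_{2}^{2}\lVert[\hat{D},\hat{P}_{n}]\rVert_{2}^{2}\right)$ via nested-commutator operator-norm estimates, never exceeds half of the linear gain, leaving $f(\hat{P}_{n+1})-f(\hat{P}_{n})\le -s_{n}\lVert[\hat{D},\hat{P}_{n}]\rVert_{2}^{2}\le 0$.

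Given this Lyapunov inequality, global convergence follows the template of Def.~\ref{def:fast_spectral_convergence}: $f(\hat{P}_{n})$ is monotone and bounded below, hence convergent, which forces $\sum_{n}\lVert[\hat{D},\hat{P}_{n}]\rVert_{2}^{2}<\infty$ and therefore $\lVert[\hat{D},\hat{P}_{n}]\rVert_{2}\to 0$; every accumulation point is thus diagonal in the $\hat{D}$-basis. Among the finitely many diagonal isospectral matrices---the eigenvalues of $\hat{P}_{0}$ placed on the diagonal in some order---the rearrangement inequality singles out the co-sorted ordering that maximizes $\Tr[\hat{P}_{\infty}\hat{D}]=\sum_{i}\lambda_{i}\mu_{i}$ as the unique global minimizer of $f$, which is precisely $\hat{P}_{\infty}=\sum_{i}\lambda_{i}\dm{i}$; all other orderings are strictly suboptimal critical points and hence cannot be attracting.

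Finally, for local exponential stability I would linearize the iteration on the isospectral manifold. Writing a nearby state as $\hat{P}_{\infty}+\epsilon$ with $\epsilon$ an off-diagonal Hermitian perturbation tangent to the manifold, one has $[\hat{D},\hat{P}_{\infty}+\epsilon]_{ij}=(\mu_{i}-\mu_{j})\epsilon_{ij}$ to first order, and the conjugation maps $\epsilon_{ij}\mapsto\bigl(1-s_{n}(\mu_{i}-\mu_{j})(\lambda_{i}-\lambda_{j})\bigr)\epsilon_{ij}+\mO(s_{n}^{2})$. To leading order in $s_{n}$ the Jacobian is therefore diagonal with eigenvalues $1-s_{n}(\mu_{i}-\mu_{j})(\lambda_{i}-\lambda_{j})$; at the co-sorted fixed point each product $(\mu_{i}-\mu_{j})(\lambda_{i}-\lambda_{j})$ is strictly positive (for distinct eigenvalues), and the step-size bound keeps each eigenvalue inside $(-1,1)$, so the spectral radius is below $1$ and the convergence is geometric---that is, fast spectral convergence with rate $r=\max_{i\neq j}\lvert 1-s_{n}(\mu_{i}-\mu_{j})(\lambda_{i}-\lambda_{j})\rvert$. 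Modes with $\lambda_{i}=\lambda_{j}$ lie in the stabilizer of $\hat{P}_{\infty}$ and leave the state fixed, so they are harmless. I expect the main obstacle to be the second step: controlling $R_{n}$ tightly enough that the fixed, state-independent step size $s_{n}$ certifiably delivers net decrease at every iteration rather than merely asymptotically, since this is exactly what upgrades the continuous gradient-flow intuition into a rigorous discrete contraction.
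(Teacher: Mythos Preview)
The paper does not actually prove this lemma; it is quoted verbatim as Theorem~4.4 of Ref.~\cite{Moore1994DiscreteDBI} and simply invoked. The only supporting content the paper adds is the first-order off-diagonal update formula $(\hat{P}_{k+1})_{ij} = \bigl[1 - s_{n}(\lambda_{i}-\lambda_{j})(\mu_{i}-\mu_{j})\bigr](\hat{P}_{k})_{ij}$ with the remark that the bracketed factor lies in $(0,1)$, which is exactly your linearization step. So there is no in-paper proof to compare against, and your outline is the standard route taken in the cited reference: isospectrality, Lyapunov decrease of $f$, identification of the diagonal critical set, rearrangement-inequality selection of the sorted fixed point, and Jacobian spectral-radius analysis for local exponential stability.

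Your sketch is sound in structure. The one place to be careful, which you already flag, is the remainder control: you need the second-order term in $f(\hat{P}_{n+1})-f(\hat{P}_{n})$ to carry a factor of $\lVert[\hat{D},\hat{P}_{n}]\rVert_{2}^{2}$, not merely $\lVert[\hat{D},\hat{P}_{n}]\rVert_{2}$, so that it is dominated by the linear gain uniformly along the trajectory. This follows by writing $g(s)=\Tr[\hat{D}\,e^{sW}\hat{P}_{n}e^{-sW}]$ with $W=[\hat{D},\hat{P}_{n}]$, computing $g'(s)=\Tr\!\bigl[[\hat{D},W]\,e^{sW}\hat{P}_{n}e^{-sW}\bigr]$, and bounding $\lvert g'(s)-g'(0)\rvert$ via $\lVert[\hat{D},W]\rVert_{2}\le 2\lVert\hat{D}\rVert_{\infty}\lVert W\rVert_{2}$ together with $\lVert e^{sW}\hat{P}_{n}e^{-sW}-\hat{P}_{n}\rVert_{2}\le 2s\lVert W\rVert_{\infty}\lVert\hat{P}_{0}\rVert_{2}$; this produces the required quadratic-in-$\lVert W\rVert$ remainder and makes the choice $s_{n}=\tfrac{1}{4\lVert\hat{P}_{0}\rVert_{2}\lVert\hat{D}\rVert_{2}}$ sufficient. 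A minor caveat on the stability statement: when $\hat{P}_{0}$ has repeated eigenvalues the sorted diagonal is not a unique matrix but a connected set, so ``unique stable fixed-point'' should be read as unique up to the stabilizer, exactly as you note in your last sentence.
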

		In particular, the off-diagonal element of $\hat{P}_{k}$ is suppressed after each step 
		\begin{align}
			(\hat{P}_{k+1})_{ij} = \left[1 - \frac{(\lambda_{i}-\lambda_{j})(\mu_{i}-\mu_{j})}{4\lVert \hat{P}_{0}\rVert_{2}\lVert \hat{D}\rVert_{2}}\right](\hat{P}_{k})_{ij}\ ,
		\end{align}
		up to the first order of $(\hat{P}_{k})_{ij}$, where the suppression factor inside the square bracket is in the range $(0,1)$.
		Hence, it is possible to establish the exponential suppression of the distance.
		\begin{cor}\label{cor:DBI_distance_suppression}
			Let $\hat{P}_{\infty}$ be a fixed-point of the double-bracket iterations starting from a matrix $\hat{P}_{0}$ and let $M(\hat{P}_{0})$ be a manifold of matrices isospectral to $\hat{P}_{0}$.
			Then there exists $r<1$ and a neighbourhood $N$ of $\hat{P}_{\infty}$ in the manifold $M(\hat{P}_{0})$, such that 
			\begin{align}\label{eq:DBI_exponential_suppression}
				\lVert \hat{P}_{k+1} - \hat{P}_{\infty}\rVert_{2} < r\lVert \hat{P}_{k} - \hat{P}_{\infty}\rVert_{2}\ ,
			\end{align}
			for any $\hat{P}_{k}\in N$ and $\hat{P}_{k+1}$ obtained from an iteration from $\hat{P}_{k}$.
		\end{cor}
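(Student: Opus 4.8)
The plan is to read Corollary~\ref{cor:DBI_distance_suppression} as the statement that $\hat{P}_\infty$ is a contracting fixed-point of the smooth iteration map $\Phi(\hat{P}) = e^{s[\hat{D},\hat{P}]}\hat{P}\,e^{-s[\hat{D},\hat{P}]}$ acting on the compact isospectral manifold $M(\hat{P}_0)$, and to extract the contraction factor directly from the linearization of $\Phi$ at $\hat{P}_\infty$. First I would note that $\Phi$ is real-analytic in the Hermitian argument $\hat{P}$, being a composition of the analytic maps $\hat{P}\mapsto[\hat{D},\hat{P}]$, matrix exponentiation, and unitary conjugation, and that it maps $M(\hat{P}_0)$ into itself since conjugation by a unitary preserves the spectrum. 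Because $\hat{P}_\infty$ is a fixed-point supplied by Lemma~\ref{lem:DBI_convergence_Moore}, it suffices to control $\Phi$ on a neighborhood of $\hat{P}_\infty$ inside $M(\hat{P}_0)$.

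The key step is to identify the differential $D\Phi|_{\hat{P}_\infty}$ on the tangent space $T_{\hat{P}_\infty}M(\hat{P}_0)$. At the diagonal point $\hat{P}_\infty$ this tangent space is spanned, orthogonally in the Hilbert--Schmidt inner product, by the Hermitian off-diagonal directions $E_{ij}+E_{ji}$ and $\mathrm{i}(E_{ij}-E_{ji})$ attached to each pair $i\neq j$. The first-order expansion displayed immediately before the corollary shows that $\Phi$ acts diagonally in these coordinates, scaling the direction associated with $\{i,j\}$ by the real factor $\beta_{ij} = 1 - \tfrac{(\lambda_i-\lambda_j)(\mu_i-\mu_j)}{4\lVert\hat{P}_0\rVert_2\lVert\hat{D}\rVert_2}$, which the non-degeneracy of $\hat{D}$ together with the chosen step size places in $(0,1)$. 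Since $\beta_{ij}=\beta_{ji}$, the linearization is self-adjoint with respect to the Hilbert--Schmidt inner product with real spectrum $\{\beta_{ij}\}_{i\neq j}\subset(0,1)$; hence its operator norm equals $r_0 := \max_{i\neq j}\beta_{ij} < 1$. This is the quantitative refinement of the local exponential stability asserted in Lemma~\ref{lem:DBI_convergence_Moore}.

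It then remains to promote the first-order contraction to a genuine one on a full neighborhood. Fix any $r$ with $r_0<r<1$. By the smoothness of $\Phi$ I would write $\Phi(\hat{P})-\hat{P}_\infty = D\Phi|_{\hat{P}_\infty}(\hat{P}-\hat{P}_\infty) + R(\hat{P})$ with $\lVert R(\hat{P})\rVert_2 = \mO(\lVert\hat{P}-\hat{P}_\infty\rVert_2^2)$ for $\hat{P}\in M(\hat{P}_0)$ near $\hat{P}_\infty$; the quadratic remainder absorbs both the nonlinearity of $\Phi$ and the fact that for $\hat{P}\in M(\hat{P}_0)$ the displacement $\hat{P}-\hat{P}_\infty$ has only a second-order component normal to $T_{\hat{P}_\infty}M(\hat{P}_0)$. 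Using $\lVert D\Phi|_{\hat{P}_\infty}(\hat{P}-\hat{P}_\infty)\rVert_2\leq r_0\lVert\hat{P}-\hat{P}_\infty\rVert_2$ and shrinking the neighborhood $N$ until the remainder contributes at most $(r-r_0)\lVert\hat{P}-\hat{P}_\infty\rVert_2$, the triangle inequality yields $\lVert\Phi(\hat{P})-\hat{P}_\infty\rVert_2 < r\lVert\hat{P}-\hat{P}_\infty\rVert_2$ for all $\hat{P}\in N$, which is exactly Eq.~\eqref{eq:DBI_exponential_suppression}.

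The main obstacle I anticipate is the bookkeeping in this last step on the curved manifold: one must verify that the spectrum-preserving normal component of $\hat{P}-\hat{P}_\infty$ is genuinely second order, so that it does not spoil the tangential $r_0$-bound, and that the $\mO(\cdot)$ remainder is uniform over the chosen neighborhood. This is routine given the analyticity of $\Phi$ and the smoothness of the embedded orbit $M(\hat{P}_0)$, but it is where care is needed; everything else reduces to reading off the eigenvalues $\beta_{ij}$ from the displayed linear-order formula.
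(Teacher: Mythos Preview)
Your proposal is correct and follows essentially the same approach as the paper, which does not give a standalone proof but simply points to the first-order suppression formula $(\hat{P}_{k+1})_{ij}=\beta_{ij}(\hat{P}_k)_{ij}$ with $\beta_{ij}\in(0,1)$ and invokes the local exponential stability in Lemma~\ref{lem:DBI_convergence_Moore}. You have supplied the details the paper leaves implicit---identifying $D\Phi|_{\hat{P}_\infty}$ as diagonal in the off-diagonal Hilbert--Schmidt basis with spectral radius $r_0<1$, and then absorbing the quadratic remainder on a small enough neighborhood---so your argument is a faithful fleshing-out of the paper's one-line justification.
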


		\subsection{Unfolding implementation for double-bracket iterations with black box queries}
		We now describe the unfolding implementation of double-bracket iterations, given access to the black box queries to the evolution $e^{\pm it\hat{P}_{0}}$ by the root operator $\hat{P}_{0}$, following Ref.~\cite{Gluza2024DBI}.
		To implement the memory-call $e^{i\mN(\hat{P})} = e^{s_{n}[\hat{D}_{n},\hat{P}]}$ with queries $e^{\pm it\hat{P}_{0}}$, the group commutator approximation is used. 
		\begin{lem}[Group commutator, \cite{Gluza2024DBI}]
			Let $ \A $ and $ \B $ be Hermitian operators mapping $ \mH $ to $ \mH $.
			For any unitarily invariant norm $\|\cdot\|$ we have
			\begin{align}
				\left\lVert e^{-i\sqrt{s}\B} e^{-i\sqrt{s}\A} e^{i\sqrt{s}\B}e^{i\sqrt{s}\A} -e^{s[\A,\B]}\right\rVert\le s^{3/2}\left(\left\lVert [\A,[\A,\B]\right\rVert + \left\lVert [\B,[\B,\A]\right\rVert\right)\ .
			\end{align}
			\label{groupcom}
		\end{lem}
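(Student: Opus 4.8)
The plan is to substitute $\epsilon = \sqrt{s}$, so that the claim becomes the bound $\epsilon^{3}\bigl(\|[\hat{A},[\hat{A},\hat{B}]]\| + \|[\hat{B},[\hat{B},\hat{A}]]\|\bigr)$ on
\[
\left\lVert e^{-i\epsilon\hat{B}}e^{-i\epsilon\hat{A}}e^{i\epsilon\hat{B}}e^{i\epsilon\hat{A}} - e^{\epsilon^{2}[\hat{A},\hat{B}]}\right\rVert .
\]
First I would collapse the first three factors into a single conjugated exponential, $e^{-i\epsilon\hat{B}}e^{-i\epsilon\hat{A}}e^{i\epsilon\hat{B}} = e^{-i\epsilon\tilde{A}}$ with $\tilde{A} \coloneq e^{-i\epsilon\hat{B}}\hat{A}e^{i\epsilon\hat{B}}$, reducing the left operator to the two-exponential product $G \coloneq e^{-i\epsilon\tilde{A}}e^{i\epsilon\hat{A}}$. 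Two structural facts I would record up front: since $\hat{A},\hat{B}$ are Hermitian, $[\hat{A},\hat{B}]$ is anti-Hermitian, so $F\coloneq e^{\epsilon^{2}[\hat{A},\hat{B}]}$ is unitary, and $G$ is manifestly unitary; both therefore have anti-Hermitian logarithms. This lets me invoke the elementary estimate $\|e^{P}-e^{Q}\|\le\|P-Q\|$ valid for anti-Hermitian $P,Q$ and any unitarily invariant norm, itself a one-line Duhamel identity $e^{P}-e^{Q} = \int_{0}^{1}e^{tP}(P-Q)e^{(1-t)Q}\,dt$ combined with unitary invariance of the norm.

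The heart of the argument is to extract the generator of $G$ one order beyond $\epsilon^{2}[\hat{A},\hat{B}]$ with commutator-scaling remainders, which I would carry out by two nested Duhamel expansions. For the conjugation I use $\tilde{A}-\hat{A} = -i\int_{0}^{\epsilon}e^{-it\hat{B}}[\hat{B},\hat{A}]e^{it\hat{B}}\,dt$; peeling the leading term by a second Duhamel on the integrand leaves $\tilde{A} = \hat{A} - i\epsilon[\hat{B},\hat{A}] + R_{B}$ with $R_{B} = -\int_{0}^{\epsilon}\!\int_{0}^{t}e^{-iu\hat{B}}[\hat{B},[\hat{B},\hat{A}]]e^{iu\hat{B}}\,du\,dt$, so unitary invariance gives $\|R_{B}\|\le\tfrac{\epsilon^{2}}{2}\|[\hat{B},[\hat{B},\hat{A}]]\|$. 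Writing $\tilde{A} = \hat{A}+\delta$ and expanding $G = e^{-i\epsilon(\hat{A}+\delta)}e^{i\epsilon\hat{A}}$ in $\delta$ by Duhamel, the first-order term reconstructs $\epsilon^{2}[\hat{A},\hat{B}]$ out of the $-i\epsilon[\hat{B},\hat{A}]$ piece of $\delta$, while the leading non-commutativity correction $\tfrac{\epsilon^{2}}{2}[\delta,\hat{A}]$ produces, through $[[\hat{B},\hat{A}],\hat{A}] = [\hat{A},[\hat{A},\hat{B}]]$, exactly the $[\hat{A},[\hat{A},\hat{B}]]$ contribution. Collecting these, $\log G = \epsilon^{2}[\hat{A},\hat{B}] + E$ with $\|E\|$ bounded by the two double commutators, and $\|G-F\|\le\|\log G - \epsilon^{2}[\hat{A},\hat{B}]\| = \|E\|$ then closes the estimate; substituting $\epsilon^{3}=s^{3/2}$ recovers the stated form.

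The main obstacle is the bookkeeping required to land exactly on the constant $1$ in front of each double commutator rather than the naive $\tfrac{1}{2}$'s, while simultaneously controlling the genuinely higher-order nested commutators (terms of order $\epsilon^{4}$ and beyond, such as $[\hat{A},[\hat{A},[\hat{A},\hat{B}]]]$) that a crude expansion leaves behind. To make this rigorous without a smallness hypothesis on $\epsilon$, I would organize the entire remainder as a single iterated integral of unitarily conjugated nested commutators, in the spirit of the commutator-scaling representation of product-formula error, so that every error term is manifestly a conjugated nested commutator whose norm is controlled by $\|[\hat{A},[\hat{A},\hat{B}]]\|$ or $\|[\hat{B},[\hat{B},\hat{A}]]\|$ after discarding the conjugating factors $e^{\pm it\hat{B}}$, $e^{\pm it\hat{A}}$ by unitary invariance, and the integration volumes supply coefficients summing to at most $1$. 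The only non-elementary ingredient anywhere is unitary invariance of $\|\cdot\|$, which is precisely what converts each conjugated commutator back into its bare norm.
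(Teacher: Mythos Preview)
The paper does not prove this lemma; it is quoted verbatim from the external reference \cite{Gluza2024DBI} and used as a black box. So there is no ``paper's proof'' to compare against, and I evaluate your proposal on its own.

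Your overall strategy---conjugate to reduce to $G=e^{-i\epsilon\tilde A}e^{i\epsilon\hat A}$ and then express everything through Duhamel integrals of unitarily conjugated commutators---is the right one, and it does close. But two aspects of your write-up are off. First, the detour through $\log G$ is unnecessary and brittle: you never need to compute or even define $\log G$. Instead, interpolate directly with $W(t)=e^{-it\epsilon\tilde A}e^{it\epsilon\hat A}e^{(1-t)C}$ where $C=\epsilon^{2}[\hat A,\hat B]$, so $W(0)=F$, $W(1)=G$, and $\|G-F\|\le\int_0^1\|W'(t)\|\,dt$. Unitary invariance strips the outer factors and leaves exactly $\bigl\|{-i\epsilon(\tilde A-\hat A)} - e^{it\epsilon\hat A}Ce^{-it\epsilon\hat A}\bigr\|$ to bound.

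Second, your worry about ``landing on the constant $1$ rather than the naive $\tfrac12$'s'' and about stray higher nested commutators like $[\hat A,[\hat A,[\hat A,\hat B]]]$ is misplaced. Writing $\tilde A-\hat A=-i\int_0^\epsilon e^{-iu\hat B}[\hat B,\hat A]e^{iu\hat B}\,du$ and $C=-\epsilon\int_0^\epsilon[\hat B,\hat A]\,du$, the integrand in $W'(t)$ becomes a $u$-integral of the difference of two conjugates of the \emph{single} commutator $[\hat B,\hat A]$; one more Duhamel on each conjugate produces \emph{only} conjugated double commutators $[\hat B,[\hat B,\hat A]]$ and $[\hat A,[\hat A,\hat B]]$, never triples. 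Integrating the resulting bounds gives $\|G-F\|\le \tfrac{\epsilon^3}{2}\bigl(\|[\hat A,[\hat A,\hat B]]\|+\|[\hat B,[\hat B,\hat A]]\|\bigr)$, which is stronger than the stated constant $1$. So the ``bookkeeping obstacle'' you flag does not exist once you avoid BCH/logarithms and work entirely with the interpolation integral.
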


		Setting $ \A = \hat{P} $ and $ \B = -\hat{D} $ we obtain an locally accurate approximation to the recursion step
		\begin{align}\label{eq:Marek_DBI_recursion}
			\hat{W}_{n} = \left(e^{i\sqrt{\frac{s_{n}}{M}}\hat{D}_{n}}e^{-i\sqrt{\frac{s_{n}}{M}}\hat{P}_{n}}e^{-i\sqrt{\frac{s_{n}}{M}}\hat{D}_{n}}e^{i\sqrt{\frac{s_{n}}{M}}\hat{P}_{n}}\right)^{M}
		\end{align}
		that approximates $e^{s_{n}[\hat{D},\hat{P}_{n}]}$ with error $\mO(s_{n}^{3/2}M^{-1/2})$ by making $L = 2M$ calls to $P_{n}$.
		Instead of Eq.~\eqref{eq:Marek_DBI_recursion}, one can employ higher order product formulae~\cite{Chen2022ProductFormulae} that can be more complicated to execute but have higher efficiency.
		Typically, we apply this recursion to a state 
		\begin{align}\label{eq:H_encoded_state}
			\rho_{n} = \frac{\hat{P}_{n}+\lambda\1}{\Tr\left[\hat{P}_{n}+\lambda\1\right]}\ ,
		\end{align}
		encoding the matrix $\hat{P}_{n}$.
		The convergence of the recursion $\rho_{k+1} = \hat{W}_{k}\rho_{k}\hat{W}_{k}^{\dagger}$ to the diagonal state $\rho_{\infty}$ has been established, despite the error arising from the group commutator approximation~\cite{Gluza2024DBI}.
		
		The memory-call to $\hat{P}_{n}$ needed for Eq.~\eqref{eq:Marek_DBI_recursion} is not directly accessible to us.
		However, it is covariant, i.e. $e^{-it\hat{P}_{n}} = \hat{W}_{n-1}e^{-it\hat{P}_{n-1}}\hat{W}_{n-1}^{\dagger}$.
		Unfolding then follows from replacing all memory-calls in Eq.~\eqref{eq:Marek_DBI_recursion} by the call to the root state $e^{\pm it\hat{P}_{0}}$, dressed with previous recursion steps. 
		The circuit depth scaling can also be calculated for the unfolding implementation. 
		Since $L = 2M$, the depth required for performing $N$ steps of the iteration would scale as $\mO((4M+1)^{N})$.
		
		\subsection{QDP implementation for double-bracket iterations}\label{app:DBI_QDP}
		Now we analyze the QDP implementation using states $\rho_{n}$ in Eq.~\eqref{eq:H_encoded_state} as instructions for the recursion $e^{s_{n}[\hat{D},\hat{P}_{n}]}$.
		First, we establish the locally accurate implementation of a single recursion step, whose memory-call is of a linear Hermitian-preserving map $\mN(\rho) = -is[\hat{D},\rho]$.
		Observe that
		\begin{align}
			e^{s_{n}[\hat{D},\hat{P}_{n}]} = e^{i\Tr[\hat{P}_{0}+\lambda\1]\hat{\mN}(\rho_{n})}\ .
		\end{align}
		Henceforth, we rescale the flow duration $s_{n}\mapsto\Tr[\hat{P}_{0}+\lambda\1]s_{n}$ and define our recursion as $ e^{i\mN(\rho_{n})}$.
		
		The memory-call is then implemented by memory-usage queries to Hermitian-preserving matrix exponentiation channel $ \hE{r}{\mN, \rho_{n}} $, given as
		\begin{align}
			\hE{r}{\mN, \rho_{n}}(\sigma) = \Tr_{1}\left[e^{-ir\hat{N}}(\rho\otimes\sigma)e^{ir\hat{N}}\right]\ .
		\end{align}
		The fixed operation $\mathcal{Q} = e^{-ir\hat{N}}$ oblivious to $\rho_{n}$ is generated by $ \hat{N} = is_{n}\sum_{jk} (\mu_{k}-\mu_{j})\ketbra{kj}{jk}_{12} $, where $ \{\mu_{j}\}_{j} $ are the diagonal entries of $ \hat{D} $. 
		Then, $M$ queries to this channel with $r = 1/M$ gives the trace distance to the exact recursion channel
		\begin{align}
			\frac{1}{2}\left\lVert \left(\hE{1/M}{\mN, \rho_{n}}\right)^{M} - \Eopch{}{\mN, \rho_{n}}\right\rVert_{\Tr} = \mO\left(\frac{\lVert \hat{N}\rVert_{\infty}^{2}}{M}\right)\ ,
		\end{align}
		from Eq.~\eqref{eq:HME_diamond_dist_Mtimes}.
		The flow duration is chosen to be $s_{n} = \mO(\lVert \hat{D}\rVert^{-1})$ as in Lemma~\ref{lem:DBI_convergence_Moore}, which makes the error scaling independent of the rescaling of $\hat{D}$.
		Therefore, one can make the locally accurate implementation, by increasing the number of copies $M$.
		
		Now we consider the error accumulation over multiple iterations. 
		Theorem~\ref{thm:qdp} is applicable in this case from the exponential convergence in Corollary~\ref{cor:DBI_distance_suppression}, given that the initial state $\rho_{0}$ already has a small off-diagonal part, since $h(\delta) = r\delta$ for $r<1$ satisfies the requirement of the theorem. 
		Consequently, QDP double-bracket iterations can be implemented with a circuit whose depth is at most quadratic to the number of iterations $N$. 
		However, we could not find a method to curb the accumulation of non-unitary error, as opposed to the Grover search example, and thus cannot achieve linear depth and exponential width.

		\subsection{Quantum imaginary-time evolution}\label{app:QITE_detail}
		
		Up to this point, we have focused on the diagonalization of the state $\rho_{n}$ (or $\hat{P}_{n}$) as a result of double-bracket iterations. 
		However, Lemma~\ref{lem:DBI_convergence_Moore} tells us an additional property: the fixed-point is not only diagonal in the basis of $\hat{D}$, but its eigenvalues are also rearranged in an non-increasing order.
		In other words, the eigenvector of $\rho_{0}$ with the highest eigenvalue will align with the eigenvector of $\hat{D}$ that has the lowest eigenvalue. 
		
		Now suppose that $\rho_{0} = \dm{\psi}$ is a pure state. 
		Then the highest eigenvalue of it is $1$ with the corresponding eigenvector $\ket{\psi}$.
		Then, as the result of the double-bracket iterations, $\rho_{\infty}$ will become an eigenvector of $\hat{D}$ with the largest eigenvalue. 
		If we set $\hat{D} = -\hat{H}$, this means that the double-bracket iteration prepares the ground state of the Hamiltonian $\hat{H}$.
		
		In Ref.~\cite{Gluza2024DBI}, it has been found that this process is equivalent to the discretized version of quantum imaginary-time evolution.
		The goal is to apply an operator similar to the usual Hamiltonian evolution $e^{-it\hat{H}}$ but with an imaginary-time $t = -i\beta$. 
		Clearly, this is not a unitary operator, and will not result in a quantum state after the application.
		Hence, we instead need to consider a unitary flow of a pure state $\ket{\psi(\beta)} \propto e^{-\beta H}\ket{\psi(0)}$ starting $\beta = 0$ and evolving towards $\beta\to\infty$.
		The corresponding differential equation is then written as
		\begin{align}
			\partial_{\beta}\ket{\psi(\beta)} = -(\hat{H} - E_{\beta}) \ket{\psi(\beta)} \propto -\hat{H}\ket{\psi(\beta)}\ ,
		\end{align}
		where $E_{\beta} = \bra{\psi(\beta)}\hat{H}\ket{\psi(\beta)}$.
		Surprisingly, this equation is equivalent to another equation
		\begin{align}
			\partial_{\beta}\ket{\psi(\beta)} = \left[\dm{\psi(\beta)},\hat{H}\right]\ket{\psi(\beta)}\ ,
		\end{align}
		which can be discretized into finite steps with size $s_{n}$ as
		\begin{align}\label{eq:DBI-QITE}
			\psivec{n+1} = e^{s_{n}[\dm{\psi_{n}},\hat{H}]}\psivec{n}\ .
		\end{align}
		This comes from the fact that the cost function $f$ in Eq.~\eqref{eq:cost_func_DBI} can be written as $f(\dm{\psi}) = 1+ \Tr[\hat{H}^{2}] + 2\bra{\psi}\hat{H}\ket{\psi}$, and the double-bracket iteration, which is a gradient flow in terms of this function, is effectively giving the gradient flow in terms of the average energy of the pure state $\ket{\psi}$.
		Note that Eq.~\eqref{eq:DBI-QITE} is exactly in the form of Eq.~\eqref{eq:discrete_flow4} with $\hat{D} = -\hat{H}$ and $\hat{P}_{n} = \dm{\psi_{n}}$.
		
		Furthermore, Theorem~2 of Ref.~\cite{Gluza2024DBI} guarantees the fast spectral convergence of this algorithm. 
		Therefore, we can apply Thm.~\ref{thm:qdp_pure}, which ensures the efficient QDP implementation. 
		
		One potential difficulty of implementing the memory-call $e^{s_{n}[\dm{\psi_{n}},\hat{H}]}$ directly from memory-usage queries is the requirement for a prior information about the Hamiltonian $\hat{H}$. 
		Hence, in the main text, we considered a sampling based Hamiltonian simulation scenario studied in Refs.~\cite{Lloyd2014quantum, Kimmel2017DME_OP}.
		In such settings, many copies of the state $\chi \propto \hat{H} - \lambda_0 \1$ are given. 
		Then, instead of considering the single variable function $\hmN(\rho) = -i[\rho, \hat{H}]$ as the one defining the memory-call, we can work with a two-variable function $f(\rho,\chi) = -i[\rho,\chi]$. 
		Such memory-usage queries can be made using the ideas from Lemma~\ref{lem:poly_corresponding_lin_map}.

		\subsection{Oblivious Schmidt decomposition}\label{app:OSD_detail}
		
		Any bi-partite pure state $ \psi $ has a Schmidt decomposition, which reads
		\begin{align}
			\ket{\psi}_{AB} = \sum_{j}\sqrt{\lambda_{j}}\ket{\phi_{j}}_{A}\ket{\chi_{j}}_{B}\ ,
		\end{align}
		where $ \{\ket{\phi_{j}}_{A}\}_{j} $ and $ \{\ket{\chi_{j}}_{B}\}_{j} $ form orthonormal bases for respective Hilbert spaces $ \mH_{A} $ and $ \mH_{B} $. 
		We will refer to such bases as Schmidt bases and (squares of) coefficients $ \lambda_{j} $ as Schmidt coefficients. 
		Schmidt coefficients, in particular, provide the full information on how entangled the state is~\cite{Vidal2000Schmidt}. 
		Schmidt bases can be connected to computational bases $ \{\ket{j}_{A}\}_{j} $ and $ \{\ket{j}_{B}\}_{j} $ through local unitary transfornations	$ \ket{\phi_{j}}_{A} = \hat{V}_{A}^{\dagger}\ket{j}_{A} $ and $ \ket{\chi_{j}}_{A} = \hat{V}_{B}^{\dagger}\ket{j}_{B} $, for all $ j $.
		
		The oblivious Schmidt decomposition algorithm aims to access the Schmidt coefficients and Schmidt basis of a given pure state, without learning the classical description of the state first via, e.g. state tomography. 
		In this section, we demonstrate that this task is achievable by diagonalizing the reduced state of a given pure state $ \psi $ via double-bracket iteration.   
		
		We first describe how double-bracket iterations on the reduced system can be implemented with memory-usage queries using the entire bi-partite state. 
		Suppose that multiple copies of the instruction state $\ket{\psi}_{AB}$ are given. 
		The double-bracket iteration for the Oblivious Schmidt decomposition is given as
		\begin{align}
			e^{i\mN(\psi_{n})} = e^{s_{n}[\hat{D}, \rho_{n}^{(A)}]} \otimes \1_{B}\ ,
		\end{align}
		where $\rho_{n}^{(A)} = \Tr_B[\psi_{n}]$.
		Consider the memory-usage query
		\begin{align}
			\hE{r}{\mN, \psi_{A_{1}B_{1}}}(\sigma_{A_{2}B_{2}}) = \Tr_{A_{1}B_{1}}\left[e^{-ir\hat{N}}(\psi_{A_{1}B_{1}}\otimes\sigma_{A_{2}B_{2}})e^{ir\hat{N}}\right]\ ,
		\end{align}
		where $\hat{N} =  is_{n}\sum_{jk} (\mu_{k}-\mu_{j})\ketbra{kj}{jk}_{A_{1}A_{2}} \otimes \1_{B_{1}B_{2}} $ and $ \{\mu_{j}\}_{j} $ are the diagonal entries of $ \hat{D} $. 
		Each query $\hE{r}{\mN, \psi_{A_{1}B_{1}}}(\sigma_{A_{2}B_{2}}) $ consumes a copy of $\psi_{A_{1}B_{1}}$.
		Furthermore, following the same calculation from the previous subsection, we obtain
		\begin{align}
			\frac{1}{2}\left\lVert \left(\hE{1/M}{\mN, \psi_{A_{1}B_{1}}}\right)^{M} -e^{i\mN(\psi_{n})} \right\rVert_{\Tr} = \mO\left(\frac{\lVert \hat{N}\rVert_{\infty}^{2}}{M}\right)\ ,
		\end{align}
		the locally accurate implementation of the recursion unitary.

		If recursions are implemented with sufficiently good precision, Theorem~1 in the main text holds and we approach the fixed-point of the recursion, which is a diagonal matrix $\rho_{\infty}^{(A)}$ isospectral to the initial reduced state $\rho_{0}^{(A)}$. 
		In other words, $ \rho_{N}^{(A)} $ with some large number $ N $ becomes close to the desired diagonal matrix
		\begin{align}
			\rho_{N}\simeq \sum_{j}\lambda_{j}\dm{j}_{A}\ ,
		\end{align}
		from which Schmidt coefficients $ \{\lambda_{j}\}_{j} $ can be extracted.

\end{document}